\documentclass[12pt,letterpaper]{article}
\usepackage[T1]{fontenc}
\usepackage{lmodern}
\usepackage[utf8]{inputenc}
\usepackage[italian,USenglish]{babel}
\usepackage{geometry}
\usepackage{booktabs}
\usepackage{amsmath}
\usepackage{amsthm} 
\usepackage{amsfonts}
\usepackage{dsfont}
\usepackage{cancel}
\usepackage{ amssymb }
\usepackage{bm}
\usepackage{enumitem}
\usepackage{mathtools}
\usepackage{changepage}
\usepackage{amsthm}
\usepackage{verbatim}
\usepackage{amsmath,amssymb}
\usepackage{graphicx}
\usepackage{emptypage}
\usepackage{newlfont}
\usepackage[all,cmtip]{xy}
\usepackage[bottom]{footmisc}
\usepackage[titletoc,title]{appendix}
\usepackage{chngcntr}
\usepackage{apptools}
\usepackage{listings}
\usepackage{color}
\usepackage{sgame, tikz} 
\usepackage{kpfonts}    
\usepackage{natbib}
\AtAppendix{\counterwithin{lem}{section}}
\usepackage{caption}
\usepackage{subcaption}
\usepackage{float}
\usepackage{thmtools,thm-restate}
\usepackage{epigraph}
\usepackage{xr-hyper}
\usepackage[nodayofweek,level]{datetime}
\usepackage[colorlinks=true,linkcolor=blue, allcolors=blue]{hyperref}
\usepackage{sgamevar}
\usetikzlibrary{external}
\tikzexternalize[prefix=figures/]

\theoremstyle{plain} 
 
\newtheorem{prop}{Proposition}
 
\newtheorem{theorem}{Theorem}
\newtheorem{lemma}{Lemma}
\theoremstyle{definition} 
\newtheorem{ex}{Example}

\newtheorem{thm}{Theorem}
\newtheorem{asm}{Assumption}

\newenvironment{manualasm}[1]{
  
  \manualasminner
}{\endmanualasminner}
\newtheorem{defn}{Definition}
 
\theoremstyle{remark} 
\newtheorem{rmk}[thm]{Remark}

\DeclareMathOperator{\E}{\mathds{E}}
\renewcommand{\P}{\mathds{P}}

\newcommand{\F}{\mathcal{F}}

\newcommand{\R}{\mathds{R}}

\newcommand{\supp}{\text{supp}}
\renewcommand{\1}{\mathds{1}}
\newcommand\eqas{\stackrel{a.s.}{=}}

\newcommand{\argmax}{\operatornamewithlimits{argmax}}

\newcommand{\X}{\mathcal{X}}

\setlength{\epigraphrule}{0pt}
\usepackage{setspace}
\singlespacing

\geometry{left=1.0in,right=1.0in,top=1.0in,bottom=1.0in, heightrounded}
\theoremstyle{definition}

\reversemarginpar
\newdate{draftdate}{09}{12}{2021}
\usdate
\usepackage[nameinlink]{cleveref}
\crefname{manualasm}{assumption}{assumptions}
\crefalias{prop}{proposition}
\crefname{claim}{claim}{claims}
\crefname{ex}{example}{examples}
\crefname{defn}{definition}{definitions}
\crefname{rmk}{remark}{remarks}

\begin{document}

\title{Dynamic Pricing with Limited Commitment\thanks{This paper has been previously circulated under the title ``Static Pricing.'' We thank Mohammad Akbarpour, Anirudha Balasubramanian, Gabriel Carroll, Daniel Chen, Laura Doval, Matthew Gentzkow, David Kreps, Giacomo Mantegazza,  Paul Milgrom, Michael Ostrovsky, David Ritzwoller, Ilya Segal, Jesse Shapiro, Andy Skrzypacz, Takuo Sugaya, Bob Wilson, and Weijie Zhong for helpful comments.}}
\author{Martino Banchio\thanks{Graduate School of Business, Stanford University. Email: mbanchio@stanford.edu}\and Frank Yang\thanks{Graduate School of Business, Stanford University. Email: shuny@stanford.edu}}
\date{\displaydate{draftdate}}
\maketitle
\begin{abstract}
A monopolist wants to sell one item per period to a consumer with evolving and persistent private information. The seller sets a price each period depending on the history so far, but cannot commit to future prices. We show that, regardless of the degree of persistence, any equilibrium under a D1-style refinement gives the seller revenue no higher than what she would get from posting all prices in advance.

\vspace{0.1in}
\noindent\textbf{Keywords:} Dynamic pricing, limited commitment, ratchet effect, dynamic mechanism design.\\
\noindent\textbf{JEL Codes:} C73, D42, D82.\\
\end{abstract}

\bigskip
\newpage

\section{Introduction}

As electronic commerce and personalized sales become prevalent, dynamic pricing has received a surge of attention (e.g., \citealt{kehoe2020dynamic}).\footnote{The Wall Street Journal reports, ``The Journal identified several companies, including Staples, Discover Financial Services, Rosetta Stone Inc. and Home Depot Inc., that were consistently adjusting prices and displaying different product offers based on a range of characteristics that could be discovered about the user.'' \textit{Websites Vary Prices, Deals Based on Users' Information}, The Wall Street Journal, December 24, 2012.}  Consider a monopolist (she) selling nondurable goods repeatedly to a consumer (he) with evolving private information about his value. The buyer's private values are positively correlated across periods. The seller adjusts prices dynamically based on the information learned through purchase histories, but cannot commit to future prices. Can the seller improve upon simply posting all prices in advance? 

In this paper, we study dynamic pricing with evolving private information and limited commitment. In the benchmark model, we consider a two-period game and show that regardless of the degree of persistence of the private information, any equilibrium satisfying a D1-style refinement gives the seller revenue no higher than what she would get from posting all prices in advance.

This result contrasts sharply with the case of full commitment. A large and growing literature on dynamic mechanism design highlights the value in the ability to dynamically respond to the buyer's evolving private information (\citealt{Baron1984}, \citealt{Courty2000}, \citealt{Eso2007},  \citealt{Pavan2014}). For example, in a repeated-sales model, \citet{Battaglini2005} shows that for any imperfect correlation of the buyer's private values, the seller's average profits converge to the first-best through long-term contracting as the discount factor converges to one.

Instead of analyzing general mechanisms, we restrict attention to (potentially randomized) pricing strategies.\footnote{Among all selling formats, pricing is the most common one in online retail markets; for example, roughly 90\% of the listings in 2015 on eBay are posted price listings \citep{einav2018auctions}.}\textsuperscript{,}\footnote{In a related but different setting (a single durable good with constant values), \cite{Skreta2006} shows that it is without loss of generality to consider pricing strategies among all sequentially rational mechanisms.} By analyzing the equilibria of a dynamic pricing game with evolving private information, we show that the benefits of the flexibility in dynamic pricing hinge critically on the long-term commitment power. 

An important special case arises when the consumer's value distribution is stationary (i.e., identical marginals) but exhibits positive serial correlation. Our result implies that the monopolist cannot do better than simply maintaining the static monopoly price. An extreme case of such a stationary setting is perfect correlation (i.e., constant types). In this extreme case, our result follows from the classic analysis of \citet{Stokey1979}, who shows that the optimal long-term contract with full commitment is static. However, as pointed out earlier, as soon as the correlation is not perfect, the seller can extract more revenue using long-term contracts by conditioning future prices on the buyer's previous decisions.

Why does dynamic pricing, so effective when the seller has long-term commitment power, become futile when the seller has limited commitment? With evolving private information, an important lesson from dynamic mechanism design is that early contracting at the initial stage prevents the buyer from capitalizing on his future information rents. However, such early contracting requires substantial commitment power. Because the types are persistent over time, it is sequentially rational for the seller to post a high price following an acceptance of the first-period offer and a low price following a rejection. As in the constant-type case, the \textit{ratchet effect} appears --- the buyer anticipates the effect of his present choice on future offers: he could give up the purchase today in exchange for a better price tomorrow. As our analysis shows, if the seller could commit to history-contingent prices, then the price schedule would never be time-consistent. 

However, the intuition is incomplete. In environments where more information arrives in the future, the buyer also worries less about his decisions revealing the current private information. The ratchet effect becomes weaker as the persistence of private information decreases. For example, \citet{Kennan2001} studies a version of our bargaining game with binary types. In contrast to the constant-type case of \citet{Hart1988}, he finds that ``when the valuation is persistent, but not fully permanent, the ratchet effect is not so potent: learning does occur in equilibrium.'' In light of his analysis, our result shows that with persistent private information, even though learning is possible in equilibrium, it is unprofitable. The ratchet effect is potent enough to eliminate the potential gains from dynamic pricing. 

The key trade-off of commitment versus learning requires a precise understanding of how equilibrium payoffs depend on the joint distribution of buyer's values. However, as shown in \citet{Kennan2001}, there is a plethora of equilibria even when the buyer's types are binary.  Characterizing all equilibria in games involving the ratchet effect is often intractable (see e.g., \citealt{Laffont1988}).  Our proof strategy connects the dynamic pricing game back to a constrained mechanism design problem. Instead of characterizing all equilibria, we identify a set of necessary conditions on the allocation rule satisfied in all equilibria, and introduce an auxiliary dynamic mechanism design problem constrained by these necessary conditions. The value of this auxiliary problem provides an upper bound on the seller's payoff. We show that posting the monopoly price for each good precisely attains the upper bound. 

In the benchmark model, we assume that the buyer has additive values across periods. In \Cref{sec:comps}, we also show that the same result holds even if the products may be complements. We also provide examples to show that when the values are negatively correlated or when the goods are substitutes, the ratchet effect need not appear, which can then lead to dynamic pricing being profitable even under limited commitment (see \Cref{rmk:negative} and \Cref{rmk:substitutes}). In \Cref{sec:multi-mp}, we introduce a multiple-period version of our game in which the seller always has an option to commit to a price for each unsold good, and show that a similar conclusion holds for arbitrary finite time horizons, under a stronger distributional assumption.

While our result is a theoretical demonstration of the detrimental power of the ratchet effect, it also has practical relevance. First, not all sellers in reality have long-term commitment power.  Even though dynamic pricing policies such as loyalty cards and reward programs are used in practice, attempts to renege, modify, or even cancel them are not rare. In recent years, several class-action law suits have been filed against loyalty programs of multiple companies, including AutoZone (Hughes et al. v. AutoZone Parts et al., 2016), Staples (Torczyner v. Staples, 2016), and AT\&T (Palumbo v. AT\&T, 2021). Second, because of these issues, such long-term pricing requires substantial trust which also depends on the cultural values in specific markets (e.g., \citealt{thompson2015loyalty} suggest that in countries with high uncertainty avoidance culture, consumers tend to find loyalty programs less appealing). Third, machine learning based pricing algorithms, increasingly popular in online market places, often directly incorporate purchase histories as one of the attributes (e.g., \citealt{gupta2014machine}). In all these scenarios, our result suggests that the seller may be better off by posting the price for each good in advance.\footnote{We also show that even if the seller is allowed to commit to history-dependent prices, as long as she cannot use policies like repeated-purchase discounts by setting the price after acceptance lower than the price after rejection, the optimal mechanism is still to simply post all prices in advance (see \Cref{rmk:commit}).} Of course, maintaining these prices also requires some commitment power. But arguably such a strategy is more convenient and has fewer trust issues, especially in the important case of stationary value distributions where the seller only needs to maintain a single price (e.g., \citealt{einav2018auctions}).\footnote{\citet{einav2018auctions} document the decline of auctions used in online marketplaces and argue that the convenience of posted prices is one of the main drivers. Posting all prices in advance is also known as price commitment, which has been argued to be a common pricing strategy (see e.g., \citealt{Jing2011}). }

Our model builds on a large literature on behavior-based price discrimination (e.g., \citealt{Hart1988}, \citealt{Villas-Boas2004}, \citealt{Acquisti2005}), which is surveyed by \citet{Fudenberg2006, Fudenberg2012} and \citet{Acquisti2016}. Most work in this literature assumes that the consumer's values are perfectly correlated across periods.\footnote{Two notable exceptions are \citet{taylor2004consumer} and \citet{Kennan2001}. Both papers study a binary-type setting and allow for positive correlation of types. \citet{taylor2004consumer} characterizes the pure-strategy equilibria; \citet{Kennan2001} shows that there is often plethora of mixed-strategy equilibria and characterizes a particular class of them. In contrast, our model allows for a continuum of types in each period, and our result holds for all mixed-strategy equilibria under a D1-style refinement. } This assumption rules out both the possibility of differentiated goods and the possibility of taste shocks. More importantly, as indicated earlier, the case of perfect correlation is a knife-edge case in which the effectiveness of  behavior-based price discrimination is muted even with full commitment. Our result shows that the detrimental power of the ratchet effect, however, is not knife-edge --- it eliminates the gains from dynamic pricing for any degree of persistence even when the optimal contract itself is dynamic. 

The study of the ratchet effect goes back to the classic work of \citet{freixas1985planning}, \citet{Laffont1988}, and \citet{Hart1988}. Recently, there has been a renewed interest in better understanding its impact (e.g., \citealt{Gerardi2020},   \citealt{Bonatti2020}). In particular, \citet{Bonatti2020} study how strategic consumers interacting with a sequence of firms (in continuous time) reduce their demand to manipulate their scores that aggregate past purchase histories. In contrast, in our model, the monopolist is long-lived and can condition her price offer on the entire sequence of purchase histories. 

Our result connects the ratchet effect to the growing literature of dynamic mechanism design, which is surveyed by \citet{Pavan2017} and \citet{Bergemann2019}. As \citet{Pavan2017} points out, ``[The limited commitment] literature assumes information is static, thus abstracting from the questions at the heart of the dynamic mechanism design literature. I expect interesting new developments to come out from combining the two literatures.'' Indeed, as our result shows, introducing limited commitment can give a qualitatively different answer to important economic questions such as whether dynamic pricing is profitable. While we restrict attention to randomized pricing mechanisms (instead of general mechanisms), this approach enables us to (i) provide an empirically relevant benchmark result and (ii) connect the economics of the ratchet effect, which is at the heart of limited commitment models, to dynamic mechanism design.\footnote{Our approach is conceptually similar to \citet{Liu2019} who study limited commitment while restricting attention to auctions. In contrast to us, they study a durable good monopoly setting in which the game ends after the good is traded, and do not introduce evolving private information.}\textsuperscript{,}\footnote{A different form of limited commitment is considered by \citet{deb2015dynamic} who focus on the arrival of new consumers instead of the evolution of private information. Also, see \citet{Doval2020} for general mechanism design with limited commitment. Our approach has the additional benefit of being tractable. Unlike \citet{Strulovici2017}, \citet{Doval2019}, and \citet{Gerardi2020}, our model allows for a continuum of types in each period.} We also show that with evolving private information, equilibrium refinement plays a crucial role in modeling limited commitment (see \Cref{rmk:d1-example}), and introduce a notion of D1 criterion in the spirit of \citet{Banks1987} for our game (see \Cref{app:d1}).

Our paper also relates to a more applied literature on dynamic pricing and price commitment, especially for experience goods for which the consumer learns his value after consuming the first unit (e.g., \citealt{Cremer1984} and \citealt{Jing2011}). The key difference is that in such models, the consumer is often ex ante uninformed, and the purchase history only reveals whether the consumer becomes informed about his value. Thus, the ratchet effect is absent. Our model allows for consumption-independent learning and focuses on the ratchet effect induced by the positive correlation of values. Of course, in doing so, we abstract away many other important features of repeated sales (e.g., finite inventories, stochastic arrivals, and search costs). There is a large literature on dynamic pricing in management science that focuses on these features (e.g., \citealt{elmaghraby2003dynamic}, \citealt{su2007intertemporal}, \citealt{aviv2008optimal}, \citealt{cachon2015price}).

The rest of the paper proceeds as follows. \Cref{sec:model} presents the model. \Cref{sec:main} presents the main result and its proof. \Cref{sec:ext} shows the robustness of the result to two variations: complementary products (\Cref{sec:comps}) and multiple periods (\Cref{sec:multi-mp}). \Cref{sec:disc} concludes.

\section{Model}\label{sec:model}

A seller wants to sell a nondurable good in each period $t \in \{1, 2\}$ to a buyer, with a cost normalized to $0$. The buyer privately observes his value $\theta_t\in[\underline{\theta}_t,\overline{\theta}_t] \subset \mathbb{R}_+$ at the beginning of each period. Let $F_1(\theta_1), \, F_2(\theta_2)$ denote the marginal distributions of $\theta_1, \, \theta_2$ respectively, and $F_2(\theta_2|\theta_1)$ denote the distribution of $\theta_2$ conditional on $\theta_1$. The distributions are commonly known. We assume that $F_1(\theta_1)$ is absolutely continuous with a density denoted by $f_1(\theta_1)$, and $F_{2}(\theta_{2}|\theta_1)$ is continuously differentiable in $(\theta_1, \theta_{2})$ with a conditional density denoted by $f_{2}(\theta_{2}|\theta_1)$. We also assume that $f_1(\theta_1) > 0$ for all $\theta_1 \in [\underline{\theta}_1, \overline{\theta}_1]$.

At the beginning of each period $t$, the seller makes a take-it-or-leave-it offer $p_t$ to the buyer for the product in period $t$. The buyer decides whether to accept the offer. After the buyer makes his decision, the game moves to the next period. Let $x_t \in \{0,1\}$ denote the decision of the buyer in period $t$. We assume that both the seller and buyer have quasilinear preferences that are additively separable across the two periods, with a common discount factor $\delta \in (0, 1]$. So the seller's period-$t$ payoff is given by $u_t^{S}(x_t,p_t) = \delta^{t-1} x_t p_t $, and the buyer's period-$t$ payoff is given by  $u_t^B(\theta_t,x_t,p_t) = \delta^{t-1} x_t (\theta_t - p_t)$. 

\subsubsection*{Strategies and Beliefs}

Let $h_t$ and $\hat{h}_t$ denote the public history observed by the seller and the history observed by the buyer up to period $t$, respectively. The seller's strategy $\sigma$ maps a public history $h_t$ to a distribution on $\R$ representing the price announced in period $t$. A buyer's strategy $\tau$ maps a history $\hat{h}_t$ to a distribution on $\{0, 1\}$ representing his purchase decision in period $t$. The seller forms belief $\mu(h_t)$ about $\theta := (\theta_1, \theta_2)$ given a public history $h_t$. 

\subsubsection*{Equilibrium}
Our choice of equilibrium concept is perfect Bayesian equilibrium with a further restriction on the seller's off-path beliefs. A perfect Bayesian equilibrium satisfying the D1 criterion (PBE-D) of this game is a tuple $(\sigma, \tau, \mu)$ such that the following hold:
\begin{itemize}
    \item Given $\mu(h_t)$ and $\tau$, $\sigma$ is sequentially rational at every public history $h_t$.
    \item Given $\sigma$, $\tau$ is sequentially rational at every history $\hat{h}_t$.
    \item $\mu(h_t)$ is derived via Bayes' rule whenever possible. 
    \item Seller's off-path beliefs satisfy a D1 criterion, formally defined in \Cref{app:d1}. 
\end{itemize}

The only difference from a standard PBE is in the last point. We require that an equilibrium satisfy a D1 criterion in the spirit of \citet{Banks1987}, at any off-path history. The seller cannot update her belief via Bayes' rule at some history $h_t$ when all types of the buyer accept (or reject) in equilibrium, but she observes a rejection (or acceptance). Intuitively,  the refinement requires that when observing an unexpected deviation, the seller believes that the buyer is of the type that could benefit the most from the deviation. Even though our model is not a signaling game, \citet{Banks1987}'s notion can be extended to our setting, as formally shown in \Cref{app:d1}.

\subsubsection*{Distributional Assumptions}

We impose the following assumptions for our main result. 

\begin{asm}[MLRP]\label[manualasm]{asm:mlrp}
For any $\theta_1' > \theta_1$, the conditional likelihood ratio $\frac{f_2(\theta_2 | \theta_1')}{f_2(\theta_2 | \theta_1)}$
is strictly increasing in $\theta_2$.
\end{asm}

\Cref{asm:mlrp} is equivalent to strict affiliation of $(\theta_1, \theta_2)$ as in \citet{Milgrom1982}. For example, consider an AR(1) process of the form $\theta_2 = \alpha \theta_1 + \epsilon$, where $\epsilon$ is drawn from a Gaussian distribution. Then \Cref{asm:mlrp} is equivalent to $\alpha > 0$.

\begin{asm}[$1/\delta$-Lipschitz]\label[manualasm]{asm:Lipschitz}
For any $\theta'_1 > \theta_1$, 
\[\E[\theta_2|\theta_1']- \E[\theta_2|\theta_1] < \frac{1}{\delta} (\theta_1'-\theta_1) \,.\]
\end{asm}
\Cref{asm:Lipschitz} says that a change in the first-period type leads to a comparable change in the second-period type in expectation. For any AR(1) process $\theta_2 = \alpha \theta_1 + \epsilon$, \Cref{asm:Lipschitz} is equivalent to $\alpha < \frac{1}{\delta}$. 

To state our last assumption, we introduce the notion of an impulse response function (\citealt{Pavan2014}):
\begin{equation*}\label{eq:impulse}
    I(\theta_1,\theta_2) :=  -\frac{\partial F_2(\theta_2|\theta_{1})/\partial \theta_{1}}{f_2(\theta_2|\theta_{1})}\,.
\end{equation*}
\noindent Intuitively, the impulse response function measures the effect of a small change in $\theta_1$ on $\theta_2$. For any AR(1) process, $\theta_2 = \alpha \theta_1 + \epsilon$, we have $I(\theta_1, \theta_2) = \alpha $.

\begin{asm}[Regularity]\label[manualasm]{asm:vvfunction} The second-period virtual value function
\[\psi(\theta_1, \theta_2) := \theta_2 - \frac{1 - F_1(\theta_1)}{f_1(\theta_1)} I(\theta_1, \theta_2) \]
is non-decreasing in both $\theta_1$ and $\theta_2$.
\end{asm} 
\Cref{asm:vvfunction} is the dynamic version of Myerson's regularity condition and commonly 
imposed in dynamic mechanism design for tractability (see e.g., \citealt{Bergemann2019}). For any AR(1) process, we have $\psi(\theta_1, \theta_2) = \theta_2 - \frac{1 - F_1(\theta_1)}{f_1(\theta_1)}\alpha$, and hence \Cref{asm:vvfunction} holds if the distribution $F_1$ has monotone hazard rate.

\begin{rmk}\label{rmk:d1}
Under the distributional assumptions, we show in \Cref{app:d1} that the D1 criterion is satisfied if the seller believes the buyer is of type $\underline{\theta}_1$ when observing an off-path rejection, and of type $\overline{\theta}_1$ when observing an off-path acceptance. Moreover, we also show  in \Cref{app:d1}  that for any PBE that satisfies the D1 criterion, there exists an outcome-equivalent PBE with off-path beliefs of this form.\footnote{We say two equilibria
are outcome-equivalent if they almost surely induce the same allocation and the same total discounted payment.} Therefore, for all of our results, it is without loss of generality to focus on the PBE with such off-path beliefs. See \Cref{rmk:d1-example} for the importance of restricting off-path beliefs in modeling limited commitment. 
\end{rmk}

\section{Main Result}\label{sec:main}

Let $\pi^*_t = \max_{p} p(1 - F_t(p))$ be the standard monopoly revenue with respect to the marginal distribution in period $t$. Our main result says that any equilibrium gives the seller revenue no higher than what she would get by simply posting all prices in advance:
\begin{restatable}{theorem}{thm:main}\label{thm:main} Under \Cref{asm:mlrp,asm:Lipschitz,asm:vvfunction}, 
\begin{itemize}
    \item[(i)] there exists a PBE-D; 
    \item[(ii)] in every PBE-D, the seller's revenue is no higher than $\sum_t \delta^{t-1} \pi^*_t$. 
\end{itemize}
\end{restatable}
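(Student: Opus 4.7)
The plan for part (ii) is to bound the seller's revenue in any PBE-D by the value of an auxiliary relaxed mechanism design problem, as sketched in the introduction. I fix a PBE-D and let $(q_1(\theta_1), q_2(\theta_1,\theta_2))$ be the induced interim allocation rule and $U(\theta_1)$ the interim utility of a type $\theta_1$ buyer. Using \Cref{asm:mlrp} to get single-crossing in period 1 and \Cref{asm:Lipschitz} to bound the sensitivity of the period-2 continuation value to $\theta_1$, the dynamic envelope formula of \citet{Pavan2014} gives
\[U(\theta_1)=U(\underline\theta_1)+\int_{\underline\theta_1}^{\theta_1}\left(q_1(s)+\delta\,\E_{\theta_2\mid s}\!\left[I(s,\theta_2)\,q_2(s,\theta_2)\right]\right)ds.\]
A standard integration by parts then produces the virtual-surplus decomposition of the seller's revenue,
\[R=\E\!\left[\phi_1(\theta_1)\,q_1(\theta_1)+\delta\,\psi(\theta_1,\theta_2)\,q_2(\theta_1,\theta_2)\right]-U(\underline\theta_1),\]
with $\phi_1(\theta_1):=\theta_1-(1-F_1(\theta_1))/f_1(\theta_1)$ and $\psi$ as defined in \Cref{asm:vvfunction}.

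The additional necessary conditions from limited commitment are (a) seller period-2 sequential rationality, which says that conditional on any on-path period-2 history $h_2$ the period-2 revenue equals the monopoly revenue of her posterior on $\theta_2$; and (b) the D1 restriction on off-path beliefs recorded in \Cref{rmk:d1}. Combined with buyer monotonicity in $\theta_1$ (itself a consequence of \Cref{asm:mlrp} and \Cref{asm:Lipschitz}), these force $q_2$ to be the indicator of $\theta_2$ exceeding a monopoly price chosen, cell by cell, on the coarse partition of $[\underline\theta_1,\overline\theta_1]$ induced by the buyer's period-1 action. I would then introduce an auxiliary problem that maximizes the virtual-surplus expression over all $(q_1,q_2)$ satisfying (a)-(b) together with the standard IR/monotonicity, and show that its value equals $\pi^*_1+\delta\pi^*_2$. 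The posting allocation $q_t=\1\{\theta_t\ge p^*_t\}$ lies in the feasible set and attains this value, so matching the upper bound suffices.

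The main obstacle is the upper bound itself. Under full commitment one can push $\E[\delta\psi q_2]$ above $\delta\pi^*_2$ by tailoring the period-2 purchase region to $\theta_1$; but constraint (a) restricts all such conditioning to a two-cell partition, while the buyer's period-1 indifference at the acceptance cutoff ties any gain in $\E[\delta\psi q_2]$ to a matching loss in $\E[\phi_1 q_1]$ via the ratchet effect. \Cref{asm:Lipschitz} is what calibrates the two effects so that the cancellation is exact, and \Cref{asm:vvfunction} is what makes the relevant optima clean monopoly prices. Concretely, I would rewrite the period-2 revenue on each cell as the monopoly revenue of the conditional distribution, substitute out $p_1$ using the cutoff indifference equation, and combine terms to obtain a pointwise bound by $\pi^*_1+\delta\pi^*_2$.

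For part (i), I construct a PBE-D by a fixed-point argument. Given any candidate period-1 price $p_1$, \Cref{asm:mlrp} and \Cref{asm:Lipschitz} imply that the buyer's indifference equation admits a unique acceptance cutoff $\theta_1^*(p_1)$ once the two period-2 on-path prices are specified; setting the latter equal to the monopoly prices of the corresponding posteriors closes a one-dimensional fixed point. Off-path behavior is specified by the D1 beliefs of \Cref{rmk:d1}. A compactness/continuity argument over the seller's choice of $p_1$ then yields existence of a PBE-D, with the D1 refinement ruling out degenerate equilibria supported by implausible off-path beliefs as in \Cref{rmk:d1-example}.
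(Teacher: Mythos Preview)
Your overall architecture---envelope formula, virtual-surplus decomposition, then an auxiliary constrained problem that upper-bounds revenue---matches the paper. But you are missing the key ingredient that makes the upper bound work, and your sketched route around it does not close.

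The paper's crucial step is to extract from sequential rationality plus \Cref{asm:mlrp} (via \Cref{lem:MLRPone,lem:MLRPtwo}) the \emph{price-monotonicity} constraint $p_A \geq p_R$, and then to show (\Cref{claim:key}) that, subject to $p_A \geq p_R$, the second-period piece
\[
\E[\psi(\theta_1,\theta_2)(\1_{\theta_1\ge k}\1_{\theta_2\ge p_A}+\1_{\theta_1<k}\1_{\theta_2\ge p_R})]-\E[(\theta_2-p_R)_+\mid \underline\theta_1]
\]
is maximized, for every $k$, at $p_A=p_R$. This is a purely second-period, geometric argument using the zero set of $\psi$ (monotone by \Cref{asm:vvfunction}); the first-period term $\E[\varphi(\theta_1)\1_{\theta_1\ge k}]$ does not enter. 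You never isolate $p_A\ge p_R$; ``two-cell partition'' alone is not enough, since with $p_A<p_R$ the seller can strictly beat posted prices even on two cells (Example~\ref{ex:ind}). Your alternative plan---substitute out $p_1$ via the cutoff indifference and argue that the period-1 loss exactly offsets any period-2 gain---mixes the game with the relaxed mechanism problem (in the relaxation $k$ is a free variable, so $\E[\varphi\,\1_{\theta_1\ge k}]$ does not move with $p_A,p_R$) and would not deliver a pointwise bound without $p_A\ge p_R$. Relatedly, \Cref{asm:Lipschitz} is used only to get the first-period threshold rule; it does not ``calibrate'' any cancellation. You also omit the lower bound $U(\underline\theta_1)\ge \delta\,\E[(\theta_2-p_R)_+\mid\underline\theta_1]$, without which the $-U(\underline\theta_1)$ term cannot be controlled.

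For part (i), your one-dimensional fixed-point sketch presumes unique period-2 monopoly prices on each posterior; the paper does not assume this, and instead builds an auxiliary three-player continuous game (buyer cutoff, $S_A$, $S_R$), invokes Glicksberg for a mixed equilibrium, and then selects $p_1$ by maximizing an upper-semicontinuous value. If you want to avoid mixed strategies you need an additional assumption (e.g., log-concavity); otherwise the ``setting the latter equal to the monopoly prices'' step is not well defined and the continuity in $p_1$ you need may fail.
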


We leave the proof of part (i) of \Cref{thm:main} to \Cref{app:existence}. The proof proceeds by considering an auxiliary one-shot continuous game, showing the existence of an equilibrium in the auxiliary game using the result of \citet{Glicksberg1952}, and constructing a PBE-D using the continuity properties of the equilibria of the auxiliary game.

We prove part (ii) of \Cref{thm:main} in \Cref{subsec:proof}. The intuition behind the proof of this revenue bound can be understood as follows. Suppose hypothetically that the seller could commit to any history-contingent price path that specifies the first-period price $p_1$, the second-period price after an acceptance $p_A$, and the second-period price after a rejection $p_R$. We show that the seller would always commit to some $p_A < p_R$.  Before explaining this claim, we first explain why this implies that the seller's revenue in equilibrium is no higher than the sum of discounted monopoly revenues.

When the seller observes an acceptance, she believes that the buyer has a high type. Because the types are persistent, sequential rationality implies that the seller always sets $p_A \geq p_R$ in equilibrium. But by our earlier claim, the seller would prefer to set $p_A < p_R$. Thus, the seller with limited commitment cannot obtain a revenue higher than what she would get by letting $p_A = p_R$, or equivalently by simply posting all prices in advance. \Cref{fig:intuition} illustrates this argument with stationary distributions. When types are perfectly correlated, the optimal commitment prices $p_A, \, p_R$ are both equal to the static monopoly price $p^*$. When the correlation becomes imperfect, the seller benefits from committing to a lower $p_A$ and a higher $p_R$, but sequential rationality constrains the seller in exactly the \textit{opposite} direction.

\begin{figure}[h]
	\centering
	\includegraphics{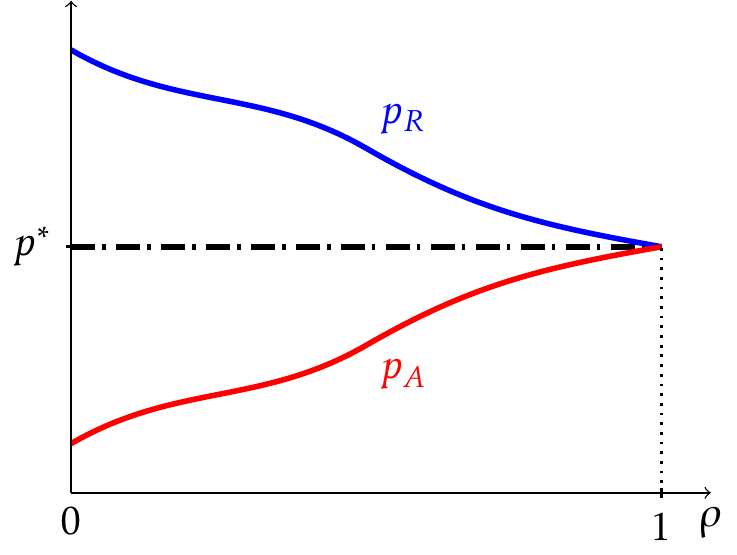}
    \caption{Illustration of the seller's optimal commitment prices}
    \label{fig:intuition}
\end{figure}

To see why the seller would prefer to set $p_A < p_R$ if allowed to commit, it is helpful to consider the following simple example (assuming $\delta = 1$): 

\begin{ex}\label{ex:ind}
The buyer has independent values across the two periods, and in each period $\theta_t$ is drawn uniformly from the interval $[1, 2]$. 
\end{ex}

Note that the monopoly price for the uniform distribution on $[1, 2]$ is $1$. Thus, if the seller posts prices in advance, then she will post price $1$ in both periods and collect a total revenue of $2$. Now, suppose the seller commits to $(p_1  = 1.5, p_A = 1, p_R = 2)$. Consider how the buyer of type $\theta_1$ decides in the first period. If he rejects the first period offer, then he gets $0$ payoff because the second-period offer is too high. If he accepts, then his expected payoff is given by 
\[\theta_1 - p_1 + \E[(\theta_2 - p_A)_+ |\theta_1]= \theta_1 - p_1 + \E[\theta_2 - p_A |\theta_1] = \theta_1 - 1.5 + \E[\theta_2] - 1  =  \theta_1 - 1 \geq 0\]
because $\theta_2$ is always no less than $p_A = 1$ and $\theta_2$ is independent of $\theta_1$. Therefore, all types of buyer will accept the first-period offer and then accept the second-period offer. The seller gets a total revenue of $1.5  + 1  = 2.5 > 2$. 

The first-period price $p_1 = 1.5$ includes both the monopoly price for the first good and an option price of $0.5$ for getting the second good at $p_A = 1$. By committing to a low $p_A$ and a high $p_R$, the seller can add a surcharge on top of the monopoly price to sell the second good ``in expectation.''  The buyer only gets information rents from his private information about $\theta_1$, as the second-period surplus $\E[\theta_2] = 1.5$ is completely extracted. 

This observation does not hinge on the assumption that the types are independent across periods. As long as there is some uncertainty in the second-period types, the seller in general benefits from committing to some $p_A < p_R$. This effect echoes the power of ``early contracting'' in the literature on dynamic mechanism design: contracting before the agent's future private information realizes allows the designer to extract more surplus. In the proof, we use the techniques from dynamic mechanism design to make a precise statement about when the seller prefers to commit to $p_A < p_R$. 

\begin{rmk}\label{rmk:d1-example}
The sequential rationality constraint only comes into play if the seller is at an on-path history. Without any equilibrium refinement, the seller can have implausible beliefs at some off-path history in equilibrium. Such beliefs can loosen the sequential rationality constraint and effectively grant the seller more and sometimes full commitment power. To see this concretely, consider the following example: 
\end{rmk}

\begin{ex}\label{ex:d1}
Type $\theta_1$ is drawn uniformly from $[1,2]$. If type $\theta_1 =2$, then type $\theta_2=2$; otherwise, $\theta_2$ is drawn uniformly from $[1,2]$ independently of $\theta_1$.
\end{ex}

\Cref{ex:d1} differs from \Cref{ex:ind} only when $\theta_1 = 2$, but the seller can now obtain the full-commitment outcome that we have just described ($p_1 = 1.5, p_A = 1, p_R = 2$) in a perfect Bayesian equilibrium if there is no restriction on the off-path beliefs.\footnote{This outcome is actually optimal even if the seller could commit to any mechanism. To see this, note that the seller can never do better than observing the realization of $\theta_2$ and then committing to a direct mechanism mapping a type report $\hat{\theta}_1$ to an allocation and a transfer. In that case, the optimal mechanism is to allocate both goods and charge a price of $1 + \theta_2$, which generates a revenue of $2.5$.} All types solve the same problem in the first period as before, except the type $\theta_1 = 2$, but this type also prefers accepting the first-period offer. So all types accept in both periods. Now suppose the seller holds a belief $\mu$ concentrated on $\theta_1=2$ after a first-period rejection (which is an off-path history). This belief makes the price $p_R = 2$ sequentially rational, effectively granting the seller full commitment power. The D1 refinement eliminates such implausible beliefs by essentially requiring the seller to believe that the buyer who deviates must be of the type most likely to profit from the off-path deviation. As noted in \Cref{rmk:d1}, the seller would believe that the buyer is of type $\theta_1=1$ after observing an off-path rejection, resulting in $p_A \geq p_R$ even at an off-path history.

\subsection{Proof of the Main Result}\label{subsec:proof}

The proof of the revenue bound, part (ii) of \Cref{thm:main}, proceeds in three steps: (1) identify a set of necessary conditions that are satisfied in every PBE-D, (2) consider a mechanism design problem constrained by this set of necessary conditions, and (3) show that the optimal solution to this relaxed problem can be implemented by simply posting all prices in advance. For expositional convenience, in this section we prove the statement for all pure-strategy PBE-D. In \Cref{app:main}, we extend the proof to cover all PBE-D. 

\paragraph{Step 1:} Let $p_A,\, p_R$ be the seller's equilibrium choice of prices posted after observing an acceptance and a rejection, respectively. Let $x_1(\theta_1)$ be the equilibrium first-period allocation. We show that a key restriction imposed by sequential rationality is the following set of monotonicity constraints: 

\begin{restatable}{prop}{pricemon}\label[proposition]{prop:pricemon}
In every pure-strategy PBE-D, $x_1$ is a threshold rule and $p_A \geq p_R$.
\end{restatable}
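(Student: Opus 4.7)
I would decouple the two assertions and derive them in order. Fix a pure-strategy PBE-D with on-path first-period price $p_1$ and continuation prices $p_A, p_R$.

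\emph{Threshold property.} Since the buyer's period-2 best response is a threshold at the posted price, the gain from accepting rather than rejecting in period 1 for type $\theta_1$ is
\[U(\theta_1) \;=\; \theta_1 - p_1 + \delta\,\E\!\left[(\theta_2-p_A)_+ - (\theta_2-p_R)_+ \,\middle|\,\theta_1\right].\]
Writing $g(\theta_2) := (\theta_2-p_A)_+ - (\theta_2-p_R)_+$, a short piecewise check yields the identity $g(\theta_2)+\theta_2 = (\theta_2-p_A)_+ + \min(\theta_2,p_R)$, which is non-decreasing in $\theta_2$ regardless of the order of $p_A$ and $p_R$. Applying MLRP (\Cref{asm:mlrp}) to $g+\theta_2$ gives $\E[g\mid\theta_1']-\E[g\mid\theta_1]\geq -(\E[\theta_2\mid\theta_1']-\E[\theta_2\mid\theta_1])$ for $\theta_1'>\theta_1$. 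Combining with the strict Lipschitz bound of \Cref{asm:Lipschitz},
\[U(\theta_1')-U(\theta_1) \;\geq\; (\theta_1'-\theta_1) - \delta\bigl(\E[\theta_2\mid\theta_1']-\E[\theta_2\mid\theta_1]\bigr) \;>\; 0.\]
Thus $U$ is strictly increasing and $x_1$ is a threshold rule with some cutoff $\theta_1^*\in[\underline\theta_1,\overline\theta_1]$ (with the convention $\theta_1^*=\overline\theta_1$ or $\underline\theta_1$ in the degenerate ``all reject'' and ``all accept'' cases).

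\emph{Ordering $p_A \geq p_R$.} The threshold rule together with the D1 off-path beliefs recalled in \Cref{rmk:d1} implies that the seller's posterior about $\theta_1$ after acceptance is supported on $[\theta_1^*,\overline\theta_1]$ (or point mass at $\overline\theta_1$ if acceptance is off-path), while after rejection it is supported on $[\underline\theta_1,\theta_1^*]$ (or point mass at $\underline\theta_1$). A standard affiliation argument using \Cref{asm:mlrp} then yields that the induced posteriors $F^A, F^R$ on $\theta_2$ are strictly MLRP-ordered, so that $p\mapsto (1-F^A(p))/(1-F^R(p))$ is strictly increasing on the region where both survival functions are positive. Since $p_A, p_R$ are optimal second-period prices against $F^A, F^R$, revealed preference gives
\[p_A\bigl(1-F^A(p_A)\bigr) \;\geq\; p_R\bigl(1-F^A(p_R)\bigr), \qquad p_R\bigl(1-F^R(p_R)\bigr) \;\geq\; p_A\bigl(1-F^R(p_A)\bigr).\]
Multiplying the two inequalities and rearranging (modulo degenerate cases in which a price lies outside the support of $\theta_2$, handled separately by perturbation / equilibrium selection) gives
\[\frac{1-F^A(p_A)}{1-F^R(p_A)} \;\geq\; \frac{1-F^A(p_R)}{1-F^R(p_R)},\]
and strict monotonicity of the ratio forces $p_A \geq p_R$.

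\emph{Main obstacle.} Step 1 is clean once the identity $g(\theta_2)+\theta_2 = (\theta_2-p_A)_+ + \min(\theta_2,p_R)$ is spotted, so the Lipschitz assumption can be applied uniformly without case-splitting on $p_A \lessgtr p_R$. The genuine difficulty is in Step 2: one must verify that the D1 refinement does pin down the posterior at $\overline\theta_1$ (respectively $\underline\theta_1$) at off-path histories, so that strict MLRP dominance of $F^A$ over $F^R$---and therefore the revealed-preference comparison---remains valid at every equilibrium, including those in which the first-period allocation is trivial.
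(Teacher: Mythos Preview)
Your proof is correct and close in spirit to the paper's. For the threshold property, the paper makes the equivalent observation that $m(\theta_2):=(\theta_2-p_R)_+-(\theta_2-p_A)_+$ has a.e.\ derivative at most $1$, and then uses integration by parts together with FOSD (from \Cref{asm:mlrp}) and \Cref{asm:Lipschitz}; your identity $g(\theta_2)+\theta_2=(\theta_2-p_A)_++\min(\theta_2,p_R)$ encodes exactly the same fact (namely $g'\geq -1$) in a slightly slicker way. For the price ordering, the paper packages the comparative statics into two lemmas: strict MLRP dominance of the posterior $F_2(\,\cdot\mid\theta_1\geq k)$ over $F_2(\,\cdot\mid\theta_1<k)$ (\Cref{lem:MLRPone}), and then the monotone selection theorem of Topkis applied to $\log p+\log(1-F(p))$, which has strictly increasing differences under the strict hazard-rate dominance implied by MLRP (\Cref{lem:MLRPtwo}). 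Your revealed-preference/interchange argument is a legitimate alternative that rests on the same underlying monotonicity of $(1-F^A)/(1-F^R)$; the Topkis route simply sidesteps the degenerate-denominator cases you flag without any perturbation step.
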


Before proving this claim, we collect two lemmas.

\begin{restatable}{lemma}{MLRPone}\label{lem:MLRPone}
For any $k \in (\underline{\theta}_1, \overline{\theta}_1)$,
\[\frac{f(\theta_2 | \theta_1 \geq k) }{f(\theta_2 | \theta_1 < k) }\]
is strictly increasing in $\theta_2$, i.e., $F_2(\ \cdot \ | \theta_1 \geq k) \succ_{L} F_2( \, \cdot\, | \theta_1 < k)$.\footnote{We use the notation $\succ_{L}$ for strict likelihood-ratio dominance, i.e., the likelihood ratio of two distributions is a strictly increasing function.}
\end{restatable}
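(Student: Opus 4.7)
\textbf{Proof plan for \Cref{lem:MLRPone}.} My approach is the standard observation that MLRP is preserved under mixing by ``upper'' sets versus ``lower'' sets of the conditioning variable. Concretely, I would write
\[
f(\theta_2 \mid \theta_1 \geq k) \;=\; \frac{\int_k^{\overline{\theta}_1} f_2(\theta_2 \mid s)\, f_1(s)\, ds}{\int_k^{\overline{\theta}_1} f_1(s)\, ds}, \qquad f(\theta_2 \mid \theta_1 < k) \;=\; \frac{\int_{\underline{\theta}_1}^{k} f_2(\theta_2 \mid t)\, f_1(t)\, dt}{\int_{\underline{\theta}_1}^{k} f_1(t)\, dt},
\]
so that showing the likelihood ratio is strictly increasing in $\theta_2$ reduces, after canceling the (strictly positive, since $f_1 > 0$) normalizing constants, to proving that for any $\theta_2' > \theta_2$,
\[
\Bigl(\int_k^{\overline{\theta}_1}\! f_2(\theta_2' \mid s) f_1(s)\, ds\Bigr)\Bigl(\int_{\underline{\theta}_1}^{k}\! f_2(\theta_2 \mid t) f_1(t)\, dt\Bigr) \;>\; \Bigl(\int_{\underline{\theta}_1}^{k}\! f_2(\theta_2' \mid t) f_1(t)\, dt\Bigr)\Bigl(\int_k^{\overline{\theta}_1}\! f_2(\theta_2 \mid s) f_1(s)\, ds\Bigr).
\]

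The next step is to collect everything into a single double integral over the rectangle $\{(s,t) : s \in [k,\overline{\theta}_1],\ t \in [\underline{\theta}_1,k]\}$, which recasts the desired inequality as
\[
\int_k^{\overline{\theta}_1}\!\int_{\underline{\theta}_1}^{k} \bigl[f_2(\theta_2' \mid s) f_2(\theta_2 \mid t) - f_2(\theta_2' \mid t) f_2(\theta_2 \mid s)\bigr]\, f_1(s) f_1(t)\, dt\, ds \;>\; 0.
\]
Throughout the region of integration one has $s \geq k \geq t$, and in fact $s > t$ almost everywhere on this rectangle. Strict MLRP (\Cref{asm:mlrp}) applied pointwise to the pair $s > t$ and $\theta_2' > \theta_2$ yields $f_2(\theta_2' \mid s) f_2(\theta_2 \mid t) > f_2(\theta_2' \mid t) f_2(\theta_2 \mid s)$, so the bracketed expression is strictly positive almost everywhere on the rectangle. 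Because $f_1 > 0$ on $[\underline{\theta}_1, \overline{\theta}_1]$ by assumption, the integrand is strictly positive on a set of positive measure, and therefore the whole double integral is strictly positive, which finishes the proof.

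I do not expect any serious obstacle: the entire argument is the familiar ``MLRP is closed under mixtures from an upper set versus a lower set'' calculation, and the strictness comes for free from strict MLRP combined with the assumption $f_1 > 0$ on the full interval. The only mild care needed is to ensure the denominators $\int_{\underline{\theta}_1}^{k} f_1$ and $\int_k^{\overline{\theta}_1} f_1$ are strictly positive (which again follows from $f_1 > 0$ and $k$ being interior), so that dividing through by them preserves the sign of the inequality and yields a genuine density ratio statement, i.e., $F_2(\,\cdot\,|\theta_1 \geq k) \succ_{L} F_2(\,\cdot\,|\theta_1 < k)$ as claimed.
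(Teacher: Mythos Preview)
Your proposal is correct and follows essentially the same approach as the paper: both arguments expand the conditional densities as mixtures over $[k,\overline{\theta}_1]$ and $[\underline{\theta}_1,k]$ and then invoke strict MLRP pointwise on pairs $s>t$. The only cosmetic difference is that the paper proceeds in two steps (first fixing a single $s'>k$ against the lower mixture, then reducing to the pointwise MLRP inequality), whereas you go directly to the double integral over the rectangle; the content is identical.
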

\begin{proof}
See \Cref{app:lem1}. 
\end{proof}

This lemma is a direct consequence of \Cref{asm:mlrp}. It sorts the seller's posterior belief in the second period, assuming a threshold rule in the first period. 

\begin{restatable}{lemma}{MLRPtwo}\label{lem:MLRPtwo}
If $F_2 \succ_L F_1$ and $p_i$ is an optimal monopoly price under $F_i$, then $p_2 \geq p_1$.
\end{restatable}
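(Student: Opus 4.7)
The plan is to prove this via a standard monotone comparative statics argument, exploiting the fact that the revenue ratio across the two distributions is monotone in the price.

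First, I would upgrade strict likelihood-ratio dominance to strict (inverse) hazard-rate dominance. By definition, $f_2(y)f_1(p) > f_1(y)f_2(p)$ whenever $y > p$ and the densities are positive. Integrating over $y \in (p, \infty)$ gives
\[
(1 - F_2(p))\, f_1(p) \; > \; (1 - F_1(p))\, f_2(p),
\]
so $f_2(p)/(1 - F_2(p)) < f_1(p)/(1 - F_1(p))$ wherever both sides are defined. Equivalently, the ratio of survival functions $\phi(p) := (1 - F_2(p))/(1 - F_1(p))$ is strictly increasing wherever both distributions have not yet concentrated all their mass, since a direct computation shows $\phi'(p)$ has the sign of the hazard-rate difference.

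Next, write the revenue $R_i(p) := p(1 - F_i(p))$ and observe that the factor of $p$ cancels in the ratio, so $R_2(p)/R_1(p) = \phi(p)$ is strictly increasing in $p$. I would then run a Topkis-style contradiction: suppose $p_2 < p_1$ with $p_i \in \argmax_p R_i(p)$. Optimality yields $R_1(p_1) \geq R_1(p_2)$ and $R_2(p_2) \geq R_2(p_1)$. Once positivity of the four quantities is secured, multiplying these inequalities gives $R_2(p_2)/R_1(p_2) \geq R_2(p_1)/R_1(p_1)$, i.e., $\phi(p_2) \geq \phi(p_1)$, which contradicts the strict monotonicity of $\phi$ established above. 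Hence $p_2 \geq p_1$.

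The only delicate point — and the main obstacle in an otherwise routine argument — is handling the case where some $R_i(p_j)$ vanishes, so that the multiplication step is ill-behaved. An optimal monopoly price $p_i$ satisfies $R_i(p_i) > 0$ (otherwise perturbing to any interior point of the support strictly improves revenue). If $R_1(p_2) = 0$, then either $p_2 = 0$ (impossible, since then $R_2(p_2) = 0$, contradicting $R_2(p_2) \geq R_2(p_1) > 0$) or $p_2$ lies weakly above the essential supremum of $F_1$, in which case every point in $\supp(F_1)$, and in particular $p_1$, lies below $p_2$, so the desired inequality $p_2 \geq p_1$ holds directly. A symmetric argument handles $R_2(p_1) = 0$. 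With these edge cases dispatched, the multiplicative argument applies and the lemma follows.
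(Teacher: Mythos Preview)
Your proposal is correct and follows essentially the same route as the paper: both pass from strict likelihood-ratio dominance to strict hazard-rate dominance, deduce that $\log(1-F_2(p)) - \log(1-F_1(p))$ (equivalently, your survival ratio $\phi$) is strictly increasing, and conclude via monotone comparative statics. The paper simply takes logs and invokes Topkis's monotone selection theorem directly, whereas you spell out the Topkis-style contradiction by hand and add explicit boundary-case handling.
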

\begin{proof}
See \Cref{app:lem2}. 
\end{proof}

Because likelihood ratio ordering implies inverse hazard rate ordering, the above observation follows immediately from the monotone selection theorem of \citet{Topkis1998}.

\begin{proof}[Proof of \Cref{prop:pricemon}]
In the second period, the buyer accepts when the price is below his second-period type $\theta_2$. Thus, in the first period, type $\theta_1$ accepts a price $p$ only if
\begin{equation}\label{eq:pricemon}
(\theta_1 - p) + \delta \E[(\theta_2 - p_A)_{+}|\theta_1] \geq \delta  \E[(\theta_2 - p_R)_{+}|\theta_1]    \,.
\end{equation}
Let $m(\theta_2) := (\theta_2 - p_R)_{+}- (\theta_2 - p_A)_{+}$, $h(\theta_1) := \E[m(\theta_2) | \theta_1]$ and $g(\theta_1) := \frac{1}{\delta}(\theta_1-p)$. So type $\theta_1$ accepts only if $g(\theta_1) \geq h(\theta_1)$ and always accepts if the inequality is strict. Note that $m(\theta_2)$ is $1$-Lipschitz and has a.e. derivative $m'\leq 1$. For any $\theta_1' > \theta_1$, integration by parts yields
\begin{align}\label{eq:h}
 h(\theta'_1) - h(\theta_1) = \E[m(\theta_2) | \theta_1'] - \E[m(\theta_2) | \theta_1] 
&= \int_0^\infty [\P(\theta_2 > s|\theta'_1) -  \P(\theta_2 > s|\theta_1)]m'(s) ds \\  
& \leq \E[\theta_2|\theta_1'] - \E[\theta_2| \theta_1]  < \frac{1}{\delta}(\theta_1' - \theta_1) = g(\theta_1') - g(\theta_1) \nonumber
\end{align}
where we have also used $\theta_2 | \theta_1'$ $\succeq_{FOSD}$ $\theta_2 | \theta_1$ (implied by \Cref{asm:mlrp}) and the Lipschitz condition on $\E[\theta_2|\ \cdot \ ]$ (\Cref{asm:Lipschitz}). Hence, $g$ crosses $h$ at most once from below. The first-period allocation is characterized by a cutoff type $k$, proving the first part. 

Suppose the cutoff type $k \in  (\underline{\theta}_1, \overline{\theta}_1)$. By \Cref{lem:MLRPone}, $F_2(\ \cdot \ | \theta_1 \geq k) \succ_L F_2( \, \cdot\, | \theta_1 < k)$. Then by \Cref{lem:MLRPtwo}, $p_A \geq p_R$, because $p_A$ must be optimal under belief $F_2( \, \cdot\, | \theta_1 \geq k)$ and $p_R$ must be optimal under belief $F_2( \, \cdot\, | \theta_1 < k)$. If $k = \underline{\theta}_1$, then a rejection is off-path and it is without loss of generality to let the seller form a belief $F_2( \ \cdot \ |\theta_1 = \underline{\theta}_1)$  as noted in \Cref{rmk:d1}. Hence, $p_A \geq p_R$ as $F_2(\, \cdot\, |\theta_1 \geq \underline{\theta}_1) \succ_L F_2(\, \cdot\, |\theta_1 = \underline{\theta}_1)$. The same argument holds for $k = \overline{\theta}_1$.  
\end{proof}

\paragraph{Step 2:} Because the two monotonicity constraints must hold in any equilibrium, if the seller can commit to any price mechanism subject to these constraints, then she can always replicate any (pure-strategy) PBE-D. Therefore, providing her the commitment power subject to these constraints forms a relaxed problem. 

Specifically, the relaxation is done as follows:

\begin{minipage}[t]{0.43\linewidth}
\vspace{0.1cm}
\textbf{Original}:\\
Select a seller-optimal PBE-D 
\begin{itemize}
\item[1.] Buyer's IC constraints
\item[2.] Seller's sequential rationality
\item[3.] Belief consistency\\
\end{itemize}
\end{minipage}
\hfill
\begin{minipage}[t]{0.7\linewidth}
\vspace{0.1cm}
\textbf{Relaxed}:\\ 
Design a revenue-maximizing mechanism 
\begin{itemize}
\item[1.] Buyer's IC constraints
\item[2.] $x_1(\theta_1) = \1_{\theta_1 \geq k}$
\item[3.] $x_2(\theta_1, \theta_2) = x_1(\theta_1) \1_{\theta_2 \geq p_A} + (1 - x_1(\theta_1))  \1_{\theta_2 \geq p_R}$, 

$p_A \geq p_R$\\
\end{itemize}
\end{minipage}

In this relaxation, we replace the sequential rationality and belief consistency constraints with the constraints imposed directly on the allocation rules. The space of price mechanisms also constrains the set of feasible allocation and payment rules jointly. Let $\X$ denote the set of all incentive-compatible dynamic direct mechanisms satisfying the feasibility and monotonicity constraints. The seller then solves the following problem: 
\begin{equation}\label{eq:objective}
\max_{(x, t) \in \X} \E\bigg[\theta_1 x_1(\theta_1) + \delta \theta_2 x_2(\theta_1, \theta_2) - V_{x, t}(\theta_1) \bigg]
\end{equation}
where $V_{x, t}(\theta_1)$ is the expected payoff of a type $\theta_1$ buyer under the mechanism $(x, t)$. (We use $t$ to denote both the transfer and time. Its meaning should be clear from the context.) As in static mechanism design, local incentive compatibility constraints imply a version of payoff equivalence in dynamic environments. In particular, we use the following result from \citet{Pavan2014}:
\begin{prop}[\citealt{Pavan2014}]\label{prop:pe}
For any incentive-compatible direct mechanism $(x, t)$, $V_{x, t}(\theta_1)$ is Lipschitz continuous with the derivative given almost everywhere by 
\begin{equation}\label{eq:ICFOC}
    V'_{x, t}(\theta_1) = \E\Big[x_1(\theta_1) + \delta I(\theta_1, \theta_2) x_2(\theta_1, \theta_2) \, \Big| \, \theta_1 \Big] \,.
\end{equation}
\end{prop}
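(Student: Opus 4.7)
The plan is to apply envelope reasoning twice, once within each period, following the strategy of \citet{Pavan2014}. Fix an incentive-compatible direct mechanism $(x,t)$ and, for any first-period reports $\theta_1, \hat\theta_1$, let $U(\theta_1,\hat\theta_1)$ denote the expected payoff to a true type $\theta_1$ who reports $\hat\theta_1$ in period $1$ and then reports truthfully in period $2$. Incentive compatibility gives $V_{x,t}(\theta_1) = U(\theta_1,\theta_1) = \max_{\hat\theta_1} U(\theta_1,\hat\theta_1)$, so the task reduces to computing the derivative of this upper envelope.

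First I would dispatch the period-$2$ continuation via a standard one-dimensional envelope argument. Writing $u_2(\hat\theta_1,\theta_2) := \theta_2 x_2(\hat\theta_1,\theta_2) - t_2(\hat\theta_1,\theta_2)$, period-$2$ IC (with $\hat\theta_1$ held fixed) gives $\partial u_2/\partial \theta_2 = x_2(\hat\theta_1,\theta_2)$ almost everywhere, so integrating and switching the order of integration against $f_2(\cdot\mid\theta_1)$ yields
\[ \E[u_2(\hat\theta_1,\theta_2)\mid\theta_1] \;=\; u_2(\hat\theta_1,\underline{\theta}_2) + \int_{\underline{\theta}_2}^{\overline{\theta}_2} x_2(\hat\theta_1,s)\,[1 - F_2(s\mid\theta_1)]\,ds. \]
Then I would apply the Milgrom--Segal envelope theorem to $V_{x,t}$, obtaining $V'_{x,t}(\theta_1) = \partial U(\theta_1,\hat\theta_1)/\partial\theta_1 \,\big|_{\hat\theta_1=\theta_1}$ almost everywhere. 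Only two places in $U$ depend on $\theta_1$ directly rather than through $\hat\theta_1$: the date-$1$ surplus $\theta_1 x_1(\hat\theta_1)$, contributing $x_1(\theta_1)$; and the kernel $F_2(\cdot\mid\theta_1)$ in the substituted expression above, contributing $\delta \int x_2(\theta_1,s)\,[-\partial F_2(s\mid\theta_1)/\partial\theta_1]\,ds$. Rewriting the latter via the definition of the impulse response $I(\theta_1,\theta_2) = -[\partial F_2(\theta_2\mid\theta_1)/\partial\theta_1]/f_2(\theta_2\mid\theta_1)$ recovers the claimed formula.

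Lipschitz continuity then follows from a uniform bound on $V'_{x,t}$: MLRP (\Cref{asm:mlrp}) implies $\partial F_2(s\mid\theta_1)/\partial\theta_1 \leq 0$, and applying \Cref{asm:Lipschitz} to the identity $\E[\theta_2\mid\theta_1']-\E[\theta_2\mid\theta_1]=\int [F_2(s\mid\theta_1)-F_2(s\mid\theta_1')]\,ds$ yields $\int_{\underline{\theta}_2}^{\overline{\theta}_2} |\partial F_2(s\mid\theta_1)/\partial\theta_1|\,ds \leq 1/\delta$; combined with $x_1, x_2 \in [0,1]$, this gives $|V'_{x,t}| \leq 2$. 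The main technical obstacle is justifying the differentiation under the integral sign in the envelope step — this requires the Milgrom--Segal absolute-continuity conditions on the family $\{U(\cdot,\hat\theta_1)\}_{\hat\theta_1}$ — but it is routine given the continuous differentiability of $F_2$ in $(\theta_1,\theta_2)$ posited in \Cref{sec:model}, and is essentially the only nontrivial point in the argument.
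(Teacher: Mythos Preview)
The paper does not supply its own proof of this proposition but simply invokes Theorem~1 of \citet{Pavan2014}; your sketch is a correct specialization of precisely that envelope argument to the two-period case, so there is nothing to compare beyond noting that you have unpacked what the paper leaves as a citation. Your Lipschitz bound via \Cref{asm:mlrp} and \Cref{asm:Lipschitz} is also correct and in fact slightly more explicit than what the cited theorem provides in the abstract.
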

\begin{proof}
See Theorem 1 in \citet{Pavan2014}.
\end{proof}

With (\ref{eq:ICFOC}) and the usual integration by parts, we can  rewrite objective (\ref{eq:objective}) as
\[\max_{(x, t) \in \X} \E[\theta_1 x_1(\theta_1) + \delta \theta_2 x_2(\theta_1, \theta_2)] - \E\Bigg[\frac{1 - F_1(\theta_1)}{f_1(\theta_1)} \big(x_1(\theta_1) + \delta I(\theta_1, \theta_2)x_2(\theta_1, \theta_2)\big) \Bigg] - V_{x,t}(\underline{\theta}_1)\,.
\]
Importantly, with a price mechanism, the seller cannot always extract all the expected surplus of the lowest type $\underline{\theta}_1$. In particular, the lowest type $\underline{\theta}_1$ can simply reject the first-period offer and obtain a price $p_R$ for the second-period item, which implies that
\[V_{x, t}(\underline{\theta}_1) \geq  \delta \E[(\theta_2 - p_R)_+ | \theta_1 = \underline{\theta}_1]\,.\]
With these observations, we reach the final form of our relaxation 
\begin{equation}\label{eq:relax}
\max_{(x, t) \in \X}  \E[\varphi(\theta_1)x_1(\theta_1)] + \delta\E[ \psi(\theta_1, \theta_2)  x_2(\theta_1, \theta_2)] -  \delta\E[(\theta_2 - p_R)_+ | \theta_1 = \underline{\theta}_1]
\end{equation}
where \[\varphi(\theta_1) := \theta_1 - \frac{1 - F_1(\theta_1)}{f_1(\theta_1)}, \qquad \psi(\theta_1, \theta_2) := \theta_2 - \frac{1 - F_1(\theta_1)}{f_1(\theta_1)} I(\theta_1, \theta_2)\] are the virtual value functions for the first and second period, respectively. Note that if a mechanism $(x^*, t^*)$ solves ($\ref{eq:relax}$) and satisfies $V_{x^*, t^*}(\underline{\theta}_1) = \delta \E[(\theta_2 - p^*_R)_+ | \theta_1 = \underline{\theta}_1]$, then the mechanism also solves ($\ref{eq:objective}$) and the value of ($\ref{eq:relax}$) is exactly its expected revenue. 
\paragraph{Step 3:}  To solve (\ref{eq:relax}), note that under the constraints, the relevant parameters for the allocation rule are just $(k, p_A, p_R)$ with $p_A \geq p_R$. We optimize directly over these: 
\[
\max_{k, p_A, p_R: p_A \geq p_R} \E[\varphi(\theta_1)\1_{\theta_1 \geq k}] + \delta \E[\psi(\theta_1, \theta_2) (\1_{\theta_1 \geq k} \1_{\theta_2 \geq p_A}+ \1_{\theta_1 < k} \1_{\theta_2 \geq p_R}) ] - \delta \E[(\theta_2 - p_R)_+ | \theta_1 = \underline{\theta}_1]\,.\]

Fix any $k$ and consider the subproblem:
\begin{equation}\label{eq:star}
    \max_{p_A, p_R: p_A \geq p_R} \E[\psi(\theta_1, \theta_2) (\1_{\theta_1 \geq k} \1_{\theta_2 \geq p_A}+ \1_{\theta_1 < k} \1_{\theta_2 \geq p_R}) ] - \E[(\theta_2 - p_R)_+ | \theta_1 = \underline{\theta}_1] \,.
\end{equation}
Define $D(\theta_1) := \inf_{\theta_2}\{\psi(\theta_1, \theta_2) \geq 0\}$.\footnote{Moreover, let $D(\theta_1) := \overline{\theta}_2$ if $\psi(\theta_1, \theta_2) < 0$ for all $\theta_2$ on which $\psi(\theta_1, \,\cdot\,)$ is defined; let $D(\theta_1) := \underline{\theta}_2$ if $\psi(\theta_1, \theta_2) \geq 0$ for all $\theta_2$ on which $\psi(\theta_1, \,\cdot\,)$ is defined.} If we allow for any allocation rule and the objective only consists of the first term, then we would set $x_2(\theta_1, \theta_2) = 1$ whenever $\psi(\theta_1, \theta_2) \geq 0$. That is, the second item is only allocated to types that fall above the boundary curve $D$ in the type space $[\underline{\theta}_1, \overline{\theta}_1]\times [\underline{\theta}_2, \overline{\theta}_2]$.  However, this is not a feasible solution to the constrained problem (\ref{eq:star}). To proceed, we make the following key observation:

\begin{restatable}{claim}{key}\label[claim]{claim:key}
For any $k$, there exists an optimal solution to (\ref{eq:star}) with $p_A = p_R$. 
\end{restatable}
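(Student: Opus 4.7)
My plan is to exploit the additive separability of the objective, $\Pi(p_A, p_R) = T_A(p_A) + g(p_R)$, where $T_A(p_A) := \E[\psi(\theta_1,\theta_2)\,\1_{\theta_1 \geq k}\,\1_{\theta_2 \geq p_A}]$ and $g(p_R) := \E[\psi(\theta_1,\theta_2)\,\1_{\theta_1 < k}\,\1_{\theta_2 \geq p_R}] - \E[(\theta_2 - p_R)_+ \mid \theta_1 = \underline\theta_1]$. Because the only coupling is through the constraint $p_A \geq p_R$, the claim reduces to showing that there exist unconstrained maximizers $p_A^* \in \argmax T_A$ and $p_R^* \in \argmax g$ with $p_A^* \leq p_R^*$. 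Once this inequality holds, the two-stage optimization (inner over $p_A \geq p_R$, outer over $p_R$) places the optimum on the diagonal: whenever $p_R \geq p_A^*$ the inner max binds at $p_A = p_R$, while for $p_R \leq p_A^*$ the outer problem is pushed up to $p_R = p_A^*$ since $g$ is maximized at $p_R^* \geq p_A^*$.

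To prove $p_A^* \leq p_R^*$, I would first derive the first-order conditions. An interior $p_A^*$ satisfies $\E[\psi(\theta_1, p_A^*)\mid \theta_1 \geq k,\, \theta_2 = p_A^*] = 0$, while an interior $p_R^*$ satisfies $\int_{\underline\theta_1}^k \psi(\theta_1, p_R^*) f(\theta_1, p_R^*)\,d\theta_1 = 1 - F_2(p_R^*\mid\underline\theta_1) \geq 0$, hence $\E[\psi(\theta_1, p_R^*)\mid \theta_1 < k,\, \theta_2 = p_R^*] \geq 0$. By \Cref{asm:vvfunction} ($\psi$ non-decreasing in $\theta_1$), we have $\E[\psi(\theta_1,p) \mid \theta_1 \geq k,\, \theta_2 = p] \geq \E[\psi(\theta_1,p) \mid \theta_1 < k,\, \theta_2 = p]$ for any $p$; specializing at $p = p_R^*$ shows both sides are $\geq 0$. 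Next, by \Cref{asm:mlrp} (equivalent to affiliation of $(\theta_1,\theta_2)$), the conditional density of $\theta_1 \mid (\theta_1 \geq k,\, \theta_2 = p)$ is non-decreasing in $p$ in likelihood-ratio order; combined with $\psi$ non-decreasing in \emph{both} arguments, the map $p \mapsto \E[\psi(\theta_1, p)\mid \theta_1 \geq k,\,\theta_2 = p]$ is non-decreasing. Since this map equals $0$ at $p_A^*$ and is $\geq 0$ at $p_R^*$, I conclude $p_A^* \leq p_R^*$.

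The main obstacle I anticipate is in handling boundary optima and possible non-uniqueness of maximizers: $T_A$ and $g$ need not be strictly quasi-concave, so one must work with a largest-maximizer selection and verify that the FOC-based monotonicity continues to apply. A case analysis shows that each boundary configuration still yields a diagonal solution—for example, when $T_A$ is globally non-increasing, the inner max over $p_A \geq p_R$ binds automatically at $p_A = p_R$, so the two-stage argument collapses to optimizing $\Pi(p,p)$ directly over $p$—so the diagonal structure is preserved throughout.
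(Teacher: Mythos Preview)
Your separable decomposition $\Pi(p_A,p_R)=T_A(p_A)+g(p_R)$ is a sensible starting point, but the argument as written contains two genuine gaps.

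First, the inference ``$m(p):=\E[\psi(\theta_1,p)\mid \theta_1\ge k,\theta_2=p]$ is non-decreasing, $m(p_A^*)=0$, $m(p_R^*)\ge 0$, therefore $p_A^*\le p_R^*$'' is invalid: if $m$ is merely non-decreasing and happens to be flat at zero on an interval, both orderings are compatible with these facts. A largest/smallest-maximizer selection does not by itself repair this step; you would need to argue that at the \emph{smallest} maximizer of $T_A$ one has $m<0$ strictly to the left, which you have not established. Second, even granting $p_A^*\le p_R^*$, the statement ``for $p_R\le p_A^*$ the outer problem is pushed up to $p_R=p_A^*$ since $g$ is maximized at $p_R^*\ge p_A^*$'' requires $g$ to be non-decreasing on $[\underline\theta_2,p_A^*]$. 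That does \emph{not} follow from $p_R^*\ge p_A^*$ alone, because $g$ is the sum of a quasi-concave term and a monotone term and need not itself be quasi-concave.

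Both gaps are fixable, but the fix essentially reproduces the paper's argument. The paper works directly with the zero-level curve $D(\theta_1):=\inf\{\theta_2:\psi(\theta_1,\theta_2)\ge 0\}$, which is non-increasing by \Cref{asm:vvfunction}, and sets $z:=D(k)$. Given any feasible $(p_A,p_R)$ with $p_A>p_R$, one of three cases applies, and in each case either lowering $p_A$ toward $z$ (which only adds types with $\psi\ge 0$ on $\{\theta_1\ge k\}$) or raising $p_R$ toward $z$ (which only removes types with $\psi<0$ on $\{\theta_1<k\}$, while the term $-\E[(\theta_2-p_R)_+\mid\underline\theta_1]$ is non-decreasing in $p_R$) weakly improves the objective and lands on the diagonal. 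This geometric improvement step avoids FOCs, quasi-concavity, and maximizer selection entirely. If you want to rescue your route, the clean facts to prove are (i) $T_A$ is quasi-concave with every maximizer $\le z$ (since $\psi(k,p_A^*)\le m(p_A^*)=0=\psi(k,z)$), and (ii) $g$ is non-decreasing on $[\underline\theta_2,z]$ (since $\psi(\theta_1,p)\le 0$ for $\theta_1<k$ and $p\le z$); together these give the diagonal conclusion directly, and you never need the comparison $p_A^*\le p_R^*$.
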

\begin{proof}[Proof of \Cref{claim:key}]
Let $z = D(k)$ be the point where $D$ crosses vertical line at $k$. Note that \Cref{asm:vvfunction} implies that $D(\,\cdot\,)$ is a non-increasing function. The existence of an optimal solution follows from standard compactness arguments.  Fix any optimal solution $(p_A, p_R)$ to program (\ref{eq:star}). If $p_A = p_R$, then we are done. Otherwise, there are three cases. 
 
Case (i): $p_R < z < p_A$. As \Cref{fig:proof} shows, by lowering $p_A$ the seller captures only types with $\psi(\theta_1, \theta_2) \geq 0$.  Similarly, by raising $p_R$, the seller eliminates types with $\psi(\theta_1, \theta_2) < 0$. Importantly, the second term in the objective, $ - \E[(\theta_2 - p_R)_+ | \theta_1 = \underline{\theta}_1]$, is also non-decreasing in $p_R$. Therefore, $(z, z)$ must be weakly better than $(p_A, p_R)$, and hence $(z, z)$ is an optimal solution. 

\begin{figure}[h]
    \centering
	\includegraphics{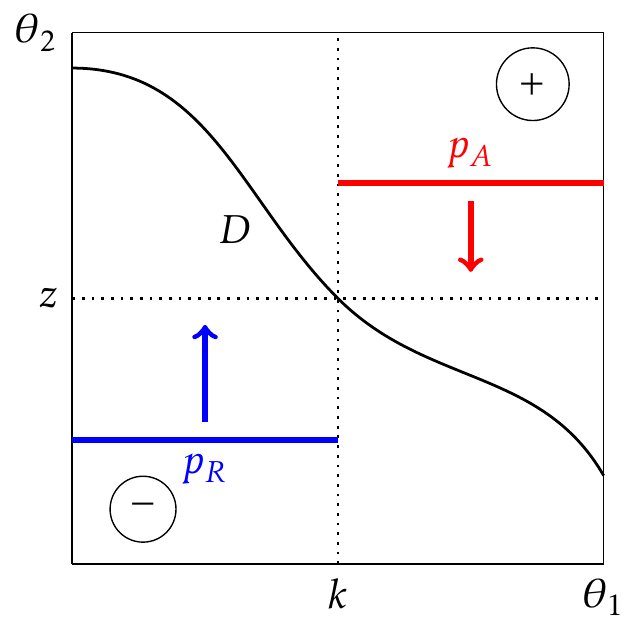}
    \caption{Allocation rule in the type space}
    \label{fig:proof}
\end{figure}

Case (ii): $z \leq p_R < p_A$. By the same reasoning, lowering $p_A$ to $p_R$ weakly increases the objective. Case (iii): $ p_R < p_A \leq z$. By the same reasoning, raising $p_R$ to $p_A$ weakly increases the objective. Thus, there exists an optimal solution with $p_A=p_R$.\end{proof}

Because \Cref{claim:key} holds for any $k$, we can further reduce ($\ref{eq:relax}$) to 
\begin{equation}\label{eq:last}
\max_{k, p_2} \E[\varphi(\theta_1)\1_{\theta_1 \geq k}] + \delta \E[\psi(\theta_1, \theta_2) \1_{\theta_2 \geq p_2} ] - \delta \E[(\theta_2 - p_2)_+ | \theta_1 = \underline{\theta}_1] \,.
\end{equation}
Let $(k^*, p_2^*)$ be an optimal solution to (\ref{eq:last}), which exists by compactness arguments. Consider the mechanism that posts prices $k^*, p_2^*$ for the first-period and second-period items respectively. It implements the allocation rule $x_1(\theta_1) = \1_{\theta_1 \geq k^*}, x_2(\theta_1, \theta_2) = \1_{\theta_2 \geq p_2^*}$ that solves ($\ref{eq:last}$) and hence solves ($\ref{eq:relax}$). In addition, the lowest type gets an expected payoff $V(\underline{\theta}_1) = \delta \E[(\theta_2 - p^*_2)_+ | \theta_1 = \underline{\theta}_1]$.
Therefore, it solves ($\ref{eq:objective}$) and its expected revenue is an upper bound on what the seller can achieve in any pure-strategy PBE-D. Of course, its expected revenue is no higher than $\sum_t \delta^{t-1} \pi^*_t$, which completes the proof. \hfill \textit{Q.E.D.}

\begin{rmk}\label{rmk:mixed-strategy}
In \Cref{app:main}, we extend this proof to cover all PBE-D (including the ones in mixed strategies). In Step 1, we show that the first-period allocation is still a threshold rule and that the distribution of $p_A$ must dominate the distribution of $p_R$ in the sense that any realized $p_A$ is weakly higher than any realized $p_R$. In Step 2, we consider a similar constrained mechanism design problem with the difference that the seller now chooses two distributions subject to a disjoint support constraint. In Step 3, we use \Cref{claim:key} to show that the optimal value of the relaxed problem is again attained by simply posting two deterministic prices, one for each period.  
\end{rmk}

\begin{rmk}\label{rmk:commit}
Note that, as the proof shows, our result also holds in the setting where the seller can commit to any dynamic price mechanism $(p_1, p_A, p_R)$ subject to the constraint $p_A \geq p_R$. Therefore, as indicated earlier, the only way that dynamic pricing can be profitable in our setting is to introduce repeated purchase discounts by setting $p_A < p_R$. Whenever offering repeated purchase discounts is not a feasible strategy (as in the examples discussed in the introduction), our proof shows that the seller cannot do better than simply posting all prices in advance. 
\end{rmk}

\begin{rmk}\label{rmk:refinement}
Moreover, as the proof shows, our result also holds under any other equilibrium refinement that leads to the seller's belief of $\theta_2$ dominating the prior in the strict likelihood ratio order after observing an off-path acceptance, and being dominated by the prior in the strict likelihood ratio order after observing an off-path rejection. 
\end{rmk}

\begin{rmk}\label{rmk:negative}
When the buyer's values are negatively correlated across periods, it is possible to sustain $p_A < p_R$ in a PBE-D. Then, consistent with our intuition, the seller may do strictly better in equilibrium than posting all prices in advance. Indeed, the following example shows that with negatively correlated values, the ratchet effect need not appear, and dynamic pricing can be profitable even under limited commitment.
\end{rmk}
\begin{ex}[Negative correlation]\label{ex:negative}
Suppose $T = 2$, $\Theta = \{1, 2\}$ and $\P(\theta_t = 1) = \P(\theta_t = 2) = 0.5$ for both periods. Suppose the types are perfectly anti-correlated. An optimal monopoly price is $2$, and posting this price for both periods generates revenue $2$. Consider instead the following strategy profile. The seller posts $p_1 = 2$ and then posts $p_A = 1$ following an acceptance, $p_R = 2$ following a rejection. Type $(2, 1)$ accepts in both periods. Type $(1, 2)$ rejects in period $1$ and accepts in period $2$. There is no profitable deviation for the buyer. This is also sequentially rational for the seller because she knows $\theta_2 = 2$ after a rejection and $\theta_2 = 1$ after an acceptance. Therefore, a seller-optimal PBE-D gives the seller revenue at least $2.5 > 2$. 
\end{ex}

\section{Extensions} \label{sec:ext}

\subsection{Complements}\label{sec:comps}

Our benchmark model assumes that the buyer's utility in the second period does not depend on the consumption of the first-period product, which rules out the possibility that the products may be complements. This is especially relevant in the setting of dynamic pricing, as one may expect that purchase histories become even more important. 

However, we show that our main result continues to hold even if the products are complements. This is because the ratchet effect becomes stronger precisely when the past information becomes more relevant. When the products are substitutes, we provide an example to show that the ratchet effect need not appear, and the seller may do strictly better in equilibrium than posting all prices in advance (see \Cref{rmk:substitutes} and \Cref{ex:substitutes}). 

To allow for complementarity, we let the distribution of $\theta_2$ depend on the allocation of the first-period product. Specifically, the game proceeds exactly as in \Cref{sec:main}, but $\theta_2 \sim F_{2}( \, \cdot\, | \theta_1, x_1)$ given the realization of $\theta_1$ and the first-period allocation $x_1 \in \{0, 1\}$ (see \Cref{rmk:complements} for an alternative formulation). 

To capture complementarity, we assume that the distribution shifts upward in the sense of likelihood-ratio dominance after consuming the first product. 
\begin{asm}[Complement]\label[manualasm]{asm:comp} For any $\theta_1$, the conditional likelihood ratio $\frac{f(\theta_2| \theta_1, x_1 = 1)}{f(\theta_2| \theta_1, x_1 = 0)}$ is non-decreasing in $\theta_2$.  
\end{asm}

We adapt \Cref{asm:mlrp,asm:Lipschitz,asm:vvfunction}  in  \Cref{sec:main} to this setting. 

\begin{manualasm}{A1'}[MLRP] \label[manualasm]{asm:mlrpx} For any $\theta_1' > \theta_1$ and $x_1 \in \{0, 1\}$, the conditional likelihood ratio $\frac{f(\theta_2| \theta_1', x_1)}{f(\theta_2| \theta_1, x_1)}$ is strictly increasing in $\theta_2$. 
\end{manualasm}

\begin{manualasm}{A2'}[$1/\delta$-Lipschitz] \label[manualasm]{asm:Lipschitzx}
For any $\theta'_1 > \theta_1$,
\[\E[\theta_2|\theta_1', x_1 = 0]- \E[\theta_2|\theta_1, x_1 = 0] < \frac{1}{\delta}(\theta_1'-\theta_1)\,.\]
\end{manualasm}

\begin{manualasm}{A3'}[Regularity] \label[manualasm]{asm:vvfunctionx} The second-period virtual value function 
\[\psi(\theta_1, \theta_2, x_1) := \theta_2 - \frac{1 - F_1(\theta_1)}{f_1(\theta_1)} I(\theta_1, \theta_2, x_1)\]
is non-decreasing in $\theta_1$, $\theta_2$, and $x_1$, where $I(\theta_1, \theta_2, x_1) = - \frac{\partial F_2(\theta_2|\theta_1,  x_1)}{\partial \theta_1} \big \slash f_2(\theta_2|\theta_1,x_1)$.
\end{manualasm}

If the distribution of $\theta_2$ conditional on $\theta_1$ does not depend on $x_1$ as in \Cref{sec:main}, then \Cref{asm:comp} trivially holds and \Cref{asm:mlrpx,asm:Lipschitzx,asm:vvfunctionx} reduce to \Cref{asm:mlrp,asm:Lipschitz,asm:vvfunction}. In \Cref{asm:vvfunctionx}, the virtual value function being non-decreasing in $x_1$ is equivalent to $I(\theta_1, \theta_2, 1) \leq I(\theta_1, \theta_2, 0)$. Intuitively, this condition says that a perturbation in $\theta_1$ would lead to a smaller change in $\theta_2$ if the first product is consumed.

For example, suppose the type process follows an AR(1) process given first-period allocation $x \in \{0, 1\}$: $\theta_2 = \alpha \theta_1 + \epsilon_x$, where $\epsilon_x$ is drawn from a Gaussian distribution $\mathcal{N}(\mu_x, \sigma)$. Then \Cref{asm:comp} is satisfied if $\mu_{1} \geq \mu_{0}$.  \Cref{asm:mlrpx,asm:Lipschitzx}  amount to $0 < \alpha < 1/\delta$.  \Cref{asm:vvfunctionx} is equivalent to that $F_1(\theta_1)$ has monotone hazard rate. 

\begin{theorem}\label{thm:comp}
Under \Cref{asm:mlrpx,asm:Lipschitzx,asm:vvfunctionx,,asm:comp}, 
\begin{itemize}
    \item[(i)] there exists a PBE-D;
    \item[(ii)] in every PBE-D, the seller's revenue is no higher than what she would get by posting prices $p_1, \,p_2$ optimally in advance. 
\end{itemize}
\end{theorem}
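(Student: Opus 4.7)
The plan is to adapt the three-step strategy used in the proof of \Cref{thm:main}: identify necessary conditions on equilibrium allocations, formulate a constrained mechanism design problem, and show that its optimum is attained by posting two deterministic prices. Part (i) follows from the arguments in \Cref{app:existence} after carrying the dependence on $x_1$ through the continuity argument, so I focus on part (ii). The main conceptual obstacle is that the acceptance and rejection sides of the type space are now governed by \emph{different} zero-level curves of the virtual value function, and I must check that the ordering between these curves is compatible with the sequential-rationality constraint $p_A\geq p_R$.

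In Step 1 I derive the analog of \Cref{prop:pricemon}. The threshold structure of $x_1$ follows from the same single-crossing argument as in~\eqref{eq:h}: the accept payoff $\theta_1-p+\delta\E[(\theta_2-p_A)_+\mid\theta_1,1]$ grows at least as fast as $\theta_1$ because \Cref{asm:mlrpx} at $x_1=1$ yields FOSD on the acceptance-side distribution, while the reject payoff $\delta\E[(\theta_2-p_R)_+\mid\theta_1,0]$ grows strictly slower than $(1/\delta)(\theta_1'-\theta_1)$ by integration by parts combined with \Cref{asm:Lipschitzx}. For $p_A\geq p_R$, I need the seller's posterior about $\theta_2$ after acceptance to dominate that after rejection in the strict likelihood-ratio order. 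Chaining \Cref{asm:comp} and \Cref{asm:mlrpx} gives $f_2(\cdot\mid\theta_1',1)\succ_L f_2(\cdot\mid\theta_1,0)$ whenever $\theta_1'\geq k>\theta_1$; since strict LR-dominance is preserved under mixtures (extending \Cref{lem:MLRPone}), the two posteriors inherit the strict LR-ordering, and \Cref{lem:MLRPtwo} delivers $p_A\geq p_R$.

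In Step 2 I invoke the envelope/payoff-equivalence result of \citet{Pavan2014} in the controlled-Markov setting, where $x_1$ indexes the period-$2$ kernel. This yields
\[V'(\theta_1)=\E\bigl[x_1(\theta_1)+\delta\, I(\theta_1,\theta_2,x_1(\theta_1))\, x_2(\theta_1,\theta_2)\,\big|\,\theta_1,x_1(\theta_1)\bigr],\]
and after integration by parts the seller's revenue equals $\E[\varphi(\theta_1)x_1(\theta_1)+\delta\,\psi(\theta_1,\theta_2,x_1(\theta_1))\,x_2(\theta_1,\theta_2)]-V(\underline{\theta}_1)$, with the joint expectation taken under $\theta_2\mid\theta_1\sim F_2(\cdot\mid\theta_1,x_1(\theta_1))$. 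Individual rationality bounds $V(\underline{\theta}_1)\geq\delta\,\E[(\theta_2-p_R)_+\mid\underline{\theta}_1,0]$. Imposing the Step-1 constraints ($x_1$ a threshold rule at $k$, $p_A\geq p_R$) yields the complement analog of~\eqref{eq:relax}.

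For Step 3, fix $k$ and let $z_x:=D_x(k)=\inf\{\theta_2:\psi(k,\theta_2,x)\geq 0\}$. Because \Cref{asm:vvfunctionx} makes $\psi$ non-decreasing in $x_1$, we have $z_A\leq z_R$, and both $D_0,D_1$ are non-increasing in $\theta_1$. Any optimal $(p_A,p_R)$ can then be collapsed to $p_A=p_R$ by a three-case argument: (a) if $p_R\leq z_A$ and $p_A\leq z_R$, raise $p_R$ up to $p_A$---the rejection-side types removed from the allocation satisfy $\theta_2<p_A\leq z_R\leq D_0(\theta_1)$ for $\theta_1<k$, so $\psi(\cdot,\cdot,0)<0$, and the surplus-extraction term also improves; (b) if $p_R\leq z_A$ and $p_A>z_R$, first lower $p_A$ to $z_R$ (the newly captured acceptance-side types satisfy $\theta_2\geq z_R\geq z_A\geq D_1(\theta_1)$ for $\theta_1\geq k$, hence $\psi(\cdot,\cdot,1)\geq 0$), then apply (a); (c) if $p_R>z_A$, lower $p_A$ to $p_R$ (the newly captured acceptance-side types have $\theta_2\geq p_R>z_A\geq D_1(\theta_1)$, hence $\psi(\cdot,\cdot,1)>0$). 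The reduced problem over $(k,p_2)$ with $p_A=p_R=p_2$ is implemented by a posted-price mechanism: by the Step-1 single-crossing argument applied with $p_A=p_R=p_2^*$, the cutoff induced by posted prices $(p_1,p_2^*)$ is continuous and monotone in $p_1$, so one may choose $p_1^*$ to induce the optimal cutoff $k^*$, giving the required upper bound on equilibrium revenue.
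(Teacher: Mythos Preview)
Your proof is correct and follows essentially the same three-step approach as the paper: the threshold rule and $p_A\geq p_R$ from chained MLRP (\Cref{asm:mlrpx} plus \Cref{asm:comp}), the Pavan--Segal--Toikka envelope with $x_1$-dependent kernels, and a collapse to $p_A=p_R$ using the monotonicity of $\psi$ in $x_1$. The only cosmetic difference is in Step~3: the paper defines a single \emph{effective boundary} $D(\theta_1,k):=\1_{\theta_1<k}D_0(\theta_1)+\1_{\theta_1\geq k}D_1(\theta_1)$, observes that $D_1\leq D_0$ makes it non-increasing in $\theta_1$, and thereby reduces directly to the already-proved \Cref{claim:key}, whereas you carry out an equivalent three-case analysis by hand.
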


\begin{proof}[Proof of \Cref{thm:comp}] See \Cref{app:existence} for the proof of part (i). For expositional convenience, we prove part (ii) for all pure-strategy PBE-D here, and extend the proof to cover all PBE-D in \Cref{app:comp-thm}.  

As in \Cref{subsec:proof}, we first show that in any equilibrium, the buyer follows a threshold rule. For that, it suffices to show that 
\[h(\theta_1) :=  \E[(\theta_2 - p_R)_+ | \theta_1, x_1 = 0]  - \E[(\theta_2 - p_A)_+ | \theta_1, x_1 = 1] \]
satisfies $h(\theta_1') - h(\theta_1) < \frac{1}{\delta}(\theta_1' - \theta_1)$  for any $\theta_1' > \theta_1$, which is proved in \Cref{lem:h3} in \Cref{app:comp-thm}. As shown in \Cref{lem:pricemoncomp} in \Cref{app:comp-thm}, given that the preferences are persistent and the products are complements, after observing acceptance of the first-period offer, the seller believes that the buyer has a higher $\theta_2$ and posts a higher price in the second period (i.e., $p_A \geq p_R$).

Then, as in \Cref{subsec:proof}, using Theorem 1 in \citealt{Pavan2014}, we can provide an upper bound for the seller's revenue by a constrained mechanism design problem with the price and allocation monotonicity constraints:
\begin{align}\label{eq:relax-comp}
    \max_{k, p_A, p_R: p_A \geq p_R} &\E[\varphi(\theta_1)\1_{\theta_1 \geq k}] + \delta \E[\psi(\theta_1, \theta_2, 0)\1_{\theta_1 < k} \1_{\theta_2 \geq p_R}|x_1 = 0]
    \\ \nonumber
    & +  \delta\E[\psi(\theta_1, \theta_2, 1) \1_{\theta_1 \geq k} \1_{\theta_2 \geq p_A} |x_1 = 1] 
     -  \delta \E[(\theta_2 - p_R)_+ |\underline{\theta}_1, x_1 = 0]\,. 
\end{align}
Note that both the virtual value function $\psi(\theta_1, \theta_2, x_1)$ and the distribution of $\theta_2$ depend on the first-period allocation $x_1$.  Let $D_0, \, D_1$ be the boundary curves at which $\psi(\,\cdot\,, \,\cdot\,, 0), \, \psi(\,\cdot\,, \,\cdot\,, 1)$ cross zero respectively, as defined in \Cref{subsec:proof}. Under the regularity condition \Cref{asm:vvfunctionx}, both $D_0(\theta_1)$ and $D_1(\theta_1)$ are non-increasing, and $D_1(\theta_1) \leq D_0(\theta_1)$ for all $\theta_1$. 

For any cutoff $k$, define the effective boundary $D(\theta_1, k) := \1_{\theta_1 < k}D_0(\theta_1) + \1_{\theta_1 \geq k}D_1(\theta_1)$. \Cref{fig:proofcomplements} illustrates. Note that $D(\theta_1, k)$ is non-increasing in $\theta_1$, for any given $k$. Therefore, by the proof of \Cref{claim:key}, there exists an optimal solution to (\ref{eq:relax-comp}) with $p_A = p_R$. 

\begin{figure}[h]
    \centering
	\includegraphics{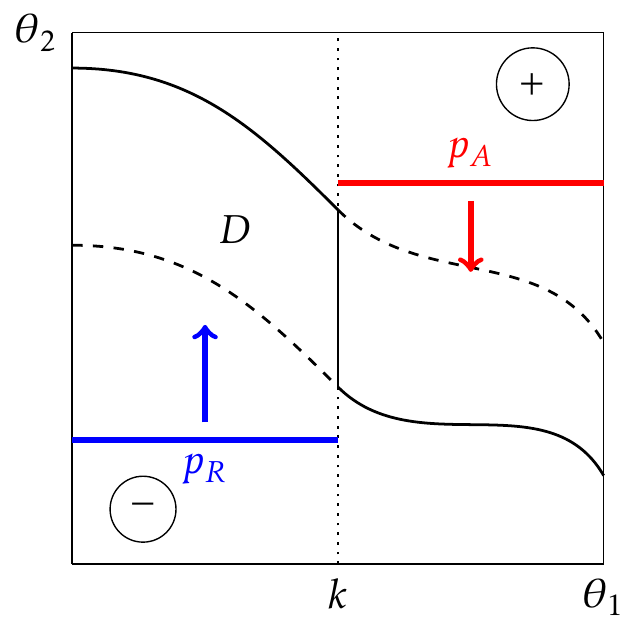}
    \caption{Allocation rule in the type space with the effective boundary}
    \label{fig:proofcomplements}
\end{figure}

Let $(k^*, p^*_2, p^*_2)$ be such an optimal solution to (\ref{eq:relax-comp}).  Consider the mechanism that posts a price $p_1^* = k^* + \delta \E[(\theta_2 - p^*_2)_+|k^*, x_1 = 1] -  \delta \E[(\theta_2 - p^*_2)_+|k^*, x_1 = 0]$ for the first-period item, and a price $p_2^*$ for the second-period item. Note that type $\theta_1 = k^*$ is indifferent between accepting and rejecting the first-period offer. By \Cref{lem:h3}, the first-period offer is accepted if and only if $\theta_1 \geq k^*$. Clearly, the second-period offer would be accepted if and only if $\theta_2 \geq p^*_2$. Thus, this mechanism implements the allocation rule $x_1(\theta_1) = \1_{\theta_1 \geq k^*}, \, x_2(\theta_1, \theta_2) = \1_{\theta_2 \geq p_2^*}$ that solves ($\ref{eq:relax-comp}$). Moreover, the lowest type gets an expected payoff $V(\underline{\theta}_1) = \delta \E[(\theta_2 - p^*_2)_+ |\underline{\theta}_1, x_1 = 0]$. Therefore, by the same argument as in \Cref{subsec:proof}, the seller's revenue in any PBE-D is no higher than what she would get from posting $p_1,\, p_2$ optimally in advance.  
\end{proof}

\begin{rmk}\label{rmk:complements}
Instead of letting the distribution of $\theta_2$ depend on $x_1$, one may also introduce complementary products by assuming a non-additively separable utility for the buyer, $\theta_1 x_1 + \delta (\theta_2 x_2 + \kappa(\theta_1, \theta_2) x_1 x_2)$, where $\kappa$ governs the extent of complementarity. Note that this is equivalent to our formulation. To see the equivalence, define $\tilde{\theta}_2 = \theta_2$ if $x_1 = 0$ and $\tilde{\theta}_2 = \theta_2 + \kappa(\theta_1, \theta_2)$ if $x_1 = 1$. Then, the buyer's utility is additively separable in $(\theta_1, \tilde{\theta}_2)$ and the distribution of $\tilde{\theta}_2$ depends on $\theta_1$ and $x_1$. 
\end{rmk}

\begin{rmk}\label{rmk:substitutes}
If the two items are substitutes, then consuming the first item decreases the demand for the second item. In that case, it is possible to sustain repeated purchase discounts (i.e.,  $p_A < p_R$) in equilibrium. Then, consistent with our intuition, the seller may do strictly better in equilibrium than posting all prices in advance, as shown by the following example. 
\end{rmk}

\begin{ex}[Substitutes]\label{ex:substitutes}
Suppose $\Theta = \{1,2\}$ and $\mathbb{P}(\theta_t=1) = \mathbb{P}(\theta_t=2)=0.5$ for both periods. Suppose that types are perfectly correlated and that the buyer's utility is given by $\theta_1 x_1 + \theta_2 x_2 -1.5 x_1 x_2$ (see \Cref{rmk:complements}). Since $\kappa(\theta_1,\theta_2) = -1.5$, the two goods are substitutes. If the seller posts $p_1, \, p_2$ in advance, then the seller can obtain at most revenue $1$ (by posting $p_1 = p_2 = 1$). Consider instead the following strategy. The seller posts a price $p_1 = 1$ in the first period, then posts a price $p_{A} = 0.5$ after an acceptance and price $p_R = 1$ after a rejection. Both types accept in the first period, and the high type also accepts in the second period. Let the seller's belief be concentrated at $\theta_1 = 1$ after an off-path rejection. This strategy is then sequentially rational for the seller. Therefore, a seller-optimal PBE-D gives the seller revenue at least $1.25 > 1$. 
\end{ex}

\subsection{Multiple Periods}\label{sec:multi-mp}

Our benchmark model assumes that $T = 2$. One natural extension would call for a similar revenue bound for any finite time horizon $T$. With $T > 2$, however, it is known from \citet{devanur2019perfect} that a pure-strategy threshold equilibrium generally does not exist with perfectly correlated types (see Theorem 1 in \citealt{devanur2019perfect}). 

To proceed, we introduce the following multiple-period game. In each period $t \in \{1, \dots, T\}$, the seller can either set the price $p_t$ for period $t$ only, or (publicly) commit to a sequence of prices $\{p_t, \dots, p_T\}$ for the unsold items. The buyer then decides whether to purchase the item as in \Cref{sec:main}. 

When $T = 2$, the modification reduces to providing the seller an option to post all prices in advance. So our main result is equivalent to that there exists a PBE-D of the modified game in which the seller simply posts all prices in advance.

We show that this statement (in fact, a stronger statement) extends to any finite time horizon $T$, under the following distributional assumption:  

\begin{asm}[Log-concave AR(1)]\label[manualasm]{asm:ar}
The first-period type $\theta_1$ follows a log-concave distribution $F_1$ (with bounded support). For all $t \geq 2$, 
\[\theta_t = \alpha_t \theta_{t-1} + \epsilon_{t}\]
where  $\alpha_t \in (0, \frac{1}{2\delta})$,  and $\epsilon_t$ is independently drawn from a log-concave distribution $G_t$ (with bounded support). 
\end{asm}

Recall that we say two equilibria
are outcome-equivalent if they almost surely induce the same allocation and the same total discounted payment.

\begin{theorem}\label{thm:multi} Under \Cref{asm:ar}, for any finite $T$, 
\begin{itemize}
    \item[(i)]  there exists a PBE-D of the modified game;
    \item[(ii)] every PBE-D is outcome-equivalent to one in which the seller commits to $\{p^*_1, \cdots, p^*_T\}$, where $p^*_t$ is the monopoly price with respect to the marginal distribution $F_t$.
\end{itemize}
\end{theorem}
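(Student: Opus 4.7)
The plan is to prove both parts by induction on $T$, using the seller's option to commit as a lever that simultaneously yields the lower bound, via explicit construction, and the upper bound, via a mechanism-design relaxation. The base case $T=1$ is the static monopoly problem, and $T=2$ reduces to \Cref{thm:main}: in the modified game the commitment option already provides the lower bound $\pi_1^*+\delta\pi_2^*$, the theorem gives the matching upper bound, and the strict regularity implied by log-concavity makes the relaxed optimum in Step~3 of \Cref{subsec:proof} unique, so revenue equivalence promotes to outcome equivalence.

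For the inductive step, I fix any PBE-D of the $T$-period modified game and examine the seller's period-1 action. Either (a) she commits to a full price path, in which case the buyer's problem separates into $T$ independent single-period comparisons of $\theta_t$ with $p_t$ and revenue is at most $\sum_t \delta^{t-1}\pi_t^*$, attained uniquely at $(p_1^*,\ldots,p_T^*)$; or (b) she posts only $p_1$. In case (b) I first extend \Cref{prop:pricemon} to this setting to conclude that the buyer's period-1 response is a threshold rule and that the random continuation strategy after acceptance is, in the sense of \Cref{rmk:mixed-strategy}, dominated from above by the continuation strategy after rejection. After the period-1 decision the game enters a $T-1$-period subgame with a truncated posterior; since log-concave AR(1) is preserved under truncation of $\theta_1$, the inductive hypothesis applies to the subgame and the continuation revenue is bounded by the conditional monopoly sum against the posterior.

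The crux is then to verify that case (b) still cannot exceed $\sum_t \delta^{t-1}\pi_t^*$. Replaying the Step-2--3 relaxation of \Cref{subsec:proof}, with the period-2 continuation value replaced by the inductive continuation bound, the constrained mechanism-design problem admits an exact analog of \Cref{claim:key}: collapsing the after-acceptance and after-rejection continuations into a single path is weakly optimal, which reduces the problem to choosing one commitment path, maximized by the monopoly sequence. Log-concavity also ensures that the conditional monopoly revenue against a truncated posterior does not exceed the prior monopoly revenue, so delaying sales to later periods cannot manufacture extra surplus. Combining the two branches gives revenue equal to $\sum_t \delta^{t-1}\pi_t^*$ in every PBE-D; uniqueness of the relaxed optimum then delivers outcome equivalence with the commitment equilibrium, handling part (ii). For part (i), the existence of that commitment equilibrium is established by using this same upper bound to rule out profitable one-shot deviations by the seller, and by pinning down off-path continuations through the inductive hypothesis combined with the D1 construction in \Cref{app:existence}.

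The main obstacle will be the multi-period extension of the single-crossing argument behind \Cref{prop:pricemon}. In the two-period case it compares $(1/\delta)(\theta_1'-\theta_1)$ with $\mathbb{E}[\theta_2\mid\theta_1']-\mathbb{E}[\theta_2\mid\theta_1]$; with $T$ periods the continuation value at the period-1 decision aggregates discounted sensitivities of the form $\sum_{s>1}\delta^{s-1}\prod_{r=2}^s \alpha_r$, which under $\alpha_t<1/(2\delta)$ is dominated by a geometric series with ratio below $1/2$ and so remains uniformly below $1/\delta$. This is exactly why \Cref{asm:ar} strengthens the Lipschitz bound from $1/\delta$ to $1/(2\delta)$. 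Propagating the D1 refinement and the mixed-strategy extension cleanly through the induction, and checking that log-concavity is preserved after conditioning on threshold events, will complete the argument.
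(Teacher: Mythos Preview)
Your high-level architecture---induction on $T$, threshold rule in period~1, inductive hypothesis on the $(T{-}1)$-period subgame, then a virtual-value relaxation with a \Cref{claim:key} analog---is exactly the paper's route. Two points deserve correction, one of which is a genuine gap.

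\textbf{The relaxation must be set up for all $T$ periods, not as a two-period problem with a plugged-in continuation value.} Your phrase ``the period-2 continuation value replaced by the inductive continuation bound'' together with the claim that ``log-concavity also ensures that the conditional monopoly revenue against a truncated posterior does not exceed the prior monopoly revenue'' is where the argument breaks. That revenue claim is false: third-degree price discrimination based on the threshold event $\{\theta_1\geq k\}$ generically \emph{raises} revenue (take $\theta\sim U[0,1]$: prior monopoly revenue is $1/4$, while the probability-weighted sum of the two conditional monopoly revenues is $1/4+1/32$). So you cannot bound the seller's payoff by first replacing the continuation with its conditional-monopoly value and then comparing to the prior monopoly sum. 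The paper instead uses the inductive hypothesis to pin down the continuation \emph{allocation rule} exactly---$x_t=\1_{\theta_t\geq p^A_t}$ or $\1_{\theta_t\geq p^R_t}$ with the conditional monopoly prices unique by log-concavity---and then writes the full $T$-period Pavan--Segal--Toikka relaxation \eqref{eq:relax_multi}. Because under \Cref{asm:ar} the period-$t$ virtual value $\psi_t(\theta_1,\theta_t)=\theta_t-\tfrac{1-F_1(\theta_1)}{f_1(\theta_1)}\prod_{s=2}^t\alpha_s$ depends only on $(\theta_1,\theta_t)$, the objective decomposes into $T-1$ independent copies of \eqref{eq:star}, and \Cref{claim:key} (in the strengthened form of \Cref{lem:support}) applies to each period separately to force $p^A_t=p^R_t$. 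Only after this collapse does the bound become $\sum_t\delta^{t-1}\pi_t^*$; the sequential-rationality constraint $p_t^A\geq p_t^R$ is what does the work, not any inequality between conditional and prior monopoly revenues.

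\textbf{The threshold-rule argument.} Your geometric-series heuristic is on the right track, but the paper proves the threshold property (\Cref{lem:threshold-mp}) globally---for every history and any seller strategy---by a separate backward induction \emph{before} the main induction on $T$. The inductive invariant is that the continuation payoffs $U^A_t,U^R_t$ are each $1$-Lipschitz and nondecreasing in $\theta_t$; the factor-of-two slack in $\alpha_t<1/(2\delta)$ is needed because the \emph{max} of the two branches is only $2$-Lipschitz in $\theta_{t+1}$, and one then uses $2\delta\alpha_{t+1}<1$ to recover $1$-Lipschitz after conditioning. Folding this into the revenue induction is possible but awkward, since the inductive hypothesis on the subgame already presupposes that period-1 play is a threshold rule so that the posterior remains log-concave AR(1).
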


\begin{proof}[Proof of \Cref{thm:multi}] See \Cref{app:existence-mp} for the proof of part (i). We now prove part (ii).

In \Cref{lem:threshold-mp} in \Cref{app:multi-thm}, we show that at any history $\hat{h}_t$, the buyer always adopts a threshold rule: $x_t(\hat{h}_t) = \1_{\theta_t \geq k_t}$, where the threshold $k_t$ depends only on the public history. This statement is proved inductively by bounding the derivative of the buyer's continuation payoffs following an acceptance or a rejection and then showing that an appropriate single-crossing condition holds. 

The rest of the proof proceeds by induction on $T$. The base case $T = 1$ is trivial. For the inductive step, fix any PBE-D. Consider what action the seller decides to take in period $2$. Because the buyer adopts a threshold rule in period $1$ (say the cutoff type is $k$), the seller's posterior belief about $\theta_1$ follows a truncated distribution of $F_1$ after observing an acceptance or a rejection. Because log-concavity is preserved under truncation, convolution, and linear operations, her posterior belief about $(\theta_{2}, \theta_{3}, \dots, \theta_{T})$ in period $2$ also satisfies \Cref{asm:ar}.\footnote{When $k = \underline{\theta}_1$ (or $\overline{\theta}_1$), as noted in \Cref{rmk:d1} and shown in \Cref{app:d1}, it is without loss of generality to let the seller's belief of $\theta_1$ be $\underline{\theta}_1$ (or $\overline{\theta}_1$) when observing an off-path rejection (or acceptance).} Moreover, because the buyer uses a threshold strategy, information about $\theta_1$ does not matter to both players. Invoking the inductive hypothesis then implies that any equilibrium of the subgame must be outcome-equivalent to the one where the seller posts the monopoly prices from period $2$ onward. 

Let $\{p^{A}_{t}\}_{t=2}^T$ denote the sequence of monopoly prices with respect to the marginals $\{F_{t}(\, \cdot\, | \theta_1 \geq k)\}_{t=2}^{T}$ and similarly define  $\{p^{R}_{t}\}_{t=2}^T$. Because the posterior distributions are all log-concave, both $\{p^{A}_{t}\}_{t=2}^T$ and  $\{p^{R}_{t}\}_{t=2}^T$ are uniquely defined. Moreover, because these prices are unique, it is without loss of generality to consider only pure strategies. Note that under \Cref{asm:ar}, $(\theta_1, \theta_t)$ satisfies an appropriate MLRP condition (see \Cref{lem:MLRP-mp} in \Cref{app:multi-thm}). Therefore, we have $p^{A}_t \geq p^{R}_t$ for all $t$ (see \Cref{lem:pricemon-mp} in \Cref{app:multi-thm}).  This enables us to adopt the same approach as in the proof of \Cref{thm:main}. In particular, using Theorem 1 of \citet{Pavan2014}, we have that the following relaxed problem gives an upper bound for the seller's revenue: 
\begin{equation}\label{eq:relax_multi}
\max_{(x, t) \in \X}  \E \Big [ \sum_{t=1}^T  \delta^{t-1} \psi_t(\theta_1, \dots, \theta_t)x_t(\theta_1, \dots, \theta_t) \Big ] - V_{x,t}(\underline{\theta}_{1})
\end{equation}
where $\mathcal{X}$ encodes that the seller selects a price mechanism subject to the threshold allocation constraint for period $1$ and price monotonicity constraint for period $t \geq 2$, and $\psi_t$ is the period-$t$ virtual value function (see Section 4 of \citealt{Pavan2014}). In particular, under \Cref{asm:ar}, we have
\[ \psi_t(\theta_1, \dots, \theta_t) =  \theta_t - \frac{1 - F_1(\theta_1)}{f_1(\theta_1)} \prod_{s=2}^t \alpha_{s}\] 
(see Example 2 in \citealt{Pavan2014}). By the same argument as in \Cref{subsec:proof}, the problem reduces to   
\begin{align*}
\max_{\substack{ k, \{p^{A}_{t}\}_{t=2}^T, \{p^{R}_{t}\}_{t=2}^T \\ p^A_t \geq p^R_t ,\, \forall t \geq 2 } }   \E[\varphi(\theta_1) \1_{\theta_1 \geq k} ] + \sum_{t=2}^T \delta^{t-1} \Big( \E[\psi_t(\theta_1, \theta_t) &(\1_{\theta_1 \geq k} \1_{\theta_t \geq p^A_t} + \1_{\theta_1 < k} \1_{\theta_t \geq p^R_t})] -\E[(\theta_t - p^{R}_t)_+|\underline{\theta}_{1}] \Big)\,. 
\end{align*}

For any $k$, the $t$-th term in the above sum is of the same form as in (\ref{eq:star}). For any $t \geq 2$, note that $\psi_t(\theta_1, \theta_t)$ is strictly increasing in $\theta_t$ and non-decreasing in $\theta_1$ because log-concavity implies monotone hazard rate. Thus, by \Cref{lem:support} in \Cref{app:multi-thm},  if $(p^A_t,\, p^R_t)$ induces a period-$t$ allocation rule that is not almost surely equal to $\1_{\theta_t \geq p_t}$ for some $p_t$, then the seller does strictly worse for the period-$t$ revenue. Then, for any $k$, the outcome under any optimal choice of $\big(\{p^A_t\}_{t=2}^T,\, \{p^R_t\}_{t=2}^T\big)$ must be equivalent to the outcome under some price path $\{p_2, \dots, p_T\}$. Therefore, for the $T$-period game, every PBE-D must be outcome-equivalent to one in which the seller commits to $\{p^*_1,\dots,p^*_T\}$. This proves the inductive step.
\end{proof}

\section{Conclusion}\label{sec:disc}

This paper studies dynamic pricing with persistent private information and limited commitment. In a two-period game, the main result shows that regardless of the degree of persistence of the private information, any equilibrium under a D1-style refinement gives the seller revenue no higher than what she would get from posting all prices in advance. The ability to condition prices on purchase histories, which are informative about the buyer's values, introduces the ratchet effect that is strong enough to outweigh the gains from learning.  

This result contrasts with the case of full commitment --- a large literature on dynamic mechanism design highlights the power of dynamically responding to the buyer's evolving private information. As our analysis shows, to benefit from dynamic pricing, the seller must commit to a lower price precisely when he \textit{ex post} knows that the buyer has a higher willingness to pay. That is, dynamic rent extraction always goes in the opposite direction of the seller's sequential rationality, and thus hinges critically on her long-term commitment power. Building on this insight, we also show how similar results hold when the products are complements or when there are multiple periods. In contrast, when the products are substitutes or when the values are negatively correlated, the ratchet effect need not appear. Then, as we show by example, dynamic pricing can be effective even under limited commitment.

\begin{appendices}
\crefalias{subsection}{appendix}
\crefalias{section}{appendix}

\section{Omitted Proofs}\label{app:proofs}
\subsection{Proof of \Cref{lem:MLRPone}}\label{app:lem1}
 Fix any $k \in (\underline{\theta}_1, \overline{\theta}_1)$. Note that 
    \[\frac{f(\theta_2 | \theta_1 \geq k) }{f(\theta_2 | \theta_1 < k) }\propto \frac{ \int_{k}^{ \overline{\theta}_1} f(\theta_2| \theta_1 = s') f(\theta_1 = s')ds' }{ \int_{\underline{\theta}_1}^k f(\theta_2| \theta_1 = s) f(\theta_1 = s)ds }\,.\]
It then suffices to show that for any $s' > k$, 
\[\frac{f(\theta_2| \theta_1 = s')}{ \int_{\underline{\theta}_1}^k f(\theta_2| \theta_1 = s) f(\theta_1 = s)ds} \]
is strictly increasing in $\theta_2$. Equivalently, 
\[\frac{\int_{\underline{\theta}_1}^k f(\theta_2| \theta_1 = s) f(\theta_1 = s)ds}{ f(\theta_2| \theta_1 = s')} \]
is strictly decreasing in $\theta_2$. It then suffices to show
\[\frac{f(\theta_2| \theta_1 = s)}{ f(\theta_2| \theta_1 = s')} \]
is strictly decreasing in $\theta_2$ for every $s'>s$, but that is implied by the MLRP property.

\subsection{Proof of \Cref{lem:MLRPtwo}}\label{app:lem2}
It is clear that 
\[ p_i \in \argmax_{p}[ \log(p) + \log(1 - F_i(p))] \,.\]
The claim follows from the monotone selection theorem, if we show that $v(p, i):=\log(1 - F_i(p))$ has strictly increasing differences.  Let \[\Delta(p) := \log(1 - F_2(p)) - \log(1 - F_1(p))\,. \]
Note that
\[\Delta'(p) =  \frac{f_1(p)}{1 - F_1(p)} - \frac{f_2(p)}{1 - F_2(p)} > 0\]
since strict likelihood-ratio dominance implies strict hazard-rate dominance.

\subsection{Completion of Proof of Part (ii) of \Cref{thm:main}}\label{app:main}

We complete the proof of part (ii) of \Cref{thm:main} by extending the proof in the main text to cover all PBE-D. Instead of choosing two fixed prices $p_A, p_R$, the seller now chooses two random variables $\tilde{p}_A, \tilde{p}_R$. We refer to their realizations as $p_A, p_R$. (It is without loss of generality to only focus on mixing over $p_A, p_R$ since mixing over $p_1$ implies that there always exists a deterministic $p_1$ achieving the same revenue.) 

\paragraph{Step 1:} We show that the first-period allocation rule is still a threshold rule, and that price monotonicity holds in the sense that $p_A \geq p_R$ for any realized $p_A, p_R$, i.e., $\min\{ \supp(\tilde{p}_A) \} \geq \max\{\supp(\tilde{p}_R)\}$. Note that (\ref{eq:h}) is preserved under averaging. So if we define 
\[h(\theta_1) := \E_{\tilde{p}_A, \tilde{p}_R}[\E[(\theta_2 - p_R)_{+}- (\theta_2 - p_A)_{+} | \theta_1]]\]
then we still have $h(\theta_1') - h(\theta_1) < \frac{1}{\delta}(\theta_1' - \theta_1)$ for any $\theta_1' > \theta_1$. Type $\theta_1$ accepts if
\[(\theta_1 - p_1) + \delta \E_{\tilde{p}_A}[\E[(\theta_2 - p_A)_{+}|\theta_1]] \geq \delta \E_{\tilde{p}_R}[\E[(\theta_2 - p_R)_{+}|\theta_1]]    \,.
\]
Then as before, $g(\theta_1) := \frac{1}{\delta}(\theta_1 - p_1)$ crosses $h$ at most once from below. Therefore, the first-period allocation is characterized by a cutoff type $k$.

Suppose $k \in (\underline{\theta}_1, \overline{\theta}_1)$. By \Cref{lem:MLRPone}, $F_2( \, \cdot\, | \theta_1 \geq k) \succ_L F_2( \, \cdot\, | \theta_1 < k)$. Note that any realized $p_A$ must be an optimal price under belief $F_2( \, \cdot\, | \theta_1 \geq k)$, as the seller only mixes among optimal prices. Similarly, any realized $p_R$ must be an optimal price under belief $F_2( \, \cdot\, | \theta_1 < k)$. Hence $p_A \geq p_R$ for any realized $p_A,\, p_R$ by \Cref{lem:MLRPtwo}. The case of $k = \underline{\theta}_1$ or $\overline{\theta}_1$ is handled similarly as in the proof of \Cref{prop:pricemon}.

\paragraph{Step 2: } The same relaxation argument applies. The only difference is that the seller now commits to a randomized price mechanism in the second period with the constraint \[\min\{ \supp(\tilde{p}_A) \} \geq \max\{\supp(\tilde{p}_R)\}\]
denoted by $\tilde{p}_A \succeq \tilde{p}_R$. The same steps go through, except that the relaxed problem is now 
\[
\max_{k, \tilde{p}_A, \tilde{p}_R: \tilde{p}_A \succeq \tilde{p}_R} \E[\varphi(\theta_1)\1_{\theta_1 \geq k}] + \delta \E[\psi(\theta_1, \theta_2) (\1_{\theta_1 \geq k} \1_{\theta_2 \geq \tilde{p}_A}+ \1_{\theta_1 < k} \1_{\theta_2 \geq \tilde{p}_R}) ] - \delta \E[(\theta_2 - \tilde{p}_R)_+ | \theta_1 = \underline{\theta}_1]
\]
where $\tilde{p}_A, \tilde{p}_R$ are two random variables. 

\paragraph{Step 3:} To solve this, we consider a further relaxation. We couple $\tilde{p}_A, \tilde{p}_R$ in a common probability space $(\Omega, \F, Q)$. Consider the requirement that  $\tilde{p}_A(\omega) \geq \tilde{p}_R(\omega)$ for all $\omega \in \Omega$, denoted by $\tilde{p}_A \geq \tilde{p}_R$. The disjoint support constraint clearly implies this, so this is a relaxation of the problem above. The seller's problem is to maximize over $k$ and two $\F$-measurable functions:
\begin{align*}
    \max_{k, \tilde{p}_A, \tilde{p}_R: \tilde{p}_A \geq \tilde{p}_R} \E_{Q}\Big[\E[\varphi(\theta_1)\1_{\theta_1 \geq k}] + \delta \E[\psi(\theta_1, \theta_2) (\1_{\theta_1 \geq k} \1_{\theta_2 \geq \tilde{p}_A(\omega)}&+ \1_{\theta_1 < k} \1_{\theta_2 \geq \tilde{p}_R(\omega)}) ]  \\
    &- \delta \E[(\theta_2 - \tilde{p}_R(\omega))_+ | \theta_1 = \underline{\theta}_1] \Big] \,.
\end{align*}
Now fix any $\omega$. Let $p_A, p_R$ denote $ \tilde{p}_A(\omega), \tilde{p}_R(\omega)$. Then the ex post maximization is 
\[
\max_{k, p_A, p_R: p_A \geq p_R} \E[\varphi(\theta_1)\1_{\theta_1 \geq k}] + \delta \E[\psi(\theta_1, \theta_2) (\1_{\theta_1 \geq k} \1_{\theta_2 \geq p_A}+ \1_{\theta_1 < k} \1_{\theta_2 \geq p_R}) ] - \delta \E[(\theta_2 - p_R)_+ | \theta_1 = \underline{\theta}_1]
\]
which is exactly what we have in \Cref{subsec:proof} and has a maximizer $(k^*, p^*_2, p^*_2)$, irrespective of $\omega$. Hence, there is a solution to the ex ante maximization problem (with the constraint $\tilde{p}_A \geq \tilde{p}_R$) with $\tilde{p}_A  = \tilde{p}_R = p^*$. This solution also satisfies the constraint $\tilde{p}_A \succeq \tilde{p}_R$ and thus solves the mechanism design relaxation where the seller is allowed to use randomized price mechanisms in the second period. So the seller's optimal constrained mechanism is still to simply post two deterministic prices in advance. Therefore, her revenue in any PBE-D is no higher than $\sum_t \delta^{t-1} \pi^*_t$. 

\subsection{Completion of Proof of Part (ii) of \Cref{thm:comp}}\label{app:comp-thm}
\begin{lemma}\label{lem:h3} In every pure-strategy PBE-D, 
$h(\theta'_1) - h(\theta_1) < \frac{1}{\delta}(\theta_1' - \theta_1)$ for any $\theta_1' > \theta_1$.  
\end{lemma}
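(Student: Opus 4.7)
The plan is to mirror the single-crossing argument used in the proof of \Cref{prop:pricemon}, with the twist that the two conditional expectations defining $h$ are now taken under different first-period allocations ($x_1 = 0$ and $x_1 = 1$), so their $\theta_2$-distributions differ. The key observation will be that the second term in $h$ enters with the ``right'' sign, so that the Lipschitz-style bound need only be applied to the first term, which can then be handled exactly as in the calculation leading to \eqref{eq:h}.

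First I would note that by \Cref{asm:mlrpx}, for each fixed $x_1 \in \{0, 1\}$, $\theta_2 \mid \theta_1', x_1$ first-order stochastically dominates $\theta_2 \mid \theta_1, x_1$ whenever $\theta_1' > \theta_1$. Since $(\theta_2 - p)_+$ is non-decreasing in $\theta_2$ for any fixed $p$, both of the maps
\[\theta_1 \mapsto \E[(\theta_2 - p_R)_+ \mid \theta_1, x_1 = 0]\quad\text{and}\quad\theta_1 \mapsto \E[(\theta_2 - p_A)_+ \mid \theta_1, x_1 = 1]\]
are non-decreasing. Consequently, the latter contributes a non-positive increment to $h(\theta_1') - h(\theta_1)$, yielding
\[h(\theta_1') - h(\theta_1) \leq \E[(\theta_2 - p_R)_+ \mid \theta_1', x_1 = 0] - \E[(\theta_2 - p_R)_+ \mid \theta_1, x_1 = 0]\,.\]

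Next I would apply the same integration-by-parts computation used in \eqref{eq:h}: the function $\theta_2 \mapsto (\theta_2 - p_R)_+$ is $1$-Lipschitz with a.e. derivative in $[0, 1]$, so under the FOSD comparison above, the right-hand side is bounded by $\E[\theta_2 \mid \theta_1', x_1 = 0] - \E[\theta_2 \mid \theta_1, x_1 = 0]$. \Cref{asm:Lipschitzx} then delivers the strict bound $\frac{1}{\delta}(\theta_1' - \theta_1)$, closing the chain of inequalities.

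The only conceptual wrinkle, rather than a real obstacle, is recognizing that one does \emph{not} need an analogue of \Cref{asm:Lipschitzx} for $\E[\theta_2 \mid \cdot, x_1 = 1]$: the acceptance-branch term enters with the favorable sign, so the monotonicity already implied by \Cref{asm:mlrpx} is enough to discard it. This pure-strategy statement is what the lemma asserts; the extension to mixed strategies would presumably follow by averaging over the realized $(\tilde{p}_A, \tilde{p}_R)$, as in \Cref{app:main}.
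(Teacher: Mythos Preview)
Your proposal is correct and matches the paper's proof essentially line for line: the paper also uses FOSD (from \Cref{asm:mlrpx}) to discard the acceptance-branch term as contributing non-positively, then applies the integration-by-parts bound and \Cref{asm:Lipschitzx} to the rejection-branch term. The only cosmetic difference is the order in which the two observations are stated.
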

\begin{proof}
Fix any $\theta'_1 > \theta_1$. Integration by parts implies
\begin{align*}
    \E[(\theta_2 - p_R)_+ | \theta'_1, x_1 = 0] &- \E[(\theta_2 - p_R)_+ | \theta_1, x_1 = 0] \\
    &= \int_0^\infty [\P(\theta_2 > s | \theta'_1, x_1 = 0) - \P(\theta_2 > s | \theta_1, x_1 = 0)] \1_{s \geq p_R} ds  \\
     &\leq  \int_0^\infty [\P(\theta_2 > s | \theta'_1, x_1 = 0) - \P(\theta_2 > s | \theta_1, x_1 = 0)]  ds \\
     &= \E[\theta_2 |\theta'_1, x_1 = 0] - \E[\theta_2 | \theta_1, x_1 = 0] < \frac{1}{\delta}(\theta'_1 - \theta_1) 
\end{align*}
where we used FOSD (implied by \Cref{asm:mlrpx}) and \Cref{asm:Lipschitzx}. The claim follows by noting that FOSD implies
$\E[(\theta_2 - p_A)_+ | \theta'_1, x_1 = 1] -  \E[(\theta_2 - p_A)_+ | \theta_1, x_1 = 1] \geq 0$
\end{proof}

\begin{lemma}\label{lem:pricemoncomp}
In every pure-strategy PBE-D,  $p_A \geq p_R$.
\end{lemma}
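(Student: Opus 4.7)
The plan is to follow the same logic used to prove \Cref{prop:pricemon} in the benchmark model, but using the complement assumption (\Cref{asm:comp}) together with the adapted MLRP assumption (\Cref{asm:mlrpx}) to establish the relevant likelihood-ratio ordering of the seller's posteriors. The key observation is that the posterior belief about $\theta_2$ conditional on acceptance, namely $F_2(\,\cdot\,|\theta_1\geq k, x_1=1)$, must strictly likelihood-ratio dominate the posterior belief conditional on rejection, namely $F_2(\,\cdot\,|\theta_1<k, x_1=0)$, because two separate forces push in the same direction: observing $\theta_1\geq k$ raises beliefs about $\theta_2$ via the persistence of types (\Cref{asm:mlrpx}), and observing $x_1=1$ raises beliefs about $\theta_2$ via the complement assumption (\Cref{asm:comp}).

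First, I would invoke the result of Step 1 (\Cref{lem:h3} combined with the single-crossing argument in the text) to conclude that in every pure-strategy PBE-D the first-period allocation is a threshold rule with some cutoff $k\in[\underline{\theta}_1,\overline{\theta}_1]$. Second, assuming $k\in(\underline{\theta}_1,\overline{\theta}_1)$, I would chain two likelihood-ratio comparisons:
\begin{equation*}
F_2(\,\cdot\,|\theta_1\geq k, x_1=1)\ \succ_L\ F_2(\,\cdot\,|\theta_1\geq k, x_1=0)\ \succ_L\ F_2(\,\cdot\,|\theta_1<k, x_1=0).
\end{equation*}
The first dominance follows directly from \Cref{asm:comp} applied pointwise in $\theta_1$ and then mixed over $\theta_1\geq k$ (averaging preserves the monotone likelihood ratio since the same density in the denominator is scaled by a non-decreasing function of $\theta_2$). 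The second dominance is exactly the statement of \Cref{lem:MLRPone} applied to the distributions indexed by $x_1=0$, whose validity depends only on the MLRP condition in \Cref{asm:mlrpx} for the $x_1=0$ family.

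Third, I would apply \Cref{lem:MLRPtwo} to conclude that any monopoly price optimal under the acceptance posterior weakly exceeds any monopoly price optimal under the rejection posterior, which yields $p_A\geq p_R$ since sequential rationality forces $p_A$ and $p_R$ to be monopoly prices against their respective beliefs. Finally, I would handle the edge cases $k=\underline{\theta}_1$ and $k=\overline{\theta}_1$ exactly as in the proof of \Cref{prop:pricemon}: the off-path history lets us pin down beliefs via the D1 refinement (\Cref{rmk:d1}) to $\theta_1=\underline{\theta}_1$ after an off-path rejection and $\theta_1=\overline{\theta}_1$ after an off-path acceptance, and the same likelihood-ratio chain still applies with these degenerate beliefs.

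I do not expect a serious obstacle here — the main conceptual point is that complementarity and persistence work in the same direction, so the chain of likelihood-ratio dominances is clean. The only minor care needed is verifying that averaging over $\theta_1\geq k$ (respectively $\theta_1<k$) preserves MLRP when we form the mixed posteriors, which follows from the standard fact that an MLRP family is closed under taking mixtures with a common mixing measure on the parameter.
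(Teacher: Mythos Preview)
Your overall strategy is right and mirrors the paper's, but the first step of your likelihood-ratio chain does not go through as written. The claim that ``averaging preserves the monotone likelihood ratio since the same density in the denominator is scaled by a non-decreasing function of $\theta_2$'' is false: mixing two MLRP-ordered families over a \emph{common} parameter set need not preserve MLRP. For a discrete illustration, take $\theta_2\in\{1,2,3\}$, $\theta_1\in\{L,H\}$ with equal weights, and
\[
f_2(\cdot\mid L,0)=(0.98,0.01,0.01),\quad f_2(\cdot\mid H,0)=(0.01,0.01,0.98),
\]
\[
f_2(\cdot\mid L,1)=(0.98,0.01,0.01),\quad f_2(\cdot\mid H,1)=(0.001,0.001,0.998).
\]
Both \Cref{asm:comp} and \Cref{asm:mlrpx} hold pointwise, yet the ratio of the $x_1=1$ mixture to the $x_1=0$ mixture over $\{L,H\}$ is approximately $(0.99,\,0.55,\,1.02)$, which is not monotone. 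So the step $F_2(\,\cdot\,|\theta_1\geq k,x_1=1)\succeq_L F_2(\,\cdot\,|\theta_1\geq k,x_1=0)$ is not justified by your argument. (Separately, \Cref{asm:comp} is only weak, so the first link should in any case be $\succeq_L$, not $\succ_L$.)

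The fix is to reverse the order of operations, which is exactly what the paper does: first establish the \emph{pointwise} chain
\[
F_2(\,\cdot\,\mid\theta_1',x_1=1)\ \succ_L\ F_2(\,\cdot\,\mid\theta_1,x_1=1)\ \succeq_L\ F_2(\,\cdot\,\mid\theta_1,x_1=0)
\]
for every $\theta_1'>\theta_1$ (the first relation by \Cref{asm:mlrpx}, the second by \Cref{asm:comp}), yielding $F_2(\,\cdot\,\mid\theta_1',x_1=1)\succ_L F_2(\,\cdot\,\mid\theta_1,x_1=0)$ whenever $\theta_1'\geq k>\theta_1$. Then invoke the proof of \Cref{lem:MLRPone}, which mixes over the \emph{disjoint} sets $\{\theta_1\geq k\}$ and $\{\theta_1<k\}$ and needs only this cross-set pointwise dominance. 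The remainder of your argument---applying \Cref{lem:MLRPtwo} and handling the endpoint cases via the D1 beliefs---is correct.
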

\begin{proof}
Using the assumptions, we have \[F_2(\, \cdot\,  | \theta'_1, x_1 = 1) \succ_L F_2(\, \cdot\,  | \theta_1, x_1 = 1) \succeq_L  F_2(\, \cdot\, | \theta_1, x_1 = 0)\]
for any $\theta'_1 > \theta_1$, where $\succeq_L$ denotes (weak) likelihood-ratio dominance. This then implies that for any $k \in (\underline{\theta}_1, \overline{\theta}_1)$, $F_2(\, \cdot\,  | \theta_1 \geq k, x_1 = 1) \succ_L F_2(\, \cdot\,  | \theta_1 < k, x_1 = 0)$ by the proof of \Cref{lem:MLRPone}. Price monotonicity then follows from \Cref{lem:MLRPtwo}. When $k = \underline{\theta}_1$ or $\overline{\theta}_1$, the same holds by the D1 criterion as in the proof of \Cref{prop:pricemon}. 
\end{proof}

\noindent \textbf{Extension to all PBE-D.} We follow the same notation and the same steps as in \Cref{app:main}. For Step 1, note that \Cref{lem:h3} continues to hold if we define \[h(\theta_1) := \E_{\tilde{p}_A, \tilde{p}_R}\big [\E[(\theta_2 - p_R)_+ | \theta_1, x_1 = 0]  - \E[(\theta_2 - p_A)_+ | \theta_1, x_1 = 1] \big ]\,.\]
Moreover, by the same reasoning as in \Cref{app:main} and the proof of \Cref{lem:pricemoncomp}, we have $p_A \geq p_R$ for any realized $p_A, \, p_R$. Steps 2 and 3 proceed in exactly the same way as in \Cref{app:main}, and are thus omitted.

\subsection{Completion of Proof of Part (ii) of \Cref{thm:multi}} \label{app:multi-thm}

\begin{lemma} \label{lem:threshold-mp}
For any seller's strategy, the buyer's optimal strategy is given by a threshold rule: at every history $\hat{h}_t$,  $x_t(\hat{h}_t) = \1_{\theta_t \geq k_t}$, where $k_t$ depends only on the public history.   
\end{lemma}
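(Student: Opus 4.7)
The plan is to prove \Cref{lem:threshold-mp} by backward induction on $t$, simultaneously establishing (i) that the buyer's optimal action at every history $\hat{h}_t$ is a threshold rule with threshold depending only on the public history, and (ii) a Lipschitz bound on the buyer's continuation-value function strong enough to preserve single-crossing in earlier periods.

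First I would reduce the relevant state. The buyer's decision problem in period $t$ depends on his private history only through the current type $\theta_t$: past payoffs are sunk; by the Markov property of the AR(1) process in \Cref{asm:ar}, $(\theta_{t+1},\dots,\theta_T)$ is conditionally independent of $(\theta_1,\dots,\theta_{t-1})$ given $\theta_t$; and the seller's continuation strategy is measurable with respect to the public history $h_t$. It thus suffices to define the buyer's optimal continuation value $W_t(\theta_t, h_t)$ recursively via the Bellman equation
\[
W_t(\theta_t, h_t) \;=\; \max_{x\in\{0,1\}}\Big\{ x\bigl(\theta_t - p_t\bigr) + \delta\,\E\bigl[W_{t+1}(\theta_{t+1}, h_{t+1})\,\big|\,\theta_t, h_t, x\bigr]\Big\}.
\]

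For the inductive step, the base case $t=T$ is immediate: $W_T(\theta_T, h_T) = (\theta_T - p_T)_+$ is a threshold rule and $1$-Lipschitz in $\theta_T$. Assume that in period $t+1$ the optimal action is a threshold rule at every public history and $W_{t+1}(\,\cdot\,, h_{t+1})$ is $L_{t+1}$-Lipschitz. Writing $\theta_{t+1}=\alpha_{t+1}\theta_t+\epsilon_{t+1}$ with $\epsilon_{t+1}$ independent of $\theta_t$, the conditional expectation $\E[W_{t+1}(\theta_{t+1}, h_{t+1})\mid\theta_t]$ is $\alpha_{t+1} L_{t+1}$-Lipschitz in $\theta_t$. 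Let $h_{t+1,A}, h_{t+1,R}$ denote the continuation histories after an acceptance or a rejection. The net gain from accepting over rejecting,
\[
G_t(\theta_t) \;:=\; (\theta_t - p_t) + \delta\,\E\bigl[W_{t+1}(\theta_{t+1}, h_{t+1,A}) - W_{t+1}(\theta_{t+1}, h_{t+1,R})\,\big|\,\theta_t\bigr],
\]
therefore has a derivative bounded below by $1-\delta \alpha_{t+1} L_{t+1}$ almost everywhere. Combining this with the recursion $L_t\leq 1+\delta\alpha_{t+1}L_{t+1}$, $L_T=1$, and the hypothesis $\alpha_s<\tfrac{1}{2\delta}$ yields $L_t<2$ uniformly in $t$, hence $\delta \alpha_{t+1} L_{t+1}<1$ and $G_t$ is strictly increasing. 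The optimal period-$t$ action is thus $\1_{\theta_t\geq k_t(h_t)}$ with threshold depending only on $h_t$, and $W_t(\,\cdot\,,h_t)$ is $L_t$-Lipschitz, completing the induction.

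The main obstacle is propagating the Lipschitz bound across arbitrarily many periods without letting it blow up. The condition $\alpha_s<\tfrac{1}{2\delta}$ in \Cref{asm:ar} is precisely what makes the recursion $L_t\leq 1+\delta\alpha_{t+1}L_{t+1}$ behave like a geometric series with ratio below $\tfrac{1}{2}$, so that $L_t$ is controlled by $\sum_{j\geq 0}2^{-j}=2$ regardless of the horizon $T$. The weaker two-period condition $\alpha<1/\delta$ from \Cref{asm:Lipschitz} suffices to iterate the single-crossing argument only once, which is exactly why the multi-period extension requires the strengthened assumption.
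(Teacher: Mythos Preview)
Your approach is essentially the same as the paper's: both argue by backward induction, simultaneously propagating a Lipschitz bound on continuation values so that the current-period net-gain function is strictly increasing in $\theta_t$. The paper tracks the post-action continuation values $U^A_t,\,U^R_t$ separately and shows each is $1$-Lipschitz, whereas you track the combined value $W_t$ and show it is $L_t$-Lipschitz with $L_t<2$; these are equivalent bookkeeping choices.

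There is, however, a gap in your derivative bound for $G_t$. From the $L_{t+1}$-Lipschitz property alone, each of $\E[W_{t+1}(\theta_{t+1},h_{t+1,A})\mid\theta_t]$ and $\E[W_{t+1}(\theta_{t+1},h_{t+1,R})\mid\theta_t]$ is $\alpha_{t+1}L_{t+1}$-Lipschitz in $\theta_t$, but their \emph{difference} is in general only $2\alpha_{t+1}L_{t+1}$-Lipschitz, so you get $G_t'\geq 1-2\delta\alpha_{t+1}L_{t+1}$. With $L_{t+1}<2$ and $\delta\alpha_{t+1}<\tfrac12$ this yields only $G_t'>-1$, which does not give single-crossing. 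The missing ingredient is that $W_{t+1}(\cdot,h)$ is also \emph{non-decreasing} in $\theta_{t+1}$: then its a.e.\ derivative lies in $[0,L_{t+1}]$, so the difference of two such functions is $L_{t+1}$-Lipschitz rather than $2L_{t+1}$-Lipschitz, and your claimed bound $G_t'\geq 1-\delta\alpha_{t+1}L_{t+1}>0$ follows. The paper's proof carries both the Lipschitz and the monotonicity property in its inductive hypothesis for precisely this reason. Monotonicity is easy to add (a higher $\theta_t$ raises the period-$t$ flow payoff and, under \Cref{asm:ar}, shifts $\theta_{t+1}$ upward in the FOSD sense), but it must be stated and propagated through the induction for the argument to close.
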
 

\begin{proof}
Note that at any history $\hat{h}_t$, the buyer solves the following problem 
\[\max_{x \in \{0, 1\}} [(\theta_t - p_t) + U^{A}_t(\theta_t, h^\dagger_t)]x + U^{R}_t(\theta_t, h^\dagger_t)(1 - x)\]
where $U^A_t, U^R_t$ are the continuation payoffs following an acceptance and a rejection, and we decompose the history of the buyer $\hat{h}_t$ into $(\theta_t,h^\dagger_t)$. 

We prove the following three claims together by backward induction: 

\begin{itemize}
    \item[(1)] $x_t(\hat{h}_t) = \1_{\theta_t \geq k_t}$, where $k_t$ depends only on the public history. 
    \item[(2)] $U^A_t(\cdot, h^\dagger_t), U^R_t(\cdot, h^\dagger_t)$  are  1-Lipschitz and non-decreasing functions of $\theta_t$.
    \item[(3)] $U^A_t(\theta_t, \cdot), U^R_t(\theta_t, \cdot)$ depend on $h^\dagger_t$ only through public history $h^{o}_t := (h_t, p_t)$. 
\end{itemize}

\noindent \textbf{Base case:} In the last period, $k_T = p_T$ and $U^A_T = U^R_T = 0$. So the claims hold.

\noindent \textbf{Inductive step:} Note that 
\[U^A_t(\theta_t, h^\dagger_t) = \delta \E \Big[ \max\Big\{\theta_{t+1}-p_{t+1} + U^{A}_{t+1}(\theta_{t+1}, h^\dagger_{t+1}), U^{R}_{t+1}(\theta_{t+1}, h^\dagger_{t+1})\Big\} \Big |\theta_t,  h^\dagger_t, x_t = A \Big]\,.\]
By definition, $p_{t+1}$ depends only on the public history. Since the type process is Markovian, the distribution of $\theta_{t+1}$ depends only on $\theta_t$. By the inductive hypothesis, $U^{A}_{t+1}(\theta_{t+1}, \cdot)$, $U^{R}_{t+1}(\theta_{t+1}, \cdot)$ depend only on the public history. Therefore, for any fixed $\theta_t$, $U^A_t(\theta_t, \cdot)$ depends on $h^\dagger_t$ only through the public history $h^o_t$. The same argument works for $U^R_t(\theta_t, \cdot)$.  Moreover, by the inductive hypothesis, for any fixed $p$, 
\[\max\Big\{\theta_{t+1}-p + U^{A}_{t+1}(\theta_{t+1}, h^o_{t+1}), U^{R}_{t+1}(\theta_{t+1}, h^o_{t+1})\Big\}\]
is $2$-Lipschitz and non-decreasing in $\theta_{t+1}$. Therefore, 
\[U_{t+1}(\theta_{t+1}, h^o_{t}): = \E_{p_{t+1}\sim \sigma }\Big[\max\Big\{\theta_{t+1}-p_{t+1} + U^{A}_{t+1}(\theta_{t+1}, h^o_{t+1}), U^{R}_{t+1}(\theta_{t+1}, h^o_{t+1})\Big\} \Big | h^o_t,  x_t = A\Big]\]
is also $2$-Lipschitz and non-decreasing in $\theta_{t+1}$, where $\sigma$ is the seller's strategy.  Fix any $\theta'_t > \theta_t$, integration by parts yields that 
\[U^A_t(\theta'_t , h^o_t)  - U^A_t(\theta_t,  h^o_t)  = \delta \int_0^\infty \big [\P(\theta_{t+1} > s | \theta'_t) - \P(\theta_{t+1} > s | \theta_t) \big ] U_{t+1}'(s,  h^o_t) ds \,.\]
Under \Cref{asm:ar}, $\P(\theta_{t+1} > s | \theta'_t) \geq \P(\theta_{t+1} > s | \theta_t)$. Thus, 
\begin{align*}
    U^A_t(\theta'_t, h^o_t)  - U^A_t(\theta_t, h^o_t) &\leq  2\delta \int_0^\infty \big [\P(\theta_{t+1} > s | \theta'_t) - \P(\theta_{t+1} > s | \theta_t) \big ] ds   \\
    &= 2\delta (\E[\theta_{t+1}|\theta'_t] - \E[\theta_{t+1} | \theta_t]) < \theta'_t - \theta_t
\end{align*}
where the last inequality is due to the assumption $\alpha_t \in (0, \frac{1}{2\delta})$ for all $t$. This shows that $U^A_t(\cdot, h^o_t)$ is 1-Lipschitz. It is clear from the above that $U^A_t(\cdot, h^o_t)$ is non-decreasing. The same arguments work for $U^R_t(\cdot, h^o_t)$. Together these also imply that for any $h^\dagger_t$, 
\[H(\theta_t, h^\dagger_t) := (\theta_t - p_t) + U^{A}_t(\theta_t, h^\dagger_t) -  U^{R}_t(\theta_t, h^\dagger_t) = (\theta_t - p_t) + U^{A}_t(\theta_t, h^o_t) -  U^{R}_t(\theta_t, h^o_t)\]
is a strictly increasing function and thus crosses $0$ at most once from below. Therefore, the buyer's strategy is given by a threshold rule and the threshold $k_t$ depends only on the public history $h^o_t$. This proves the inductive step. 
\end{proof}

\begin{lemma}\label{lem:MLRP-mp}
For any $2 \leq t \leq T$ and any $\theta_1' > \theta_1$, the conditional likelihood ratio $\frac{f(\theta_t | \theta_1')}{f(\theta_t | \theta_1)}$
is non-decreasing in $\theta_t$.
\end{lemma}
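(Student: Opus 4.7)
The plan is to reduce the claim to the well-known fact that a location family with a log-concave base density has monotone likelihood ratio. First, I would iterate the AR(1) recursion under \Cref{asm:ar}: conditional on $\theta_1$,
\[
\theta_t \eqid A_t \theta_1 + Z_t, \qquad A_t := \prod_{s=2}^{t}\alpha_s,\qquad Z_t := \sum_{s=2}^{t}\Big(\prod_{r=s+1}^{t}\alpha_r\Big)\epsilon_s,
\]
where the $\epsilon_s$ are independent and the $\alpha_s$ are strictly positive. In particular, $f(\theta_t\mid\theta_1)=g_{Z_t}(\theta_t - A_t\theta_1)$ for $g_{Z_t}$ the density of $Z_t$, so the conditional distribution of $\theta_t$ given $\theta_1$ is a pure location family indexed by $A_t\theta_1$.

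Next, I would argue that $g_{Z_t}$ is log-concave. This uses two standard preservation properties of log-concave densities: (a) if $X$ has a log-concave density and $c>0$, then $cX$ has a log-concave density (change of variables); (b) the convolution of two log-concave densities is log-concave (the classical Prékopa/Ibragimov theorem). Since each $\epsilon_s$ has a log-concave density by \Cref{asm:ar}, and $Z_t$ is a finite sum of independent scalar multiples of such variables, iterating (a) and (b) yields that $Z_t$ has a log-concave density.

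Finally, I would translate log-concavity of $g_{Z_t}$ into MLRP. For any $\theta_1'>\theta_1$, set $a := A_t\theta_1' > A_t\theta_1 =: b$ and observe
\[
\log\frac{f(\theta_t\mid\theta_1')}{f(\theta_t\mid\theta_1)} = \log g_{Z_t}(\theta_t-a)-\log g_{Z_t}(\theta_t-b).
\]
Because $\log g_{Z_t}$ is concave, its (a.e.) derivative $(\log g_{Z_t})'$ is non-increasing, and since $\theta_t-a<\theta_t-b$ we get $(\log g_{Z_t})'(\theta_t-a)\geq (\log g_{Z_t})'(\theta_t-b)$. Hence the displayed log-ratio is non-decreasing in $\theta_t$, which is exactly the desired MLRP. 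No step is really an obstacle; the only thing to be careful about is citing the convolution stability of log-concavity cleanly and noting that $A_t>0$ (which follows from $\alpha_s\in(0,1/(2\delta))$) so that the location parameter is indeed increasing in $\theta_1$.
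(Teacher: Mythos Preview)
Your proposal is correct and follows essentially the same route as the paper: iterate the AR(1) recursion to write $\theta_t = A_t\theta_1 + Z_t$ with $A_t=\prod_{s=2}^t\alpha_s>0$, argue that $Z_t$ has a log-concave density by closure under positive scaling and convolution, and then conclude MLRP for the resulting location family. The paper is slightly terser (it simply cites ``standard results'' for the last step), whereas you spell out the concavity-of-$\log g_{Z_t}$ argument explicitly, but there is no substantive difference.
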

\begin{proof}
Fix any $t$. We can write 
\[\theta_t =  \alpha_t \theta_{t-1}  + \epsilon_t =  \alpha_t (\alpha_{t-1} \theta_{t-2} + \epsilon_{t-1}) + \epsilon_t = \cdots =  \Big(\prod_{s=2}^{t}\alpha_s\Big) \theta_1 + \sum_{s = 2}^t \Big(\prod_{n = s+1}^{t} \alpha_n \Big) \epsilon_s \]
where we interpret $\prod_{n = s+1}^{t} \alpha_n = 1$ when $s = t$. Let $\alpha := \prod_{s=2}^{t} \alpha_s > 0$. Since $\epsilon_s$ are mutually independent and each follows a log-concave distribution, $\epsilon := \sum_{s = 2}^t (\prod_{n = s+1}^{t} \alpha_n) \epsilon_s$ follows a log-concave distribution. Moreover, $\epsilon$ and $\theta_1$ are independent. Therefore $\theta_t =  \alpha \theta_1 + \epsilon$ where $\alpha > 0$ and $\epsilon$ follows some log-concave distribution $G$ (with a density $g$).  Since $G$ is a log-concave distribution, the conditional distribution of $\theta_t$ has MLRP in $\alpha \theta_1$ by standard results, i.e.,  
\[\frac{f_t(\theta_t | \theta_1')}{f_t(\theta_t | \theta_1)} = \frac{g(\theta_t -  \alpha \theta'_1 )}{g(\theta_t -  \alpha \theta_1 )}\]
is non-decreasing in $\theta_t$, for any $\alpha \theta'_1 > \alpha \theta_1$. Since $\alpha > 0$, the latter is simply $\theta'_1 > \theta_1$, proving the claim.
\end{proof}

\begin{lemma}\label{lem:pricemon-mp} For any $2 \leq t \leq T$, $p^A_t \geq p^R_t$. 
\end{lemma}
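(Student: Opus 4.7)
My plan is to follow the same structure as the two-period argument (\Cref{prop:pricemon}), adapted to the general $t$. The goal is to reduce the claim to the combination of (i) a likelihood-ratio dominance statement between the two posterior marginals of $\theta_t$ conditional on first-period acceptance versus rejection, and (ii) the monotone-selection result \Cref{lem:MLRPtwo} applied to those marginals.

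First I would show that for any cutoff $k \in (\underline{\theta}_1, \overline{\theta}_1)$, the conditional distributions satisfy $F_t(\,\cdot\, | \theta_1 \geq k) \succeq_L F_t(\,\cdot\, | \theta_1 < k)$. This is the direct $t$-period analogue of \Cref{lem:MLRPone}, and its proof carries over essentially verbatim once \Cref{lem:MLRP-mp} is in hand: writing
\[
\frac{f(\theta_t | \theta_1 \geq k)}{f(\theta_t | \theta_1 < k)} \propto \frac{\int_k^{\overline{\theta}_1} f(\theta_t | \theta_1 = s') f_1(s')\,ds'}{\int_{\underline{\theta}_1}^k f(\theta_t | \theta_1 = s) f_1(s)\,ds},
\]
it suffices to verify that for every $s' > s$ the ratio $f(\theta_t | \theta_1 = s)/f(\theta_t | \theta_1 = s')$ is non-increasing in $\theta_t$, which is exactly \Cref{lem:MLRP-mp}. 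Under \Cref{asm:ar} ($\alpha_t > 0$ and $G_t$ log-concave with bounded support) this dominance is in fact strict wherever both densities are positive, so we get $\succ_L$ on the effective support, which is what we need to invoke \Cref{lem:MLRPtwo}.

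Next, because $p^A_t$ is a monopoly price for $F_t(\,\cdot\, | \theta_1 \geq k)$ and $p^R_t$ is a monopoly price for $F_t(\,\cdot\, | \theta_1 < k)$, both being uniquely defined by log-concavity of these posteriors (log-concavity is preserved under truncation, affine shifts and independent convolution, so the truncated AR(1) structure still yields log-concave marginals), \Cref{lem:MLRPtwo} immediately yields $p^A_t \geq p^R_t$. This handles the interior-cutoff case.

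Finally, the edge cases $k \in \{\underline{\theta}_1, \overline{\theta}_1\}$ correspond to one of the two first-period outcomes being off-path. By \Cref{rmk:d1} and the characterization of admissible D1 off-path beliefs established in \Cref{app:d1}, the seller may be taken to believe $\theta_1 = \underline{\theta}_1$ after an off-path rejection and $\theta_1 = \overline{\theta}_1$ after an off-path acceptance. Substituting these degenerate beliefs, the same likelihood-ratio comparison $F_t(\,\cdot\, | \theta_1 \geq \underline{\theta}_1) \succ_L F_t(\,\cdot\, | \theta_1 = \underline{\theta}_1)$ (and symmetrically at $\overline{\theta}_1$) follows from \Cref{lem:MLRP-mp}, and \Cref{lem:MLRPtwo} again gives $p^A_t \geq p^R_t$. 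The main subtlety to watch for is ensuring strict likelihood-ratio dominance where it is needed to apply \Cref{lem:MLRPtwo}; this is precisely where \Cref{asm:ar} (positivity of $\alpha_s$ and log-concavity of the innovations) is essential, since these jointly guarantee that the mixture aggregation preserves strict MLRP on the relevant support.
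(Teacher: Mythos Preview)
Your outline matches the paper's argument in structure---reduce to a likelihood-ratio comparison of the two posterior marginals of $\theta_t$ via \Cref{lem:MLRP-mp} and the proof of \Cref{lem:MLRPone}, then invoke a monopoly-price monotonicity result, handling the boundary cutoffs through the D1 off-path beliefs. There is, however, one point where your argument and the paper's diverge in a way that matters.

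You assert that under \Cref{asm:ar} the likelihood-ratio dominance is \emph{strict}, and then apply \Cref{lem:MLRPtwo}, which is stated for $\succ_L$. But \Cref{lem:MLRP-mp} only delivers that $f(\theta_t|\theta_1')/f(\theta_t|\theta_1)$ is \emph{non-decreasing}: the innovation distributions $G_s$ are assumed log-concave, not strictly log-concave, so the ratio $g(\theta_t-\alpha\theta_1')/g(\theta_t-\alpha\theta_1)$ need not be strictly monotone (e.g., a uniform innovation makes it locally constant). Thus the strict MLRP you need for \Cref{lem:MLRPtwo} is not guaranteed by \Cref{asm:ar}. The paper sidesteps this by observing that, since both $F_t(\cdot\mid\theta_1\geq k)$ and $F_t(\cdot\mid\theta_1<k)$ are log-concave, each admits a \emph{unique} monopoly price; with uniqueness in hand, weak hazard-rate dominance (implied by weak $\succeq_L$) already forces $p^A_t\geq p^R_t$, and the monotone-selection theorem of \Cref{lem:MLRPtwo} is not needed. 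You already note the uniqueness from log-concavity, so the repair is immediate---just replace the appeal to strict $\succ_L$ and \Cref{lem:MLRPtwo} by the direct argument from weak hazard-rate dominance and unique maximizers.
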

\begin{proof}
Because $F_t(\, \cdot\, |\theta_1 < k)$ and $F_t(\, \cdot\, |\theta_1 \geq k)$ are both log-concave, both admit a unique optimal monopoly price.\footnote{To see this, note that if $F$ is log-concave, then $\log(p) + \log(1 - F(p))$ is strictly concave and admits a unique maximizer.} Then, we no longer need the monotone selection theorem used in \Cref{lem:MLRPtwo}. It suffices to show that $F_t(\, \cdot\, |\theta_1 \geq k)$ weakly dominates $F_t(\, \cdot\,  | \theta_1 < k)$ in hazard rate order, which is implied by weak likelihood-ratio dominance $F_t(\, \cdot\, | \theta_1 \geq k) \succeq_L F_t(\, \cdot\,  | \theta_1 < k)$. By the proof of \Cref{lem:MLRPone}, it suffices to show $(\theta_1, \theta_t)$ satisfies the weak MLRP condition, which is established in \Cref{lem:MLRP-mp}. The case of $k = \underline{\theta}_1$ or $\overline{\theta}_1$ follows by a similar argument as in the proof of \Cref{prop:pricemon}. 
\end{proof}

\begin{lemma} \label{lem:support} Suppose $\psi(\theta_1, \theta_2)$ is non-decreasing in $\theta_1$ and strictly increasing in $\theta_2$. Then, for any $k \in [\underline{\theta}_1, \overline{\theta}_1]$, and any optimal solution $(p_A, p_R)$ to the following problem 
\[ \max_{p_A, p_R: p_A \geq p_R} \E[\psi(\theta_1, \theta_2) (\1_{\theta_1 \geq k} \1_{\theta_2 \geq p_A} + \1_{\theta_1 < k} \1_{\theta_2 \geq p_R})] -\E[(\theta_2 - p_R)_+|\underline{\theta}_{1}]\,,\]
there exists an optimal solution $(\tilde{p}_A, \tilde{p}_R)$ to the same problem such that (i) $\tilde{p}_A = \tilde{p}_R$ and (ii) \[\1_{\theta_1 \geq k}\1_{\theta_2 \geq p_A} + \1_{\theta_1 < k}\1_{\theta_2 \geq p_R} \eqas \1_{\theta_1 \geq k}\1_{\theta_2 \geq \tilde{p}_A} + \1_{\theta_1 < k}\1_{\theta_2 \geq  \tilde{p}_R}\,.\] 
\end{lemma}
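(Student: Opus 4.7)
The plan is to follow the three-case decomposition used in the proof of \Cref{claim:key}. Setting $z := D(k)$, I would consider the ordering of $p_A$, $p_R$, and $z$, and in each case choose the candidate $(\tilde{p}_A, \tilde{p}_R)$ from that proof: $(z,z)$ when $p_R < z < p_A$, $(p_R, p_R)$ when $z \leq p_R < p_A$, and $(p_A, p_A)$ when $p_R < p_A \leq z$. By the argument of \Cref{claim:key}, in each case the new pair achieves at least the same objective as $(p_A, p_R)$, so it is itself optimal, and the first requirement $\tilde{p}_A = \tilde{p}_R$ is satisfied. If $p_A = p_R$ already, take $\tilde{p}_A = \tilde{p}_R = p_A$ and (ii) is trivial.

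For the second requirement, that the two allocation rules coincide almost surely, the \emph{strict} monotonicity of $\psi$ in $\theta_2$ is the key ingredient. I would rewrite the difference in objectives as a sum of non-negative terms corresponding to the change on $\{\theta_1 \geq k\}$, the change on $\{\theta_1 < k\}$, and the change in the outside-option term $-\E[(\theta_2 - p_R)_+|\underline{\theta}_1]$. Optimality of $(p_A, p_R)$ forces the total to be zero, and hence each non-negative summand to be zero. I would then show that the integrand is strictly signed on the relevant region: on $\{\theta_1 \geq k,\, \theta_2 > z\}$, strict monotonicity in $\theta_2$, monotonicity in $\theta_1$, and the definition $z = D(k)$ give $\psi(\theta_1, \theta_2) > \psi(\theta_1, z) \geq \psi(k, z) \geq 0$; on $\{\theta_1 < k,\, \theta_2 < z\}$, since $D$ is non-increasing we have $D(\theta_1) \geq z > \theta_2$, so $\psi(\theta_1, \theta_2) < 0$. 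A zero-valued integral of a strictly signed function therefore forces the underlying set to have probability zero, and these sets are precisely the symmetric differences of the two allocation regions.

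Concretely, in Case (i) the gain from moving to $(z, z)$ decomposes as
\[\E[\psi(\theta_1,\theta_2)\,\1_{\theta_1 \geq k,\,z \leq \theta_2 < p_A}] + \E[-\psi(\theta_1,\theta_2)\,\1_{\theta_1 < k,\,p_R \leq \theta_2 < z}] + \big(\E[(\theta_2-p_R)_+|\underline{\theta}_1] - \E[(\theta_2-z)_+|\underline{\theta}_1]\big),\]
and each summand must vanish, yielding $\P(\theta_1 \geq k,\, z < \theta_2 < p_A) = 0$ and $\P(\theta_1 < k,\, p_R \leq \theta_2 < z) = 0$, which are exactly the regions on which the two allocation rules disagree. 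Cases (ii) and (iii) each reduce to a single non-negative term of the same form and are handled identically. The main delicate point is that the strict monotonicity of $\psi$ in $\theta_2$ hypothesized here (and not in \Cref{asm:vvfunction}) is genuinely needed to upgrade $\psi \geq 0$ to $\psi > 0$ on the open regions that matter; without it, almost-sure allocation equivalence can fail even though the weaker \Cref{claim:key} still applies.
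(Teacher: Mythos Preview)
Your proposal is correct and follows essentially the same route as the paper: both use the three-case split according to the position of $z = D(k)$ relative to $p_R < p_A$, pick the same candidate $(\tilde p_A,\tilde p_R)$ in each case, and then exploit strict monotonicity of $\psi$ in $\theta_2$ to force the regions on which the two allocation rules differ to have measure zero. The only cosmetic difference is that the paper changes $p_R$ and $p_A$ one at a time (arguing ``if the set had positive measure, raising $p_R$ to $z$ would strictly improve the objective, contradicting optimality''), whereas you move to $(\tilde p_A,\tilde p_R)$ in one step and decompose the total gain into three nonnegative summands that must each vanish.
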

\begin{proof}
Let $D(\theta_1)$ be defined as in \Cref{subsec:proof} and $z := D(k)$. Note that $D(\cdot)$ is non-increasing. Fix any optimal solution with $p_A > p_R$. Suppose $p_R < z < p_A$. Then, the set of types $[\underline{\theta}_1, k] \times [p_R, z]$ must have measure $0$ because otherwise increasing $p_R$ up to $z$ strictly increases the objective. Similarly, the set of types $[k, \overline{\theta}_1] \times [z, p_A]$ must also have measure $0$. Thus, the second-period allocation rule $\1_{\theta_1 \geq k}\1_{\theta_2 \geq p_A}  +  \1_{\theta_1 < k}\1_{\theta_2 \geq p_R} \eqas \1_{\theta_2 \geq z}$. The same argument holds for the other two cases $z \leq p_R < p_A$ and $p_R < p_A \leq z$ as well.
\end{proof}

\subsection{Proof of Part (i) of \Cref{thm:main} and Part (i) of \Cref{thm:comp}}
\label{app:existence}

As explained in \Cref{rmk:d1} and shown formally in \Cref{app:d1}, it suffices to show the existence of a PBE with a particular class of off-path beliefs. 
\begin{defn}\label{defn:PBEstar}
A PBE-$^\star$ is a perfect Bayesian equilibrium with the following restrictions:
\begin{itemize}
    \item After observing an off-path rejection in period $t$ under belief $\mu_t$ on $\theta_t$, the seller believes that the period-$t$ type of the buyer is $\min \{\supp(\mu_t)\}$.
    \item After observing an off-path acceptance in period $t$ under belief $\mu_t$ on $\theta_t$, the seller believes that the period-$t$ type of the buyer is $\max \{\supp(\mu_t)\}$.
\end{itemize}
\end{defn}

In what follows, we show the existence of a PBE-$^\star$ for the game described in \Cref{sec:comps}, which then implies part (i) of \Cref{thm:main} and part (i) of \Cref{thm:comp}.  

Consider the following auxiliary complete information one-shot game parameterized by $p$ with $3$ players: $B$, $S_A$, $S_R$.
\noindent
Player $B$ chooses $k \in [\underline{\theta}_1, \overline{\theta}_1]$ to maximize
\[\E[\{(\theta_1 - p) + \delta (\theta_2 - p_A)_+\} \1_{\theta_1 \geq k}|x_1 = 1] + \E[\delta (\theta_2 - p_R)_+ \1_{\theta_1 \leq k}|x_1 = 0]\,.\]
Player $S_A$ chooses $p_A \in [\underline{\theta}_2, \overline{\theta}_2]$ to maximize 
\[ (1 - F_2(p_A | \theta_1 \geq k, x_1 = 1)) p_A \,.\]
Player $S_R$ chooses $p_R \in [\underline{\theta}_2, \overline{\theta}_2]$ to maximize
\[ (1 - F_2(p_R | \theta_1 \leq k, x_1 = 0)) p_R \,.\]
Note that this is a continuous game. By \citet{Glicksberg1952}, there exists a mixed-strategy Nash equilibrium, denoted by measures $(\eta^p_B, \eta^p_A, \eta^p_R)$. 

By \Cref{lem:h3}, we know that Player $B$ will not mix. Thus we can identify the equilibrium strategy $\eta^p_B$ by a cutoff $k^p$. For any fixed $p$, let $\mathcal{E}^p$ be the set of equilibrium tuples $(k^p, \pi^p_A, \pi^p_R)$, where $\pi^p_A,\pi_R^p$ are equilibrium payoffs of players $S_A$ and $S_R$. We claim that $\mathcal{E}^p$ is a closed and bounded set in $\R^{3}$. Clearly, it is bounded. To show it is closed, take any sequence
$(k^p_n, \pi^p_{A, n}, \pi^p_{R, n}) \in \mathcal{E}^p$ converging to  $(k^p, \pi^p_{A}, \pi^p_{R})$. Let $(\eta^p_{B, n}, \eta^p_{A, n}, \eta^p_{R, n})$ be the corresponding equilibrium points. Note that this family is tight (each equilibrium point is viewed as a product measure on a compact subset of $\R^3$). Therefore, by Prokhorov's Theorem, this family converges weakly along a subsequence to some $(\eta^p_{B}, \eta^p_{A}, \eta^p_{R})$. By Theorem 2 in \citet{Milgrom1985}, $(\eta^p_{B}, \eta^p_{A}, \eta^p_{R})$ is an equilibrium point. Since the utility functions are bounded and continuous, the corresponding tuple for this equilibrium is $(k^p, \pi^p_{A}, \pi^p_{R})$ and therefore $(k^p, \pi^p_{A}, \pi^p_{R}) \in \mathcal{E}^p$. Now define the function 
\[G(k, \pi_A, \pi_R; p) := (p + \delta \pi_A)(1 - F_1(k)) + \delta \pi_R F_1(k)\]
which is jointly continuous in $(k, \pi_A, \pi_R)$ and hence attains its maximum on the compact set $\mathcal{E}^p$. Let 
\[H(p) := \max_{(k, \pi_A, \pi_R) \in \mathcal{E}^p} G(k, \pi_A, \pi_R; p)\,.\]
We claim that $H(p)$ is upper semi-continuous. To see this, fix any $p_0$. Let $\{p_n\}_n$ converging to $p_0$ be a sequence along which we obtain $\displaystyle \limsup_{p \rightarrow p_0} H(p_0)$. Let \[(k^*_n, \pi^*_{A,n}, \pi^*_{R,n}) \in \argmax_{(k, \pi_A, \pi_R) \in \mathcal{E}^{p_n}}G(k, \pi_A, \pi_R; p_n)\,.\]
Let $(\eta^*_{B, n}, \eta^*_{A,n}, \eta^*_{R,n})$ be the corresponding equilibrium points. By the same argument as before, along a subsequence we know that $(\eta^*_{B, n}, \eta^*_{A,n}, \eta^*_{R,n})$ converges weakly to some $(\eta^*_{B,}, \eta^*_{A}, \eta^*_{R})$, which by \citet{Milgrom1985} is an equilibrium point of the game with parameter $p_0$. Then as before $(k^*_n, \pi^*_{A,n}, \pi^*_{R,n})$ also converges (along the subsequence) to $(k^*, \pi^*_{A}, \pi^*_{R})$ given by the equilibrium $(\eta^*_{B,}, \eta^*_{A}, \eta^*_{R})$.\footnote{One can view $(\eta^*_{B, n}, \eta^*_{A,n}, \eta^*_{R,n}, p_n)$ as a product measure on some compact subset of $\R^4$, and the claim follows by observing that the utility functions are bounded and jointly continuous in the actions and $p$.} By continuity of $G$, 
\[\limsup_{p \rightarrow p_0} H(p) = \lim_{n \rightarrow \infty} G(k^*_{n}, \pi^*_{A, n}, \pi^*_{R, n}; p_n) = G(k^*, \pi^*_A, \pi^*_R; p_0) \leq H(p_0) \]
where the last inequality is due to $(k^*, \pi^*_A, \pi^*_R)\in \mathcal{E}^{p_0}$, proving the claim.

Since $H$ is upper semi-continuous, and the set $[-(\overline{\theta}_1 + \overline{\theta}_2), \overline{\theta}_1 + \overline{\theta}_2]$ is compact, there exists a $p^*$ that maximizes $H(p)$. We now recover a PBE-$^\star$ of the original game. Let the seller offer $p^*$ in period $1$. For any $p$ offered by the seller, let $(k^p, \pi^p_{A}, \pi^p_{R})$ be a tuple in $\mathcal{E}^p$ that maximizes $G(\cdot; p)$. Let $(k^p, \eta^p_{A}, \eta^p_{R})$ be the corresponding equilibrium point. Let the buyer use strategy $x_1(\theta_1, p)= \1_{\theta_1 \geq k^p}$. Following histories $A, R$, the seller offers prices according to the mixed strategies $\eta^p_{A}, \eta^p_{R}$, respectively. Let her beliefs be $\mu_A = F_2( \ \cdot \ |\theta_1 \geq k^p)$ and $\mu_R = F_2(\ \cdot \ |\theta_1 \leq k^p)$. The buyer accepts in the second period if and only if his type is above the price.

By inspection, this is a PBE of the original game. Note that the construction guarantees that whenever $k = \overline{\theta}_1$, $\eta_A$ maximizes the seller's payoff with respect to $F_2(\, \cdot\,  | \theta_1 = \overline{\theta}_1, x_1 = 1)$ and whenever $k = \underline{\theta}_1$, $\eta_R$ maximizes the seller's payoff with respect to $F_2(\, \cdot\,  | \theta_1 = \underline{\theta}_1, x_1 = 0)$. Thus, the PBE is in fact a PBE-$^\star$.

\subsection{Proof of Part (i) of \Cref{thm:multi}} \label{app:existence-mp}

We show that there exists a PBE-$^\star$ of the subgame after the seller chooses not to commit in the first period, which immediately implies that the whole game also has a PBE-$^\star$ (see \Cref{app:existence} and \Cref{defn:PBEstar} for PBE-$^\star$). 
The proof proceeds in two steps. 

\paragraph{Step 1:} 
 We first show that the following game has a mixed-strategy PBE-$^\star$: for a fixed $T\geq2$, the seller posts a price $p$ in period $1$, and the buyer decides whether to accept; following histories $A, R$, the seller commits to prices $\{p^A_t\}_{t=2}^T, \{p^R_t\}_{t=2}^T$, respectively. The payoffs and the evolution of buyer's types are the same as in \Cref{sec:multi-mp}. 

\begin{lemma}\label{lemma:aux-mp}
There exists a PBE-$^\star$ of this game.
\end{lemma}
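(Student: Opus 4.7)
The plan is to generalize the construction from \Cref{app:existence} to the multi-period setting. Fix $T\geq 2$ and consider the auxiliary complete-information one-shot game parameterized by the period-$1$ price $p$, with three players $B$, $S_A$, and $S_R$. Player $B$ chooses a cutoff $k\in[\underline{\theta}_1,\overline{\theta}_1]$; Players $S_A$ and $S_R$ choose price vectors $\mathbf{p}^A=(p^A_2,\dots,p^A_T)$ and $\mathbf{p}^R=(p^R_2,\dots,p^R_T)$ in the compact product of type supports $\prod_{t=2}^T[\underline{\theta}_t,\overline{\theta}_t]$. Their payoffs are the natural continuations: Player $B$'s payoff is
\[
\E\Big[\Big\{(\theta_1-p)+\sum_{t=2}^T\delta^{t-1}(\theta_t-p^A_t)_+\Big\}\1_{\theta_1\geq k}\Big]+\E\Big[\sum_{t=2}^T\delta^{t-1}(\theta_t-p^R_t)_+\1_{\theta_1<k}\Big];
\]
Player $S_A$'s payoff is $\E[\sum_{t=2}^T\delta^{t-1}p^A_t\1_{\theta_t\geq p^A_t}\mid\theta_1\geq k]$, and Player $S_R$'s payoff is the analogous expression with the conditioning event $\{\theta_1<k\}$. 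The strategy spaces are compact and convex; continuity of all three payoffs in the joint profile follows from absolute continuity of $F_1$ and continuous differentiability of the conditionals (\Cref{sec:model}), plus dominated convergence. Hence \citet{Glicksberg1952} yields a mixed-strategy Nash equilibrium $(\eta^p_B,\eta^p_A,\eta^p_R)$.

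By \Cref{lem:threshold-mp}, for any realizations of the commitment prices the buyer's best response is a threshold rule, so $\eta^p_B$ concentrates on a single cutoff $k^p$. Define $\mathcal{E}^p$ as the set of equilibrium triples $(k^p,\pi^p_A,\pi^p_R)$, where $\pi^p_A,\pi^p_R$ are equilibrium payoffs of $S_A$ and $S_R$. Following the argument in \Cref{app:existence} verbatim — tightness of equilibrium measures on a compact product, Prokhorov's theorem, and Theorem 2 of \citet{Milgrom1985} applied to the continuous auxiliary game — $\mathcal{E}^p$ is a nonempty closed bounded subset of $\R^3$. Define
\[
G(k,\pi_A,\pi_R;p):=\big(p+\delta\pi_A\big)(1-F_1(k))+\delta\pi_RF_1(k),\qquad H(p):=\max_{(k,\pi_A,\pi_R)\in\mathcal{E}^p}G(k,\pi_A,\pi_R;p),
\]
which is the seller's overall expected revenue under the equilibrium selection. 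The same Prokhorov--Milgrom--Weber argument as in \Cref{app:existence} shows $H$ is upper semi-continuous in $p$, so it attains a maximum at some $p^*$ on the compact interval of admissible first-period prices.

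Pick $(k^{p^*},\pi_A^{p^*},\pi_R^{p^*})\in\mathcal{E}^{p^*}$ attaining $H(p^*)$ with associated mixed strategies $(\eta^{p^*}_A,\eta^{p^*}_R)$, and pick any selection $(k^p,\eta^p_A,\eta^p_R)$ from $\mathcal{E}^p$ for every off-path $p$. Recover a strategy profile of the original subgame as follows: the seller posts $p^*$ in period $1$ and chooses not to commit; after any $p$ the buyer uses $x_1(\theta_1,p)=\1_{\theta_1\geq k^p}$; after $A$ (resp.\ $R$) the seller commits to $\mathbf{p}^A\sim\eta^p_A$ (resp.\ $\mathbf{p}^R\sim\eta^p_R$); the buyer then accepts in period $t\geq 2$ iff $\theta_t\geq p^A_t$ (resp.\ $p^R_t$). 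Seller beliefs are $F_2(\cdot\mid\theta_1\geq k^p)$ after $A$ and $F_2(\cdot\mid\theta_1<k^p)$ after $R$. By construction of the auxiliary game, all continuation play is mutually sequentially rational and beliefs are consistent wherever Bayes applies. When $k^p=\underline{\theta}_1$ the rejection history is off-path, and because $S_R$'s auxiliary payoff is still evaluated under $F_2(\cdot\mid\theta_1=\underline{\theta}_1)$ in the limit, the PBE-$^\star$ belief restriction of \Cref{defn:PBEstar} is satisfied; the case $k^p=\overline{\theta}_1$ is symmetric. Thus we obtain a PBE-$^\star$ of the subgame, which gives the lemma.

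The main obstacle is the upper semi-continuity of $H$: one must verify that as $p_n\to p_0$ the equilibrium selections $(\eta^{*}_{B,n},\eta^{*}_{A,n},\eta^{*}_{R,n})$ have a weak subsequential limit that is itself an equilibrium at $p_0$, and that the associated payoff tuple converges to one in $\mathcal{E}^{p_0}$. This is the multi-dimensional analogue of the argument in \Cref{app:existence}; it goes through because the strategy spaces remain jointly compact in $\R\times\R^{T-1}\times\R^{T-1}$ and the payoff functionals are bounded and jointly continuous in $(k,\mathbf{p}^A,\mathbf{p}^R,p)$, so \citet{Milgrom1985}'s equilibrium-closure result applies without modification.
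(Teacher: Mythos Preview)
Your proof is correct and follows the paper's approach essentially verbatim: the paper sets up the same three-player auxiliary game with $S_A$ and $S_R$ choosing price vectors in $\prod_{t=2}^T[\underline{\theta}_t,\overline{\theta}_t]$, then states that ``the rest of proof is virtually the same as in \Cref{app:existence} and thus omitted.'' You have simply spelled out those omitted details (Glicksberg, Prokhorov/Milgrom--Weber for closedness of $\mathcal{E}^p$, upper semi-continuity of $H$, and the PBE-$^\star$ recovery), which go through for the obvious reason that compactness and continuity are preserved when the scalar $p_A,p_R$ are replaced by finite-dimensional vectors. One cosmetic point: the paper uses weak inequalities $\1_{\theta_1\leq k}$ in the auxiliary game precisely so that at the boundary $k=\underline{\theta}_1$ the conditioning event degenerates to $\{\theta_1=\underline{\theta}_1\}$ by convention, which is cleaner than your ``in the limit'' phrasing, but the content is identical.
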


\begin{proof}

Consider the following auxiliary complete information one-shot game with $3$ players: $B$, $S_A$ , $S_R$. Player $B$ chooses $k \in [\underline{\theta}_1, \overline{\theta}_1]$ to maximize  
\[\E\Big[\big \{(\theta_1 - p) + \sum_{t=2}^T \delta^{t-1} (\theta_t - p^A_{t})_+ \big\} \1_{\theta_1 \geq k}\Big ] + \E\Big[\sum_{t=2}^T \delta^{t-1}(\theta_t - p^R_{t})_+ \1_{\theta_1 \leq k}\Big]\,.\]
Player $S_A$ chooses $\{p^A_t\}_{t=2}^T \in \displaystyle\prod_{t=2}^T[\underline{\theta}_t, \overline{\theta}_t]$ to maximize 
\[\sum_{t=2}^T\delta^{t-2}(1 - F_t(p^A_{t} | \theta_1 \geq k)) p^A_t \,.\]
Player $S_R$ chooses $\{p^R_t\}_{t=2}^T \in \displaystyle\prod_{t=2}^T[\underline{\theta}_t, \overline{\theta}_t]$ to maximize
\[\sum_{t=2}^T\delta^{t-2}(1 - F_t(p^R_t | \theta_1 \leq k)) p^R_t \,.\] The rest of proof is virtually the same as in \Cref{app:existence} and thus omitted. 
\end{proof}

\paragraph{Step 2:} We now consider the original subgame and show the following by induction.  
\begin{lemma}
The subgame after the seller chooses not to commit has a PBE-$^\star$. 
\end{lemma}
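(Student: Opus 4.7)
The plan is to induct on the remaining horizon $T$. The base case $T=2$ is immediate: in period $2$ there is only one good left, so commitment to a one-element price sequence is outcome-equivalent to simply posting a price, and the subgame coincides exactly with the game analyzed in \Cref{lemma:aux-mp} with $T=2$, which already admits a PBE-$^\star$.

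For the inductive step, suppose the claim holds for all horizons strictly less than $T$. After the seller posts the period-$1$ price $p$ and the buyer plays a threshold $k$ (\Cref{lem:threshold-mp} guarantees this), her posterior over $\theta_1$ is a truncation of $F_1$. Since log-concavity is preserved under truncation, convolution, and positive scaling, the joint law of $(\theta_2,\ldots,\theta_T)$ under either posterior still satisfies \Cref{asm:ar}. In the period-$2$ subgame the seller now has two options: commit, in which case \Cref{lemma:aux-mp} (applied to the $T$-horizon with the truncated prior) produces a PBE-$^\star$; or not commit, in which case the inductive hypothesis (applied to the remaining $(T-1)$-period subgame) produces a PBE-$^\star$. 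For each posterior I would select the option with the higher continuation revenue, yielding continuation value functions $V_A(k), V_R(k)$.

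With those continuations fixed, I would mirror the construction in \Cref{app:existence} and \Cref{lemma:aux-mp}. Define the auxiliary complete-information one-shot three-player game parameterized by $p$: player $B$ chooses $k\in[\underline{\theta}_1,\overline{\theta}_1]$; players $S_A,S_R$ choose either a commitment price sequence or the non-commitment continuation, and are evaluated using $V_A(k)$ and $V_R(k)$ together with the buyer's expected surplus from those continuations. \citet{Glicksberg1952} yields a mixed Nash equilibrium; Prokhorov plus the upper hemicontinuity result of \citet{Milgrom1985} shows the equilibrium correspondence has compact values at each $p$; selecting the equilibrium that maximizes the seller's period-$1$ payoff produces a function $H(p)$, which I would show is upper semi-continuous as in \Cref{app:existence}. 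A maximizer $p^*$ exists by compactness, and the associated cutoff $k^{p^*}$, continuation choices, and posteriors $\mu_A=F_2(\,\cdot\,\mid\theta_1\geq k^{p^*})$ and $\mu_R=F_2(\,\cdot\,\mid\theta_1\leq k^{p^*})$ (with the appropriate degenerate limits when $k^{p^*}\in\{\underline{\theta}_1,\overline{\theta}_1\}$) assemble into a PBE-$^\star$ of the full subgame.

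The main obstacle is securing upper semi-continuity of $V_R(k)$ and $V_A(k)$ in $k$, because the inductive hypothesis only delivers existence of some PBE-$^\star$ and not a continuous selection. To handle this I would strengthen the inductive statement: carry along the property that, for each truncation parameter $k$, the set of PBE-$^\star$ payoffs of the $(T-1)$-period subgame at the truncated posterior is nonempty and the value obtained by selecting a maximal payoff is upper semi-continuous in $k$. Since truncations of a log-concave $F_1$ depend continuously in total variation on $k$, since \Cref{asm:ar} is preserved across the induction, and since utilities are bounded and continuous, the Prokhorov–\citet{Milgrom1985} argument from \Cref{app:existence} propagates this stronger statement to horizon $T$, closing the induction.
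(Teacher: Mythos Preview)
Your approach differs substantially from the paper's and, as written, has a real gap.

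The paper's proof exploits part (ii) of \Cref{thm:multi} inside the existence argument. After the period-$1$ price $p$ and action $x_1$, the posterior on $(\theta_2,\dots,\theta_T)$ satisfies \Cref{asm:ar}; part (ii) applied to the $(T-1)$-period continuation says every PBE-D (hence every PBE-$^\star$) of that continuation is outcome-equivalent to committing to the monopoly prices. Consequently the seller \emph{weakly prefers} to commit in period $2$. This lets the paper put commitment on the equilibrium path and relegate ``not committing in period $2$'' to an off-path node. On path, the entire construction is exactly \Cref{lemma:aux-mp}: the auxiliary three-player game in which $S_A,S_R$ pick committed price sequences. Off path, the inductive hypothesis furnishes a PBE-$^\star$ of the non-commitment subgame, but since this node is never reached, no continuity of that continuation in $k$ is needed---you only need existence at each posterior.

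You instead try to let $S_A,S_R$ choose endogenously between ``a commitment price sequence'' and ``the non-commitment continuation,'' and then run Glicksberg on that enlarged auxiliary game. The obstacle you flag is genuine, and your proposed fix does not close it. First, the inductive hypothesis alone does not give you a \emph{unique} non-commitment continuation: without invoking outcome equivalence (i.e., part (ii)), the $(T-1)$-period subgame may have several PBE-$^\star$ with different buyer continuation payoffs, so player $B$'s objective in the auxiliary game is not even well-defined until you specify a selection. Second, your strengthened induction tracks upper semicontinuity of the \emph{seller's} maximal value $V_A(k),V_R(k)$, but player $B$'s payoff depends on the \emph{buyer's} continuation utilities $U^A(\theta_1,k),U^R(\theta_1,k)$, which must be jointly continuous in $(\theta_1,k)$ for Glicksberg to apply; you have not argued this, and it does not follow from u.s.c.\ of the seller's value. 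Third, \citet{Milgrom1985} concerns Nash equilibria of one-shot continuous games; extending upper hemicontinuity to the PBE-$^\star$ correspondence of a multi-stage game with belief restrictions is an additional step you have not supplied.

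The clean fix is precisely what the paper does: invoke part (ii) for the continuation, conclude commitment is weakly optimal at period $2$, and thereby reduce the on-path auxiliary game to the one already handled in \Cref{lemma:aux-mp}. Then the inductive hypothesis is used only to populate the off-path non-commitment node, where no continuity is required.
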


\begin{proof}
\textbf{Base case: } When $T = 1$, the claim holds trivially. 

\textbf{Inductive step: } Using the objects in \Cref{lemma:aux-mp} (see \Cref{app:existence} for the notation), we construct the strategies for the first two periods. Let the seller offer $p^*$ in period $1$.  At any history, after the seller offers some price $p$ in period $1$, the buyer responds according to the threshold rule with threshold $k^p$. Then following histories $A,R$, the seller offers subsequent prices according to the distributional strategies $\eta^p_A, \eta^p_R$ respectively; let her beliefs on $(\theta_1, \dots, \theta_T)$ be $F(\ \cdot \ |\theta_1 \geq k^p), F(\ \cdot \ |\theta_1 \leq k^p)$ respectively. The buyer accepts the offer in period $t$ whenever the price in period $t$ is weakly below $\theta_t$ for all $2 \leq t \leq T$. 

To complete the construction, consider the off-path history where the seller chooses not to commit in period $2$. Note that the seller's posterior belief about $(\theta_2, \dots, \theta_T)$ satisfies \Cref{asm:ar} after any history $(p, x_1)$ (since the buyer's first-period strategy is a threshold rule). Thus, by the inductive hypothesis, there exists a PBE-$^\star$ for the $(T-1)$-period game in which the buyer's type process is the posterior process $(\tilde{\theta}_2, \dots, \tilde{\theta}_T)$ (conditional on $\theta_1 \geq k^p$ or $\theta_1 \leq k^p$), after any history $(p, x_1)$ and the seller choosing not to commit. Use this PBE-$^\star$ to complete the construction of the strategies for the seller and buyer. 

Given the seller's strategy, all types of the buyer are playing optimally by construction. By part (ii) of \Cref{thm:multi} applying to the game starting from period $2$, we know that after any history $(p, x_1)$ the seller weakly prefers committing to some $\{p_2, \dots, p_T\}$. Then by the construction of the objects in \Cref{lemma:aux-mp}, the seller has no profitable deviation and has the required off-path beliefs. This concludes the inductive step. 
\end{proof}

\section{D1 Criterion}  \label{app:d1}

In this appendix, we provide a notion of D1 criterion in the spirit of \citet{Banks1987} for our game. As noted in \Cref{sec:multi-mp}, the game defined in \Cref{sec:main} is a subgame of the modified multiple-period game defined in \Cref{sec:multi-mp}. Thus, for notational convenience, we define all our notions in the multiple-period model (and allow the transition kernels of the type process to depend on the consumption in the previous period to accommodate \Cref{sec:comps}). We also maintain the assumptions in \Cref{sec:comps} for $T = 2$ (which nest those in \Cref{sec:main}), and the assumptions in \Cref{sec:multi-mp} for $T > 2$. Recall that we use PBE, PBE-D, PBE-$^\star$ to refer to the standard notion of PBE, PBE with D1 refinement (to be defined), and PBE with the particular class of off-path beliefs as in \Cref{defn:PBEstar}.

\subsection{Definition of D1 Criterion}

Fix any PBE. In any period $t$ and at any buyer's history $\hat{h}_t$, a price $p_t$ is posted by the seller. The buyer of type $\theta^t := (\theta_1, \dots, \theta_t)$ takes an action $x_t \in \{0, 1\}$. The seller observes this action, updates her belief, and moves to the continuation game. Let $U^*_{t}(\theta^{t}, \hat{h}_t)$ be type $\theta^t$'s continuation utility in this PBE (including the payoff from this period). We suppress the dependency on $\hat{h}_t$ whenever it is clear. For a given (potentially mixed) strategy $\alpha$ of the seller over the future plays in the continuation game, let $U_t(\theta^{t}, \alpha, \hat{h}_t)$ be the continuation utility of type $\theta^t$ best responding to $\alpha$. As explained in \Cref{lem:threshold-mp}, the dependency on $\theta^t$ reduces to the dependency on $\theta_t$. The seller also only makes inferences about $\theta_t$ from $x_t$. 

Now fix any $\hat{h}_t$ at which the PBE prescribes all types of the buyer to take the same action. Let $\mu_t$ be the prior belief on $\theta_{t}$ prescribed by the PBE for the seller before observing the buyer's action.  Let $x_t$ be the off-path action that the buyer could take. Let $BR(\Theta_t', \hat{h}_t, x_t)$ denote the set of possible future plays $\alpha$ that are best responses to some buyer's strategy, under some belief $\nu_t$ on $\theta_t$ with the restriction that $\nu_t$ is only supported on $\Theta_t' \subseteq \supp(\mu_t)$. Now let 
\[D(\theta_t, \Theta_t' , x_t) := \Big\{ \alpha \in BR(\Theta_t', x_t) \text{ s.t. } U_{t}^*(\theta_{t}) < U_{t}(\theta_t, \alpha , x_t)\Big\} \]
denote the set of seller's possible strategies for the future play under some belief concentrated on $\Theta_t'$ that makes type $\theta_t$ strictly prefer $x_t$ to the equilibrium play. Similarly, define 
\[D^0(\theta_t, \Theta_t' , x_t) := \Big\{ \alpha \in BR(\Theta_t', x_t) \text{ s.t. } U_{t}^*(\theta_{t}) = U_{t}(\theta_t, \alpha , x_t)\Big\} \,.\]
Let $\hat{\Theta}_t = \supp(\mu_t)$.  A type $\theta_t$ is deleted if there is another type $\theta'_t \in \hat{\Theta}_t$ such that 
\[\Big[ D(\theta_t, \hat{\Theta}_t,x_t) \cup D^0(\theta_t, \hat{\Theta}_t,x_t)\Big] \subset D(\theta'_t, \hat{\Theta}_t,x_t)\,.\]
Let $\Theta^*_t(x_t)$ be the set of surviving types until this process stops. We say that the PBE does not survive the D1 criterion if there exists some $\theta_t \in \Theta^*_t(x_t)$ such that 
\[U^*_t(\theta_t) < \inf_{\alpha \in \mathcal{A}(\Theta^*_t(x_t), x_t)} U_t(\theta_t, \alpha, x_t)\]
where $\mathcal{A}(\Theta^*_t(x_t), x_t)$ is the set of seller's strategies for the future play that belong to a PBE-D of the continuation game with some prior belief about $\theta_t$ supported only on $\Theta^*_t(x_t)$. Note that this definition is recursive. In the last period, after the buyer's choice, the seller no longer makes any decision, and so this set is $\varnothing$. In period $T-1$, this set is the set of seller's strategies that belong to a standard PBE of the one-period continuation game. 

In a signaling game, the above definition reduces to the usual definition, because there is no continuation game. In particular, the set of mixed best responses $BR(\cdot)$ does not depend on what the sender does subsequently; the set of receiver's continuation equilibrium responses $\mathcal{A}(\cdot)$  reduces to the set of mixed best responses. 

\subsection{PBE-D and PBE-$^\star$}

In this section, we show that the PBE-D in our setting are essentially PBE-$^\star$. This provides a foundation for the particular class of off-path beliefs we focus on and confirms the intuition that the D1 criterion effectively requires the seller to believe the buyer is of the type that would benefit the most upon an unexpected deviation.

\begin{lemma}\label{lem:survivetype}
At any history $\hat{h}_t$, the set of D1 surviving types after an off-path rejection and acceptance are given by $\Theta^*_t(0) = \{\underline{\theta}_t(\mu_t)\}, \Theta^*_t(1) = \{\overline{\theta}_t(\mu_t)\}$, where $ \underline{\theta}_t(\mu_t) = \min\{\emph{supp}(\mu_t)\}, \overline{\theta}_t(\mu_t) = \max\{\emph{supp}(\mu_t)\}$ and $\mu_t$ is the seller's belief on $\theta_t$ at history $h_t$.
\end{lemma}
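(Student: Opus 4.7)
The plan is to leverage a strict single-crossing property of the buyer's deviation payoff in his current-period type, which is essentially a restatement of the Lipschitz estimates already used to establish the threshold rule. I describe the case of an off-path rejection ($x_t = 0$); the off-path acceptance case is fully symmetric.

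For any $\alpha \in BR(\hat{\Theta}_t, 0)$, define the deviation gain $G(\theta_t, \alpha) := U_t(\theta_t, \alpha, 0) - U^*_t(\theta_t)$. The first step is to show that for any pair $\theta_t < \theta_t'$ in $\hat{\Theta}_t$, $G(\theta_t', \alpha) - G(\theta_t, \alpha) \leq -\epsilon(\theta_t, \theta_t') < 0$ uniformly in $\alpha$. Expanding, the difference equals $-(\theta_t' - \theta_t) + [U_t^R(\theta_t', \alpha) - U_t^R(\theta_t, \alpha)] - [U_t^{A,*}(\theta_t') - U_t^{A,*}(\theta_t)]$, where $U_t^R$ and $U_t^{A,*}$ are the buyer's continuation values after a rejection under $\alpha$ and after an on-path acceptance, respectively. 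By integration by parts and the FOSD implied by \Cref{asm:mlrpx}, the middle bracket is at most $\delta(\E[\theta_{t+1}|\theta_t', x_t = 0] - \E[\theta_{t+1}|\theta_t, x_t = 0])$, which by \Cref{asm:Lipschitzx} is strictly less than $\theta_t' - \theta_t$; the last bracket is non-negative by monotonicity of the continuation value. Hence the whole expression is bounded above by $-\epsilon := -[(\theta_t' - \theta_t) - \delta(\E[\theta_{t+1}|\theta_t', x_t = 0] - \E[\theta_{t+1}|\theta_t, x_t = 0])] < 0$, a constant independent of $\alpha$. For the multi-period game, the identical conclusion follows from the 1-Lipschitz bound on continuation payoffs derived inductively in the proof of \Cref{lem:threshold-mp}, which applies to arbitrary seller strategies and relies on $\alpha_t \in (0, 1/(2\delta))$ under \Cref{asm:ar}.

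Given this uniform strict monotonicity, pick any $\theta_t' \in \hat{\Theta}_t$ with $\theta_t' > \underline{\theta}_t(\mu_t)$. Whenever $\alpha$ satisfies $G(\theta_t', \alpha) \geq 0$, we have $G(\underline{\theta}_t(\mu_t), \alpha) \geq \epsilon > 0$, so $\alpha \in D(\underline{\theta}_t(\mu_t), \hat{\Theta}_t, 0)$. This yields $D(\theta_t', \hat{\Theta}_t, 0) \cup D^0(\theta_t', \hat{\Theta}_t, 0) \subseteq D(\underline{\theta}_t(\mu_t), \hat{\Theta}_t, 0)$. To upgrade to the strict inclusion required by the D1 deletion rule, I would exhibit an $\alpha^\star \in BR(\hat{\Theta}_t, 0)$ with $G(\underline{\theta}_t(\mu_t), \alpha^\star) \in (0, \epsilon)$ — the uniform gap then forces $G(\theta_t', \alpha^\star) < 0$, so $\alpha^\star \in D(\underline{\theta}_t(\mu_t)) \setminus [D(\theta_t') \cup D^0(\theta_t')]$. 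Such an $\alpha^\star$ is produced by an intermediate-value argument on the seller's optimal continuation pricing as her belief slides from a point mass at $\overline{\theta}_t(\mu_t)$ (which drives $G(\underline{\theta}_t, \alpha)$ negative) to a point mass at $\underline{\theta}_t(\mu_t)$ (which drives it above $\epsilon$), using continuity of monopoly best responses in beliefs. Hence every higher type is eliminated in the first round of deletion, while $\underline{\theta}_t(\mu_t)$ itself is not eliminated because the reverse of the same comparison fails. Thus $\Theta^*_t(0) = \{\underline{\theta}_t(\mu_t)\}$. The case $\Theta^*_t(1) = \{\overline{\theta}_t(\mu_t)\}$ is identical once $\tilde G(\theta_t, \alpha) := U_t(\theta_t, \alpha, 1) - U^*_t(\theta_t)$ is shown to be strictly increasing in $\theta_t$ by the same Lipschitz estimate applied to the acceptance branch.

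The main obstacle is the upgrade from weak to strict inclusion. Weak inclusion is immediate from the Lipschitz estimates already in hand; strict inclusion requires a richness property of the best-response correspondence $BR(\hat{\Theta}_t, 0)$, namely that continuous variation of the seller's belief over $\hat{\Theta}_t$ induces continuous variation of her optimal continuation prices, and therefore of $G(\underline{\theta}_t(\mu_t), \cdot)$ across an interval straddling $(0, \epsilon)$. This parallels the continuity-and-compactness arguments used to establish existence in \Cref{app:existence} and \Cref{app:existence-mp}, and is where the machinery of those appendices is invoked implicitly.
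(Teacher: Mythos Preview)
Your weak-inclusion step is essentially the paper's: both reduce to strict monotonicity of $H(\theta_t,\alpha):=U^*_t(\theta_t)-U_t(\theta_t,\alpha,0)$ (your $-G$) in $\theta_t$, which follows from the Lipschitz estimates you cite.

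The genuine gap is in the strict-inclusion step. Your intermediate-value argument slides the seller's belief across point masses and appeals to ``continuity of monopoly best responses in beliefs'' to conclude that $G(\underline{\theta}_t(\mu_t),\cdot)$ sweeps through $(0,\epsilon)$. Two things go wrong. First, there is no reason the seller's \emph{optimal} continuation price under belief concentrated at $\underline{\theta}_t(\mu_t)$ drives $G(\underline{\theta}_t(\mu_t),\alpha)$ above $\epsilon$: the seller may still extract most of the lowest type's continuation surplus, so the range you need is not established. Second, the continuity claim is not delivered by the existence appendices; those give upper hemicontinuity of equilibrium correspondences, not a continuous selection of optimal prices, and ties can produce jumps.

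The paper sidesteps both issues with a single observation you are missing: $BR(\hat\Theta_t,0)$ is the \emph{entire} set of feasible seller continuation strategies, because if the buyer rejects all future offers then any price path is a best response for the seller. This lets one work with a one-dimensional family---sell all remaining items at a constant price $\hat p$---for which $H(\theta_t,\hat p)$ is manifestly continuous in $\hat p$, equals $U^*_t(\theta_t)\ge 0$ for $\hat p$ high, and is strictly negative for $\hat p$ low enough (negative prices are admissible). The intermediate-value theorem then yields $\hat p^*(\theta_t)$ with $H(\theta_t,\hat p^*(\theta_t))=0$, and the perturbation $\hat p^*(\theta_t)+\varepsilon$ is in $D(\theta_t',\hat\Theta_t,0)\setminus[D(\theta_t,\hat\Theta_t,0)\cup D^0(\theta_t,\hat\Theta_t,0)]$ for any $\theta_t'<\theta_t$, by strict monotonicity in the first argument and continuity in the second. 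Your sliding-belief construction is both more restrictive and harder to justify; the fix is to recognize the full flexibility of $BR$.
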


\begin{proof}
We prove the case for an unexpected off-path rejection. The other case follows by a symmetric argument. Note that the mixed best response set $BR(\cdot)$ for the seller contains all feasible future strategies since in the definition we allow for the buyer to use arbitrary strategy for the continuation play.\footnote{If the buyer rejects all future offers, then any price path of the seller is in the $BR(\cdot)$ set.} For any seller's strategy $\alpha$ fixed, by \Cref{lem:threshold-mp} (and \Cref{lem:h3} for the case of $T = 2$), we have that
\[H(\theta_t, \alpha) :=  U^*_{t}(\theta_t) - U_{t}(\theta_t, \alpha, 0)\]
is a strictly increasing and continuous function of $\theta_t$. This implies that for any $\hat{\Theta}_t$ and any $\theta_t' < \theta_t \in \hat{\Theta}_t$, 
\[\Big[ D(\theta_t, \hat{\Theta}_t, 0) \cup D^0(\theta_t, \hat{\Theta}_t, 0)\Big] \subseteq D(\theta'_t, \hat{\Theta}_t, 0)\,.\]
Our hope is to delete the types from the top to the bottom. This requires a strict set inclusion instead of the weak one. Consider the particular class of strategy prescribing the sale of all future items at a constant price $\hat{p}$. Note that $H(\theta_t, \cdot)$ is a continuous function of $\hat{p}$ and for high enough $H(\theta_t, \hat{p}) = U^*_t(\theta_t) \geq 0$. Moreover, for any $\hat{p}$ low enough (potentially negative), $H(\theta_t, \hat{p}) < 0$. Therefore, there exists $\hat{p}^*(\theta_t)$ such that $H(\theta_t,\hat{p}^*(\theta_t)) = 0$. This strategy is in the set  $D^0(\theta_t, \hat{\Theta}_t, 0)$  for any $\theta_t$  and $\hat{\Theta}_t$. Fix any $\theta_t$ and any $\theta_t' < \theta_t$. Consider the seller's strategy of selling all future items at price $\hat{p}^*(\theta_t) + \epsilon$ for $\epsilon > 0$ small enough. This strategy is not in $ D(\theta_t, \hat{\Theta}_t, 0) \cup D^0(\theta_t, \hat{\Theta}_t, 0)$ by construction, but this strategy would be in $D(\theta'_t, \hat{\Theta}_t, 0)$ for $\epsilon$ small enough. This is because $H(\cdot, \cdot)$ is strictly increasing in the first argument and continuous in the second. This argument then shows that for any $\hat{\Theta}_t$ and any $\theta'_t < \theta_t \in \hat{\Theta}_t$, 
\[\Big[ D(\theta_t, \hat{\Theta}_t, 0) \cup D^0(\theta_t, \hat{\Theta}_t, 0)\Big] \subset D(\theta'_t, \hat{\Theta}_t, 0)\,.\]
Iterating this gives $\Theta^*_t(0) = \{\underline{\theta}_t(\mu_t)\}$. 
\end{proof}

\begin{lemma}\label{lem:startod1}
Every PBE-$^\star$ is a PBE-D. 
\end{lemma}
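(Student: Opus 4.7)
The plan is to induct on the horizon $T$ and rely on \Cref{lem:survivetype} to identify the D1 surviving set at any off-path history with the same singleton---either $\{\underline{\theta}_t(\mu_t)\}$ or $\{\overline{\theta}_t(\mu_t)\}$---on which a PBE-$^\star$ already concentrates its off-path belief. Since a PBE-$^\star$ is by definition a PBE, only the D1 inequality at each off-path history needs to be checked. For the base case $T=1$, the seller takes no subsequent action after the buyer's period-$1$ response, so $\mathcal{A}(\Theta^*_1(x_1), x_1)$ collapses and the D1 condition reduces to $U^*_1(\theta_1) \geq x_1(\theta_1 - p_1)$, which is the buyer's own sequential rationality.

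For the inductive step, I would fix a PBE-$^\star$ of an arbitrary $T$-period game in our class, consider any period-$1$ off-path history with deviation $x_1$, and let $\alpha^{\star}$ denote the seller's prescribed continuation strategy from period $2$ onward. Under \Cref{asm:ar} (and the analogous distributional assumptions for $T=2$), the seller's posterior process after conditioning on $\theta_1 = \underline{\theta}_1(\mu_1)$ or $\theta_1 = \overline{\theta}_1(\mu_1)$ again satisfies all our assumptions---exactly the closure fact already used in the existence argument of \Cref{app:existence-mp}. Moreover, the PBE-$^\star$ off-path belief restriction in \Cref{defn:PBEstar} is period-local, so the restriction of the original profile to the continuation subgame is itself a PBE-$^\star$ of a $(T-1)$-period game. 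Invoking the inductive hypothesis, this restriction is a PBE-D, hence by \Cref{lem:survivetype} it belongs to $\mathcal{A}(\Theta^*_1(x_1), x_1)$, i.e., $\alpha^{\star} \in \mathcal{A}(\Theta^*_1(x_1), x_1)$.

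Given this inclusion, and using that sequential rationality of the buyer in the PBE-$^\star$ yields $U^*_1(\theta_1) \geq U_1(\theta_1, \alpha^\star, x_1)$ (the buyer could deviate to $x_1$ and face exactly $\alpha^\star$), I obtain
\[\inf_{\alpha \in \mathcal{A}(\Theta^*_1(x_1), x_1)} U_1(\theta_1, \alpha, x_1) \leq U_1(\theta_1, \alpha^{\star}, x_1) \leq U^*_1(\theta_1)\]
for the unique surviving $\theta_1$; so D1 holds at this period-$1$ history, and D1 at every later off-path history follows from the inductive hypothesis applied to the corresponding continuation subgame. The only delicate point is confirming that conditioning on the singleton prior $\{\underline{\theta}_1(\mu_1)\}$ or $\{\overline{\theta}_1(\mu_1)\}$ preserves both the distributional hypotheses and the period-local PBE-$^\star$ belief structure; both were already established earlier in the appendix, so the induction goes through without additional input.
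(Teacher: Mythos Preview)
Your proposal is correct and follows essentially the same approach as the paper: induct on $T$, invoke \Cref{lem:survivetype} to pin down the surviving singleton, observe that the seller's prescribed continuation play is part of a PBE-$^\star$ of a shorter game and hence (by the inductive hypothesis) lies in $\mathcal{A}(\Theta^*_t(x_t),x_t)$, and conclude via the chain $\inf_{\alpha\in\mathcal{A}}U_t(\theta_t,\alpha,x_t)\le U_t(\theta_t,\alpha^\star,x_t)\le U^*_t(\theta_t)$. The only cosmetic difference is that the paper verifies the D1 inequality directly at every history $\hat{h}_t$ (applying the inductive hypothesis to the game from period $t{+}1$), whereas you verify it only at period-$1$ histories and absorb all later periods into the inductive hypothesis via the period-$2$ continuation subgame; both organizations are valid and rest on the same two ingredients---\Cref{lem:survivetype} and closure of the distributional class under the relevant conditioning.
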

\begin{proof}
We prove this by induction on $T$. 

\textbf{Base case:} When $T = 1$, these two concepts are both equivalent to standard PBE.

\textbf{Inductive step:} Suppose for any $\tilde{T} \leq T - 1$ periods of our game, every PBE-$^\star$ is a PBE-D. Consider a $T$-period game and fix a PBE-$^\star$.  Fix any history $\hat{h}_t$ where all types are prescribed to accept. Consider the off-path deviation $x_t = 0$. Note that by \Cref{lem:survivetype}, we have $\Theta^*_t(0)= \underline{\theta}_t(\mu_t)$. By the definition of PBE, 
\[U^*_t(\underline{\theta}_t(\mu_t)) \geq U_t( \underline{\theta}_t(\mu_t) , \tilde{\sigma}^* , 0)\]
where $\tilde{\sigma}^*$ is the seller's equilibrium continuation play. By the definition of PBE-$^\star$, $\tilde{\sigma}^*$ is part of a PBE-$^\star$ for the continuation game under the belief concentrated on $\underline{\theta}_t(\mu_t)$. Since the continuation game has $\leq T - 1$ periods, by the inductive hypothesis,  $\tilde{\sigma}^* \in \mathcal{A}(\underline{\theta}_t(\mu_t), 0)$. Therefore, 
\[U^*_t(\underline{\theta}_t(\mu_t))  \geq  U_t( \underline{\theta}_t(\mu_t) , \tilde{\sigma}^* , 0) \geq\inf_{\alpha \in \mathcal{A}(\underline{\theta}_t(\mu_t), 0)} U_t(\underline{\theta}_t(\mu_t), \alpha, 0) \]
which shows that at any such history the PBE-$^\star$ passes the D1 test. The same argument holds for an unexpected acceptance. This proves the inductive step. 
\end{proof}

\begin{lemma}\label{lem:d1tostar}
For every PBE-D, there exists an outcome-equivalent PBE-$^\star$. 
\end{lemma}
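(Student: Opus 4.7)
The plan is to prove the claim by induction on the horizon $T$. The base case $T = 1$ is immediate: in a one-period game there are no continuation subgames, so PBE-D, PBE-$^\star$, and standard PBE all coincide, and in particular the PBE-$^\star$ off-path beliefs can be imposed without affecting any payoff.

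For the inductive step, we start with a PBE-D $(\sigma, \tau, \mu)$ of a $T$-period game and construct an outcome-equivalent PBE-$^\star$ by recursively modifying the continuation at each history where the buyer is about to act. By \Cref{lem:threshold-mp} the buyer uses a threshold rule with some cutoff $k$ there, and we split into two cases. When $k$ is interior, both actions are on-path, Bayes' rule pins down the seller's posterior, and each continuation is a PBE-D of a strictly shorter subgame; the inductive hypothesis then yields an outcome-equivalent PBE-$^\star$ that we splice in. When $k$ sits at an endpoint of the type support, one of the two actions is off-path and the D1 test applies; \Cref{lem:survivetype} identifies the surviving set as the singleton $\{\underline{\theta}_t(\mu_t)\}$ or $\{\overline{\theta}_t(\mu_t)\}$ demanded by the PBE-$^\star$ definition. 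Passing the D1 test gives $U^{*}_t \geq \inf_{\alpha \in \mathcal{A}(\Theta^{*}_t(x_t), x_t)} U_t(\cdot, \alpha, x_t)$ at the surviving type, and we would use a compactness argument patterned after \Cref{app:existence} (tightness plus Prokhorov, together with the upper semi-continuity of equilibrium payoffs from \citealt{Milgrom1985}) to upgrade this infimum to a minimum attained by some $\alpha^{*}$. The inductive hypothesis then replaces $\alpha^{*}$ by an outcome-equivalent PBE-$^\star$ $\tilde{\alpha}^{*}$ of the continuation subgame, which we splice in together with the point-mass off-path belief required by the PBE-$^\star$ definition.

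To close the argument, we would verify that the constructed profile is a PBE-$^\star$ outcome-equivalent to the original. Outcome equivalence on the equilibrium path is immediate by construction. For the buyer's incentives at the off-path deviation, we would invoke the strict monotonicity of $H(\theta_t, \alpha) := U^{*}_t(\theta_t) - U_t(\theta_t, \alpha, x_t)$ in $\theta_t$ proved inside \Cref{lem:survivetype}: since the surviving type has $H \geq 0$ under $\alpha^{*}$ (and hence also under the outcome-equivalent $\tilde{\alpha}^{*}$) and sits at the ``most eager to deviate'' endpoint of the on-path set, every other on-path type then has $H > 0$ and strictly prefers the prescribed action. The seller's incentives are preserved because outcome equivalence of each spliced subtree leaves her continuation utility unchanged at every history, so her original strategy remains a best response both on- and off-path.

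The step we expect to be the main obstacle is the compactness argument that promotes the D1 infimum to a minimum and delivers an actual $\alpha^{*}$ to substitute in. This requires $\mathcal{A}(\Theta^{*}_t(x_t), x_t)$ to be a closed subset of a compact space of mixed strategies (viewed as product measures on price paths), with continuous dependence of the relevant utilities on the strategy. The tightness-plus-Prokhorov template from \Cref{app:existence}, together with the upper semi-continuity results of \citet{Milgrom1985}, should carry over with only notational changes, but writing those details out carefully is the delicate part.
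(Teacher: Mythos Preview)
Your inductive scaffolding, use of \Cref{lem:threshold-mp} for the threshold rule, and use of \Cref{lem:survivetype} to collapse the surviving set to a singleton all match the paper. The substantive divergence is in how you attain the D1 infimum at the off-path node. You propose a compactness argument (tightness, Prokhorov, \citealt{Milgrom1985}) to show $\mathcal{A}(\Theta^*_t(x_t), x_t)$ is closed and that $U_t$ achieves its infimum there. The paper sidesteps this entirely. For $T=2$ it observes that the one-period continuation equilibria are parameterized by an optimal monopoly price and simply picks the highest one. For $T>2$ it invokes part (ii) of \Cref{thm:multi}, noting that its proof can be carried out for PBE-$^\star$ directly; this forces \emph{all} PBE-$^\star$ of the continuation game (under the degenerate belief on $\underline{\theta}_t(\mu_t)$ or $\overline{\theta}_t(\mu_t)$) to be outcome-equivalent, so the infimum over $\mathcal{B}$---and hence, by the inductive hypothesis, over $\mathcal{A}$---is a constant and is trivially attained by any continuation PBE-$^\star$ whatsoever.

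Your compactness route is not ``only notational changes'' from \Cref{app:existence}: that appendix establishes closure of Nash equilibria of a one-shot auxiliary game, whereas here you would need closure of the set of continuation PBE-D seller strategies in a multi-period game, where the D1 condition is itself a recursive inequality involving an infimum over continuation-equilibrium sets. Preserving that under weak limits is genuinely delicate, and the \citet{Milgrom1985} upper semi-continuity result does not speak to it. The paper's outcome-equivalence route both avoids this obstacle and is shorter; it is the missing ingredient you should reach for instead of compactness.
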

\begin{proof}
We prove this by induction on $T$. 

\textbf{Base case:} When $T = 1$, these two concepts are both equivalent to standard PBE. 

\textbf{Inductive step:} Suppose for any $\tilde{T} \leq T - 1$ periods of our game, for every  PBE-D, there exists an outcome-equivalent PBE-$^\star$. 

Consider a $T$-period game and fix a PBE-D.  Fix any history $\hat{h}_t$ where all types are prescribed to accept. Consider the off-path deviation $x_t = 0$. Note that by \Cref{lem:survivetype}, we have $\Theta^*_t(0)= \underline{\theta}_t(\mu_t)$. By the definition of PBE-D, 
\[U^*_t(\underline{\theta}_t(\mu_t))  \geq\inf_{\alpha \in \mathcal{A}(\underline{\theta}_t(\mu_t), 0)} U_t(\underline{\theta}_t(\mu_t), \alpha, 0) \,.\]
Since the continuation game has $\leq T - 1$ periods, by the inductive hypothesis, 
\[\inf_{ \alpha \in \mathcal{A}(\underline{\theta}_t(\mu_t), 0)} U_t(\underline{\theta}_t(\mu_t), \alpha, 0)  = \inf_{\beta \in \mathcal{B}(\underline{\theta}_t(\mu_t), 0)} U_t(\underline{\theta}_t(\mu_t), \beta , 0)  \]
where $\mathcal{B}(\underline{\theta}_t(\mu_t), 0)$ is the collection of seller's strategies that belong to a PBE-$^\star$ of the continuation game that is outcome-equivalent to a PBE-D in $\mathcal{A}(\underline{\theta}_t(\mu_t), 0)$. 

If $T = 2$, then in the continuation game, PBE-D, PBE-$^*$, and PBE exist and coincide. They are specified by the distributions of prices that the seller posts. Let $p^*$ be the highest price the seller posts across the equilibria. Replace the continuation strategy profile given in the original PBE-D by letting the seller post $p^*$ and buyer accept if $\theta_2 \geq p^*$, with the seller's off-path belief assigned to be concentrated on $\underline{\theta}_t(\mu_t)$.

Now suppose $T > 2$. Note that the induction proof of part (ii) of \Cref{thm:multi} can be done for PBE-$^\star$ without referring to PBE-D. This implies that the PBE-$^\star$ corresponding to $\mathcal{B}(\underline{\theta}_t(\mu_t), 0)$ are all outcome-equivalent. The existence of such a PBE-$^\star$ follows from the proof in \Cref{app:existence-mp} (and thus $\mathcal{B}(\underline{\theta}_t(\mu_t), 0)$ and $\mathcal{A}(\underline{\theta}_t(\mu_t), 0)$ are non-empty). Now select any such PBE-$^\star$ of the continuation game. Replace the continuation strategy profile given in the original PBE-D by this PBE-$^\star$, with the seller's off-path belief assigned to be concentrated on $\underline{\theta}_t(\mu_t)$. 

In both cases ($T=2$ or $T>2$), we claim that the resulting strategy profile for the whole game is still a PBE. To see this, note that by construction in both cases the replacement leaves $\underline{\theta}_t(\mu_t)$ a payoff of $\inf_{ \alpha \in \mathcal{A}(\underline{\theta}_t(\mu_t), 0)} U_t(\underline{\theta}_t(\mu_t), \alpha, 0) \leq U^*_t(\underline{\theta}_t(\mu_t))$. So it remains optimal for type $\underline{\theta}_t(\mu_t)$ to play the prescribed strategy of acceptance. By the threshold-rule results (\Cref{lem:h3} and \Cref{lem:threshold-mp}), it is then optimal for all types above $\underline{\theta}_t(\mu_t)$ to continue accepting the offer. Now consider one layer above where the seller is choosing what price to offer in period $t$. Because this replacement only happens at an off-path history and the buyer uses the same threshold rule as before for this period, it is optimal for the seller to follow the originally prescribed strategy as well. It then follows that after this replacement the strategy profile is still a PBE of the whole game, and has the same equilibrium outcome as the original PBE-D. The same procedure also works for any history $\hat{h}_t$ at which all types of the buyer are prescribed to reject. 

Now for each period $t \in \{1, \dots, T\}$, for each $\hat{h}_t$, iteratively apply the above procedure whenever (i) all types of the buyer are prescribed to accept or reject at $\hat{h}_t$  and (ii) the assigned off-path belief is not already concentrated on $\underline{\theta}_t(\mu_t)$ or $\overline{\theta}_t(\mu_t)$.  It is evident that when the process stops, we have found a PBE-$^\star$ that is outcome-equivalent to the given PBE-D. This concludes the inductive step. 
\end{proof}

\bibliographystyle{ecta} 
\bibliography{library}

\end{appendices}

\end{document}